\newtheorem{definition}{Definition}
\newtheorem{proposition}{Proposition}
\newtheorem{theorem}{Theorem}
\newtheorem{corollary}{Corollary}
\tikzset{-|->/.style={decoration={markings, mark=at position .5 with {\arrow{|}}, mark=at position 1 with {\arrow{>}}}, postaction={decorate}}}
\tikzset{->-/.style={decoration={markings, mark=at position .5 with {\arrow{>}}}, postaction={decorate}}}
\newcommand\define[1]{\textbf{#1}}
\newcommand\id{\operatorname{id}} 
\newcommand\op{\mathrm{op}} 
\newcommand\cokl{\operatorname{co-kl}} 
\newcommand\pto{\mathrel{\ooalign{\hfil$\mapstochar\mkern5mu$\hfil\cr$\longrightarrow$\cr}}} 
\newcommand\Set{\mathbf{Set}} 
\newcommand\Cat{\mathbf{Cat}} 
\newcommand\Lens{\mathbf{Lens}} 
\newcommand\Game{\mathbf{Game}} 
\newcommand\V{\mathbb V} 
\newcommand\K{\mathbb K} 
\newcommand\C{\mathbb C} 
\newcommand\St{\mathbb S} 
\newcommand\R{\mathbb R} 
\newcommand\G{\mathcal G} 
\renewcommand\H{\mathcal H} 
\newcommand\I{\mathcal I} 
\newcommand\D{\mathcal D} 
\newcommand\s{\mathfrak s} 
\renewcommand\t{\mathfrak t} 
\renewcommand\u{\mathfrak u} 
\renewcommand\a{\mathfrak a} 
\renewcommand\l{\mathfrak l} 
\renewcommand\r{\mathfrak r} 
\newcommand\B{\mathbf B} 
\title{Morphisms of open games}
\author{Jules Hedges}
\begin{document}

\maketitle

\begin{abstract}
	We define a notion of morphisms between open games, exploiting a surprising connection between lenses in computer science and compositional game theory.
	This extends the more intuitively obvious definition of globular morphisms as mappings between strategy profiles that preserve best responses, and hence in particular preserve Nash equilibria.
	We construct a symmetric monoidal double category in which the horizontal 1-cells are open games, vertical 1-morphisms are lenses, and 2-cells are morphisms of open games.
	States (morphisms out of the monoidal unit) in the vertical category give a flexible solution concept that includes both Nash and subgame perfect equilibria.
	Products in the vertical category give an external choice operator that is reminiscent of products in game semantics, and is useful in practical examples.
	We illustrate the above two features with a simple worked example from microeconomics, the market entry game.
\end{abstract}

\section{Introduction}

Open games provide a foundation to (economic) game theory that is strongly compositional.
In general open games are fragments of games that can be composed either sequentially or in parallel.
Key to this is the step of viewing open open games as the \emph{morphisms} of a symmetric monoidal category.
(This is an instance of the more general research programme of categorical open systems \cite{fong_algebra_open_interconnected_systems}.)

However, there are reasons that one might also wish to view open games as the objects of a category.
Most obviously, we would like to characterise certain open games using universal properties, in order to reason about them in a more abstract way.
In this paper we define a general notion of morphisms between open games, which we call \emph{contravariant lens morphisms}.
This heavily makes use of the factorisation of open games in terms of \emph{polymorphic lenses} introduced in \cite{hedges_coherence_lenses_open_games}.

We prove that open games and contravariant lens morphisms form a \emph{symmetric monoidal pseudo double category} \cite{shulman10}, the expected structure of 2-cells between arbitrary morphisms in a symmetric monoidal category.
Open games and globular morphisms (morphisms between open games of the same type) moreover form a symmetric monoidal bicategory, however we argue by example that the more general double-categorical structure is useful.

There are many possible inequivalent ways to define morphisms between open games, and we argue in favour of contravariant lens morphisms in particular by identifying two attractive properties.
Firstly \emph{states} of open games, that is morphisms out of the monoidal unit open game, provide a fully compositional solution concept that subsumes both Nash and subgame perfect equilibria in a flexible way.
Secondly, categorical products of open games correspond to an \emph{external choice} operator that is highly reminiscent of products in game semantics, and is useful in practical examples.
This paper culminates in a worked example that illustrates both of these aspects, in which we describe the market entry game, a simple but important example of game theory as applied to microeconomics.

An alternative definition of morphisms between open games is given in \cite{ghani_kupke_lambert_forsberg_compositional_treatment_iterated_open_games}, in order to characterise repeated games as final coalgebras of a functor prepending one additional stage.
Contravariant lens morphisms are inferior for this purpose, but on the other hand, the two attractive properties of contravariant lens morphisms identified are not shared by the morphisms of \cite{ghani_kupke_lambert_forsberg_compositional_treatment_iterated_open_games}.
This suggests that there is no single, canonical notion of morphisms between open games, but at least two with different useful properties.
It is expected, however, that all `reasonable' definitions of morphisms between open games will form a symmetric monoidal double category, and will agree on globular morphisms.


\section{Normal-form and extensive-form games}\label{sec:game-theory}

In this section we recall some basic definitions and results of game theory, which can be found in any standard textbook, such as \cite{fudenberg91}.

\begin{definition}
	An $n$-player \define{normal form game} consists of the following data:
	\begin{itemize}
		\item A sequence of sets $X_1, \ldots, X_n$ of \define{choices} for each player
		\item A \define{payoff function}
		\[ k : \prod_{i = 1}^n X_i \to \R^n \]
		giving a real-valued payoff for each player given choices by each player
	\end{itemize}
	A (pure) \define{strategy} for player $i$ in a normal form game is simply a choice $\sigma_i : X_i$, and a (pure) \define{strategy profile} $\sigma : \prod_{i = 1}^n X_i$ is a choice of strategy for each player.
\end{definition}

The adjective `pure' means `deterministic', in contrast with `mixed' or probabilistic strategies.
The equivalence of strategies and choices characterises a normal form game as both deterministic and simultaneous: in the probabilistic setting a strategy is a probability distribution over choices, and in a dynamic (sequential) game a strategy is a function from observations to choices.

Given a tuple $x : \prod_{i = 1}^n X_i$, we write $x_i : X_i$ for the $i$th projection,
\[ x_{-i} : \prod_{\substack{1 \leq j \leq n \\ j \neq i}} X_j \]
for the projection onto all but the $i$th component, and
\[ (x'_i, x_{-i}) : \prod_{j = 1}^n X_j \]
for the modification of $x$ with $i$th component $x'_i : X_i$.
(This notation is slightly imprecise, but is both useful and standard in game theory.)

\begin{definition}
	Given a normal form game $((X_i)_{i = 1}^n, q)$, let $\Sigma = \prod_{i = 1}^n X_i$ be its set of strategy profiles.
	We define the \define{best response relation} $\B \subseteq \Sigma \times \Sigma$ by $(\sigma, \sigma') \in \B$ iff for all players $1 \leq i \leq n$ and all $x_i : X_i$,
	\[ \left( k \left( \sigma'_i, \sigma_{-i} \right) \right)_i \geq \left( k \left( x_i, \sigma_{-i} \right) \right)_i \]
	
	A pure strategy profile is called a (pure) \define{Nash equilibrium} if it is a fixpoint of the best response relation, that is, if $(\sigma, \sigma) \in \B$.
\end{definition}

Equivalently, $\sigma$ is a Nash equilibrium iff for all players $1 \leq i \leq n$ and all \define{unilateral deviations} $x'_i : X_i$,
\[ (k (\sigma))_i \geq \left( k \left( x'_i, \sigma_{-i} \right) \right)_i \]
In words, a Nash equilibrium is a strategy profile in which no player can strictly increase their payoff by unilaterally deviating to another pure strategy.
(`Unilateral' means that the strategy profiles of all other players remain fixed.)


\begin{definition}
	An $n$-player game \define{extensive form game} is a tree where
	\begin{itemize}
		\item Each non-leaf node is labelled by a player, who makes the decision at that node
		\item Each leaf node is labelled by an $n$-tuple of real payoffs
		\item Each edge is labelled by an \define{action}, such that no two outgoing edges of a single node are labelled by the same action
		\item The nodes of the tree are partitioned into \define{information sets}
	\end{itemize}
	
	An information set is a nonempty subset of nodes such that
	\begin{itemize}
		\item Any pair of nodes in the same information set are labelled by the same player
		\item Any pair of nodes in the same information set have the same set of actions labelling their successors
	\end{itemize}
	
	A game of \define{perfect information} is an extensive form game whose information sets are all singletons.
\end{definition}

In section \ref{sec:backward-induction} we will define a simpler subclass of games of perfect information called \define{sequential games}.

\begin{definition}
	A \define{strategy} for a player $i$ in an extensive-form game is a map that takes each information set $x$ owned by player $i$, to a choice of action among successors of nodes in $x$.
	A \define{strategy profile} is a tuple of strategies for each player.
\end{definition}

A strategy profile induces a \define{play}, which is a path from the root node to a leaf node.
Plays are in bijection with leaf nodes, and so each strategy profile determines a payoff for each player.

\begin{definition}
	Each $n$-player extensive form game induces an $n$-player normal form game called its \define{normalisation} or \define{strategic form} as follows.
	The set of choices for player $i$ in the normalisation is defined to be the set of strategies for that player in the original game.
	The payoffs $k$ are determined by the play associated to the strategy profile.
	A \define{Nash equilibrium} of an extensive form game is a strategy profile that is a Nash equilibrium of its normalisation.
\end{definition}

\begin{definition}
	A \define{subgame} of an extensive form game $\G$ is a subtree $\H$ with the property that if $x$ is any node in $\H$, then any node in $\G$ in the same information set as $x$ is also in $\H$.
	The subtree $\H$ inherits the structure of an extensive form game from $\G$.
	
	A \define{subgame perfect equilibrium} of an extensive form game is a strategy profile that restricts to a Nash equilibrium on every subgame.
\end{definition}

(This definition implies that the root node of a subgame must be in a singleton information set.)

We will see an example illustrating these definitions in section \ref{sec:example}.

The remainder of this section considers a simplified special case of perfect information games, taken from \cite{escardo11,escardo12}.

\begin{definition}
	An $n$-player \define{sequential game} is an extensive-form game of perfect information in which at level $i$ of the tree all choices are made by player $i$, and in which any two nodes at the same level have the same set of actions available.
\end{definition}

The second condition implies that the tree of a sequential game is balanced.
An $n$-player sequential game is equivalently defined by sets $X_1, \ldots, X_n$ of actions, and a payoff function $k : \prod_{i = 1}^n X_i \to \R^n$.
Subgames are in bijection with \define{partial plays} $x_1, \ldots, x_{i - 1}$ for $1 \leq i \leq n$.
A strategy for player $i$ is a function
\[ \sigma_i : \prod_{j = 1}^i X_j \to X_i \]
and a strategy profile is a tuple
\[ \sigma : \prod_{i = 1}^n \left( \prod_{j = 1}^i X_j \to X_i \right) \]

\begin{definition}
	Let $(X_i)_{i = 1}^n$ be a sequence of sets for $n \geq 1$.
	Let $1 \leq p \leq q \leq n$.
	For a sequence
	\[ x : \prod_{j = 1}^{q - 1} X_i \]
	and a sequence of functions
	\[ \sigma : \prod_{i = p}^n \left( \prod_{j = 1}^{i - 1} X_j \to X_i \right) \]
	we define a sequence
	\[ v^\sigma_x : \prod_{i = 1}^n X_i \]
	extending $x$, called the \emph{strategic extension} of $x$ by $\sigma$, by the course-of-values recursion
	\[ (v^\sigma_x)_i = \begin{cases}
		x_i &\text{ if } i < q \\
		\sigma_i ((v^\sigma_x)_1, \ldots, (v^\sigma_x)_{i - 1}) &\text{ if } i \geq q
	\end{cases} \]
\end{definition}

With this notation, a strategy profile $\sigma$ of a sequential game is a Nash equilibrium iff
\[ \left( k \left( v^\sigma_{(v^\sigma)_1^{i - 1}, \sigma'_i ((v^\sigma)_1^{i - 1})} \right) \right)_i \geq \left( k \left( v^\sigma_{(v^\sigma)_1^{i - 1}, x_i} \right) \right)_i \]
for all players $1 \leq i \leq n$ and deviations $x_i : X_i$.
It is a subgame-perfect equilibrium iff for all players $1 \leq i \leq n$, all subgames $x_1, \ldots, x_{i - 1}$ and all deviations $x_i : X_i$,
\[ \left( k \left( v^\sigma_{x_1, \ldots, x_{i - 1}} \right) \right)_i \geq \left( k \left( v^\sigma_{x_1, \ldots, x_{i - 1}, x_i} \right) \right)_i \]

\section{The category of lenses}\label{sec:lenses}

In this section we recall and extend ideas from \cite{hedges_coherence_lenses_open_games} on lenses.

\begin{definition}
	Let $X, S, Y, R$ be sets. A \define{lens} $\lambda : (X, S) \to (Y, R)$ consists of a pair of functions $v_\lambda : X \to Y$, $u_\lambda : X \times R \to S$.
\end{definition}

We refer to such a pair $(X, S)$ as a \emph{diset}.
We write $\Phi, \Psi, \Theta$ to refer to disets.
This is not really a formal notion, but it carries a connotation that $X$ should be thought of `covariantly' and $S$ `contravariantly'.
(To say of a category $\mathcal C$ that its objects are disets is really to say that there is an identity-on-objects functor $\Set \times \Set^\op \to \mathcal C$.)

This definition of lens is called \emph{concrete lenses} in \cite{pickering_gibbons_wu_profunctor_optics}, and is essentially \emph{polymorphic lenses} in the absence of a polymorphic typesystem.
The relationship between this and the more familiar \emph{monomorphic lenses} is discussed in \cite{hedges_coherence_lenses_open_games}.

\begin{proposition}
	There is a category $\Lens$ whose objects are disets and whose morphisms are lenses.
	The identity lens $(X, S) \to (X, S)$ consists of the identity function $X \to X$ and the right projection $X \times S \to S$.
	The composition of $\lambda : (X, S) \to (Y, R)$ and $\mu : (Y, R) \to (Z, Q)$ is given by $v_{\mu \circ \lambda} = v_\mu \circ v_\lambda$ and $u_{\mu \circ \lambda} (x, q) = u_\lambda (x, u_\mu (v_\lambda (x), q))$.
\end{proposition}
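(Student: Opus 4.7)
The plan is to check directly that the given data (identities, composition rule) satisfy the three category axioms, doing a separate calculation for the covariant part $v$ and the contravariant part $u$ of each lens, since the stated composition treats them separately.

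First I would verify that the composition is well-typed: for $\lambda : (X,S) \to (Y,R)$ and $\mu : (Y,R) \to (Z,Q)$, the map $v_\mu \circ v_\lambda$ has type $X \to Z$, and in $u_{\mu \circ \lambda}(x,q) = u_\lambda(x, u_\mu(v_\lambda(x), q))$ we feed $v_\lambda(x) : Y$ and $q : Q$ into $u_\mu$ to obtain an element of $R$, then pair it with $x : X$ and feed to $u_\lambda$ to land in $S$, as required. The identity lens on $(X,S)$ is clearly a well-formed lens.

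Next I would check the identity axioms. For $\lambda : (X,S) \to (Y,R)$, precomposing with $\id_{(X,S)}$ gives $v_\lambda \circ \id_X = v_\lambda$ on the covariant side, and on the contravariant side $u_{\id}(x, u_\lambda(\id_X(x), r)) = u_\lambda(x,r)$, since $u_{\id}$ is the right projection. Postcomposition with $\id_{(Y,R)}$ is symmetric: $\id_Y \circ v_\lambda = v_\lambda$, and $u_\lambda(x, u_{\id}(v_\lambda(x), r)) = u_\lambda(x,r)$.

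Associativity is the only part with any content. For $\lambda : (X,S) \to (Y,R)$, $\mu : (Y,R) \to (Z,Q)$, $\nu : (Z,Q) \to (W,P)$, the covariant parts of both $(\nu \circ \mu) \circ \lambda$ and $\nu \circ (\mu \circ \lambda)$ reduce to $v_\nu \circ v_\mu \circ v_\lambda$ by associativity of function composition in $\Set$. On the contravariant side I would expand both sides: one direction yields $u_\lambda(x,\, u_\mu(v_\lambda(x),\, u_\nu(v_\mu(v_\lambda(x)), p)))$, and the other, after substituting the formula for $u_{\mu \circ \lambda}$, reduces to the same expression. The only subtlety, and the main (mild) obstacle, is keeping track of which arguments the $v_\lambda$ and $v_\mu$ terms appear under inside the nested $u$'s; this is the essentially contravariant bookkeeping that a more abstract presentation (as in the optic/fibration formulation alluded to by the remark about $\Set \times \Set^{\op} \to \mathcal C$) would make automatic, but a direct unfolding suffices here.
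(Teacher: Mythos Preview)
Your proposal is correct and is exactly the direct verification that the paper has in mind; the paper's own proof is simply the word ``Routine'', and your unfolding of the identity and associativity axioms for the $v$ and $u$ components is precisely the routine check being omitted.
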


\begin{proof}
	Routine.
\end{proof}

\begin{proposition}\label{prop:view-fibration}
	$v$ defines a fibration $\V : \Lens \to \Set$.
\end{proposition}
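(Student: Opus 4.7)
The plan is to verify directly that $\V$ is a Grothendieck fibration by exhibiting a cartesian lift of every morphism $f : X \to Y$ in $\Set$ at every object $(Y, R)$ of $\Lens$ lying above $Y$, then checking the universal property of cartesianness. Since the fibres are nontrivial (a lens with $v$-component the identity still has data in its $u$-component), the interesting content is in making the right choice of lift.

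First I would construct the candidate lift. Given $f : X \to Y$ and $(Y, R)$, define $\bar f : (X, R) \to (Y, R)$ by $v_{\bar f} = f$ and $u_{\bar f}(x, r) = r$, i.e.\ the right projection. By construction $\V(\bar f) = f$, so $\bar f$ is a lift. The idea is that the contravariant component $R$ should simply be reindexed trivially along $f$; making $u_{\bar f}$ the projection mirrors the behaviour of the identity lens and makes the composition law collapse in a useful way.

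Next I would verify cartesianness. Suppose $\mu : (Z, Q) \to (Y, R)$ is any lens and $g : Z \to X$ satisfies $v_\mu = f \circ g$. I seek a unique $\tilde\mu : (Z, Q) \to (X, R)$ with $v_{\tilde\mu} = g$ and $\bar f \circ \tilde\mu = \mu$. Applying the composition formula from the previous proposition gives $v_{\bar f \circ \tilde\mu} = f \circ g = v_\mu$ automatically, while
\[ u_{\bar f \circ \tilde\mu}(z, r) = u_{\tilde\mu}(z, u_{\bar f}(g(z), r)) = u_{\tilde\mu}(z, r), \]
so the equation $\bar f \circ \tilde\mu = \mu$ forces $u_{\tilde\mu} = u_\mu$, and conversely defining $\tilde\mu$ by $v_{\tilde\mu} = g$, $u_{\tilde\mu} = u_\mu$ produces a lens satisfying both required identities.

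I do not anticipate a substantive obstacle: the proof is mechanical once the correct lift is guessed. The only point requiring any judgement is that minor choice — one might be tempted to make $S$ depend on $f$ in some more elaborate way, but taking $S = R$ with trivial $u$-component is what makes the composition law unwind to the unique solution $u_{\tilde\mu} = u_\mu$ above. Everything else is a routine diagram chase in $\Lens$.
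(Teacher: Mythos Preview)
Your proposal is correct and takes essentially the same approach as the paper: the paper also chooses the lift $\overline f : (X,R) \to (Y,R)$ with $v_{\overline f} = f$ and $u_{\overline f}(x,r) = r$, and factors an arbitrary lens through it by keeping the $u$-component unchanged. If anything your write-up is slightly more thorough, since you explicitly derive uniqueness of the factorisation from the composition formula, whereas the paper's appendix proof only spells out existence.
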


\begin{proof}
	\hyperref[prf:view-fibration]{See appendix.}
\end{proof}

For each set $X$ there is a left-multiplication comonad $(X \times) : \Set \to \Set$.
The co-kleisli category of this comonad has objects sets and morphisms
\[ \hom_{\cokl (X \times)} (R, S) = X \times R \to S \]

Every function $f : X \to Y$ induces an identity-on-objects functor $f^* : \cokl (Y \times) \to \cokl (X \times)$, where the maps
\[ f^* : \hom_{\cokl (Y \times)} (R, S) \to \hom_{\cokl (X \times)} (R, S) \]
are given by $f^* (u) (x, r) = u (f (x), r)$.
Thus we have a pseudofunctor $\cokl (- \times) : \Set^\op \to \Cat$.

\begin{proposition}\label{prop:fibres-of-view}
	$\V^{-1} (X) \cong \cokl (X \times)^\op$ for each set $X$, and for $f : X \to Y$ the reindexing functor $f^* : \V^{-1} (Y) \to \V^{-1} (X)$ is the opposite functor of that given above.
\end{proposition}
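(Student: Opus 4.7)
The plan is to construct an explicit isomorphism of categories $F_X : \cokl(X \times)^\op \to \V^{-1}(X)$ for each set $X$, and then verify that reindexing along $f : X \to Y$ corresponds under this isomorphism to the opposite of the functor $f^* : \cokl(Y \times) \to \cokl(X \times)$ defined above. The intuition is that a morphism in the fibre $\V^{-1}(X)$ is a lens with view $\id_X$, hence determined purely by its update $X \times R \to S$; this is exactly a morphism $R \to S$ in $\cokl(X \times)$, which is a morphism $S \to R$ in its opposite, forcing the $\op$.

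First I would define $F_X$ on objects by $F_X(S) = (X, S)$, clearly a bijection onto the objects of the fibre. On morphisms, a morphism $S \to R$ in $\cokl(X \times)^\op$ is by definition a function $u : X \times R \to S$; send it to the lens $F_X(u) : (X, S) \to (X, R)$ with view $\id_X$ and update $u$. This is clearly a bijection on hom-sets. For functoriality: the identity on $S$ in $\cokl(X \times)$ is the counit $X \times S \to S$, i.e.\ the right projection, which is precisely the update of the identity lens on $(X, S)$. For composition, given $\lambda : (X, S) \to (X, R)$ and $\mu : (X, R) \to (X, Q)$ in the fibre, the composition formula in $\Lens$ simplifies (since both views are identities) to $u_{\mu \circ \lambda}(x, q) = u_\lambda(x, u_\mu(x, q))$, matching the cokleisli composition of $u_\mu : Q \to R$ after and $u_\lambda : R \to S$ in $\cokl(X \times)$; the swap between composition orders is exactly what $\op$ absorbs.

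For the reindexing claim, I would first identify the cartesian lift of $f : X \to Y$ at $(Y, S)$ as $\tilde f_S : (X, S) \to (Y, S)$ with view $f$ and update $(x, s) \mapsto s$; its cartesian property is part of Proposition~\ref{prop:view-fibration} proved in the appendix. Given $\lambda \in \V^{-1}(Y)$ with update $u_\lambda : Y \times R \to S$, unwinding the factorization $\lambda \circ \tilde f_S = \tilde f_R \circ f^*(\lambda)$ yields $f^*(\lambda) \in \V^{-1}(X)$ with update $(x, r) \mapsto u_\lambda(f(x), r)$. Transported through $F_X^{-1}$, this is exactly the $\cokl(X \times)$-morphism $f^*(u_\lambda)$ defined earlier, so up to the opposite the two reindexing functors coincide. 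Pseudofunctoriality of $\cokl(- \times)$ then transfers automatically.

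The main obstacle I anticipate is not mathematical but bookkeeping: since a lens update runs \emph{against} the lens direction, the contravariance on the update side must be tracked consistently when comparing composition orders in the two categories and when matching the fibration reindexing with the opposite of $f^*$. Once this direction reversal is made explicit, the remaining verifications reduce to routine unfolding of definitions.
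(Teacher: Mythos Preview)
Your proposal is correct and follows essentially the same approach as the paper: both identify the fibre with $\cokl(X\times)^\op$ via the bijection $(X,S)\leftrightarrow S$, $\lambda\leftrightarrow u_\lambda$, and both compute the reindexing functor by factoring through the cartesian lift $\tilde f_S=\overline f(Y,S)$ to obtain $u_{f^*(\lambda)}(x,r)=u_\lambda(f(x),r)$. The only cosmetic difference is that you build the isomorphism in the direction $\cokl(X\times)^\op\to\V^{-1}(X)$ while the paper writes down its inverse, and you spell out the functoriality check (identities and composition) that the paper leaves as ``can directly be seen''.
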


\begin{proof}
	\hyperref[prf:fibres-of-view]{See appendix.}
\end{proof}

The simple fibration $s (\Set) \to \Set$ is a fibration that plays a central role in the categorical semantics of simple type theory \cite[section 1.3]{jacobs-categorical-logic-type-theory}.
The category $s (\Set)$ has as objects pairs of sets, and as morphisms $(X, S) \to (Y, R)$ pairs of functions $X \to Y$ and $X \times S \to R$.

\begin{proposition}
	$\V$ is the fibrewise opposite of the simple fibration $s (\Set) \to \Set$.
\end{proposition}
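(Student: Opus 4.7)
The plan is to recognize both $\V$ and the simple fibration as Grothendieck constructions of explicit pseudofunctors $\Set^\op \to \Cat$, and to observe that they differ precisely by pointwise opposite. The fibrewise opposite of a fibration classified by $P : \Set^\op \to \Cat$ is by definition the one classified by $X \mapsto P(X)^\op$ with reindexing the opposite functor, so the task reduces to identifying these two classifying pseudofunctors.

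First I would unpack the simple fibration. A morphism $(X,S) \to (X,R)$ in the fibre over $X$ is by definition a function $X \times S \to R$, and the composition inherited from $s(\Set)$ of $g : X \times S \to R$ with $k : X \times R \to Q$ is $(x,s) \mapsto k(x, g(x,s))$, which is exactly co-Kleisli composition for the comonad $X \times (-)$. So the fibre over $X$ is $\cokl(X \times)$. For the reindexing along $f : X \to Y$, the pair $(f, \pi_2) : (X,S) \to (Y,S)$ is a Cartesian lift, and reindexing a morphism $g : Y \times R \to S$ yields $g \circ (f \times \id_R)$, matching the functor $f^*$ defined in the paragraph preceding Proposition~\ref{prop:fibres-of-view}. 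Hence $s(\Set) \to \Set$ is classified by the pseudofunctor $\cokl(- \times) : \Set^\op \to \Cat$. By Proposition~\ref{prop:fibres-of-view}, $\V$ is classified by the pointwise opposite $X \mapsto \cokl(X \times)^\op$ with opposite reindexing, so $\V$ is the fibrewise opposite of $s(\Set) \to \Set$.

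The main obstacle is almost entirely bookkeeping: verifying that $(f, \pi_2)$ really is a Cartesian lift in $s(\Set)$, and checking that the pseudofunctor coherence isomorphisms on both sides are compatible under pointwise opposite. Both are routine consequences of the definitions of composition in $s(\Set)$ and in $\Lens$. A more hands-on alternative would be to exhibit a direct identity-on-objects isomorphism of categories over $\Set$ between $\Lens$ and the fibrewise opposite of $s(\Set)$, sending a lens $\lambda = (v_\lambda, u_\lambda)$ to the morphism whose base component is $v_\lambda$ and whose fibre component is $u_\lambda : X \times R \to S$ viewed as an arrow $R \to S$ in $\cokl(X \times)$, i.e.\ as an arrow $S \to R$ in the opposite fibre. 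Checking functoriality directly then reduces to the composition formula $u_{\mu \circ \lambda}(x,q) = u_\lambda(x, u_\mu(v_\lambda(x), q))$, which is exactly the composition rule in the Grothendieck construction of the opposite pseudofunctor.
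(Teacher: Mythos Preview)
Your proposal is correct and follows essentially the same approach as the paper: identify the fibres of the simple fibration as $\cokl(X \times)$ with reindexing $f^*$, then invoke Proposition~\ref{prop:fibres-of-view} to see that $\V$ has opposite fibres and opposite reindexing. The paper's proof is terser---it cites Jacobs for the fibre computation and says the rest is ``essentially the same as the above proof''---while you spell out the Grothendieck-construction framing and the Cartesian lift explicitly, but the substance is identical.
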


\begin{proof}
	The fibre of $s (\Set)$ over $X$ is the kleisli category $\cokl (X \times)$ \cite[exercise 1.3.4]{jacobs-categorical-logic-type-theory}.
	Proving this is essentially the same as the above proof.
\end{proof}

\begin{proposition}
	There is an identity-on-objects functor $(-, -) : \Set \times \Set^\op \to \Lens$ defined by $v_{(f, g)} = f$ and $u_{(f, g)} = g \circ \pi_2$.
\end{proposition}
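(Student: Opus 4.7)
The plan is to check the two functoriality axioms by unfolding the definitions carefully, since the assignment on objects (identity) is given. First I would fix notation: a morphism $(X,S) \to (Y,R)$ in $\Set \times \Set^\op$ is a pair $(f,g)$ with $f : X \to Y$ and $g : R \to S$, and its image under $(-,-)$ is the lens with view $f$ and update $(x,r) \mapsto g(r)$.

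For preservation of identities, the identity morphism on $(X,S)$ in $\Set \times \Set^\op$ is $(\id_X, \id_S)$. Applying the proposed functor yields the lens with $v = \id_X$ and $u(x,s) = \id_S(s) = s$, which is exactly the right projection $\pi_2 : X \times S \to S$. By the earlier proposition defining $\Lens$, this is the identity lens on $(X,S)$.

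For preservation of composition, I would take $(f,g) : (X,S) \to (Y,R)$ and $(f',g') : (Y,R) \to (Z,Q)$ and note that composition in $\Set \times \Set^\op$ reverses the second component, so their composite is $(f' \circ f,\, g \circ g')$ where $g \circ g' : Q \to S$. Its image under $(-,-)$ has $v = f' \circ f$ and $u(x,q) = g(g'(q))$. On the other hand, the composite in $\Lens$ of the images has view $v_{(f',g')} \circ v_{(f,g)} = f' \circ f$ and update $(x,q) \mapsto u_{(f,g)}(x, u_{(f',g')}(v_{(f,g)}(x), q)) = u_{(f,g)}(x, g'(q)) = g(g'(q))$. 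Both agree.

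There is no real obstacle here — the only subtle point is keeping track of the direction reversal coming from $\Set^\op$, and observing that because $u_{(f,g)}$ ignores its first argument, the inner term $v_{(f,g)}(x)$ in the composition formula for $\Lens$ is discarded, which is precisely what makes composition on the second coordinate boil down to ordinary composition of functions $Q \to R \to S$.
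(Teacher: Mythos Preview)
Your proof is correct and is exactly the routine verification the paper intends: the paper's own proof is simply the word ``Routine.'' You have carried out that routine check explicitly and accurately, including the observation that the first argument of $u_{(f,g)}$ is discarded so composition in the second component reduces to ordinary function composition.
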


\begin{proof}
	Routine.
\end{proof}

Note that this functor is `almost' faithful, but there are counterexamples involving the empty set.
For example, take the unique function $f : 0 \to 1$ and the two functions $g_1 \neq g_2 : 1 \to 2 = \{ L, R \}$.
Then there is an equality of lenses $(f, g_1) = (f, g_2) : (0, 2) \to I$.
In \cite{hedges_coherence_lenses_open_games} the category of nonempty sets is identified with a subcategory of the category of lenses based on nonempty sets.

Next, we define a symmetric monoidal structure of $\Lens$.

\begin{definition}
	We define the monoidal product of disets to be $(X, S) \otimes (X', S') = (X \times X', S \times S')$, with monoidal unit $I = (1, 1)$.
	Given lenses $\lambda : \Phi \to \Psi$ and $\lambda' : \Phi' \to \Psi'$, we define a lens $\lambda \otimes \lambda' : \Phi \otimes \Phi' \to \Psi \otimes \Psi'$ by
	\begin{align*}
		v_{\lambda \otimes \lambda'} (x, x') &= (v_\lambda (x), v_{\lambda'} (x')) \\
		u_{\lambda \otimes \lambda'} ((x, x'), (r, r')) &= (u_\lambda (x, r), u_{\lambda'} (x', r'))
	\end{align*}
\end{definition}

We define the structure morphisms of $\Lens$ to be the image under $(-, -)$ of the corresponding structure morphisms of the cartesian monoidal category $\Set \times \Set^\op$.
That is to say,
\begin{align*}
	a_{(X, S), (X', S'), (X'', S'')} &= (a_{X, X', X''}, a_{S, S', S''}^{-1}) \\
	l_{(X, S)} &= (l_X, l_S^{-1}) \\
	r_{(X, S)} &= (r_X, r_S^{-1}) \\
	s_{(X, S), (X', S')} &= (s_{X, X'}, s_{S, S'}^{-1})
\end{align*}

\begin{proposition}
	$\Lens$ is a symmetric monoidal category.
\end{proposition}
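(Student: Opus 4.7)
The plan is to transport the symmetric monoidal structure of $\Set \times \Set^\op$ along the identity-on-objects functor $(-, -) : \Set \times \Set^\op \to \Lens$ from the previous proposition. Because every structure isomorphism of $\Lens$ is by definition the image of the corresponding cartesian-monoidal isomorphism, the work decomposes into three tasks: (a) verifying that $\otimes$ is a bifunctor on $\Lens$, (b) checking naturality of each structure isomorphism against \emph{arbitrary} lenses, not just those in the image of $(-, -)$, and (c) deducing the pentagon, triangle and hexagon axioms.

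For (a) I would unfold the definition of $\lambda \otimes \lambda'$: preservation of identities is immediate from $v_{\id_\Phi} = \id_X$ and $u_{\id_\Phi} = \pi_2$, while the interchange law $(\mu \otimes \mu') \circ (\lambda \otimes \lambda') = (\mu \circ \lambda) \otimes (\mu' \circ \lambda')$ reduces on views to bifunctoriality of the cartesian product in $\Set$, and on updates to applying the lens composition formula independently in each factor. Task (c) is then automatic: each coherence diagram is an equation between composites of structure isomorphisms that already lies entirely in the image of $(-, -)$, so the corresponding (cartesian) coherence in $\Set \times \Set^\op$ transports to $\Lens$ by functoriality of $(-, -)$.

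The main obstacle is (b). For lenses $\lambda : \Phi \to \Psi$ and $\lambda' : \Phi' \to \Psi'$, the naturality square for the symmetry requires $s_{\Psi, \Psi'} \circ (\lambda \otimes \lambda') = (\lambda' \otimes \lambda) \circ s_{\Phi, \Phi'}$. The view components are immediate from naturality of the swap in $\Set$, but the update components are delicate, because the lens composition formula threads a `reverse' map of one factor through a `forward' map of another. Expanding both sides via the composition formula, one verifies that the backward isomorphism $s^{-1}_{R, R'}$ appearing in the update of $s_{\Psi, \Psi'}$ cancels the forward swap $s_{X, X'}$ appearing in the view of $s_{\Phi, \Phi'}$, so that both composites collapse to the common expression $((x, x'), (r', r)) \mapsto (u_\lambda(x, r), u_{\lambda'}(x', r'))$. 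The analogous calculations for the associator and unitors follow the same pattern; in each case, the fact that the structure isomorphisms of $\Lens$ have their backward components given by the \emph{inverse} of the corresponding $\Set$-morphism is precisely what makes the update halves of the naturality squares commute.
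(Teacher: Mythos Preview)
Your proof is correct, but it follows a different route from the paper's. You argue directly: bifunctoriality of $\otimes$, naturality of the structure isomorphisms against arbitrary lenses, and then the coherence axioms for free by transporting them along the identity-on-objects functor $(-,-)$ from $\Set \times \Set^\op$. The paper instead avoids all hands-on verification by observing that the fibres $\V^{-1}(X) \cong \cokl(X \times)^\op$ carry finite coproducts preserved by the reindexing functors, and then invokes the monoidal Grothendieck construction of Shulman to obtain the symmetric monoidal structure on the total category $\Lens$ in one stroke.

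What each approach buys: yours is elementary and self-contained, requiring no external machinery, and the explicit naturality check for the symmetry (which you correctly identify as the only non-obvious case) makes the mechanism transparent. The paper's approach is more conceptual --- it exhibits the monoidal structure as an instance of a general construction, which also explains \emph{why} the structure morphisms must take the form $(f, f^{-1})$ and ties the monoidal structure to the fibration $\V$ that is used elsewhere in the paper. On the other hand, the paper's proof depends on a fairly heavy cited result, whereas yours could be completed in full detail with only a page of routine calculation. One small remark: your phrasing that ``$s^{-1}_{R,R'}$ cancels the forward swap $s_{X,X'}$'' is a little loose --- in the actual calculation it is the two separate backward swaps $s^{-1}_{R,R'}$ and $s^{-1}_{S,S'}$ that do the work on the update side, while $s_{X,X'}$ only enters via the view component --- but the resulting common expression you state is correct.
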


\begin{proof}
	Instead of proving the Mac Lane axioms directly, we show that this monoidal structure results from a simpler symmetric monoidal structure on the fibres $\V^{-1} (X)$, by the Grothendieck construction for symmetric monoidal categories \cite[theorem 12.7]{shulman_framed_bicategories_monoidal_fibrations}.

	$\cokl (X)$ has finite products, given by cartesian products of sets.
	Consequently, cartesian products of sets give coproducts in $\cokl (X \times)^\op$.
	The reindexing functors $(f^*)^\op : \cokl (Y \times)^\op \to \cokl (X \times)^\op$ are identity-on-objects, and so preserve finite coproducts.
	This means in particular that the fibres $\V^{-1} (X)$ are symmetric monoidal and the reindexing functors are strong monoidal, and the base category $\Set$ is cartesian monoidal, and so the Grothendieck construction can be applied.
\end{proof}

\section{Open games and morphisms}\label{sec:morphisms}

\begin{proposition}
	There is a functor $\K : \Lens^\op \to \Set$, called the \define{continuation functor}, defined on disets by $\K (X, S) = X \to S$, and on lenses $\lambda : (X, S) \to (Y, R)$ by $\K (\lambda) (k) (x) = u_\lambda (x, k (v_\lambda (x)))$.
\end{proposition}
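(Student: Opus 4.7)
The plan is to verify the two functor axioms directly from the definitions, since $\K$ acts on a lens $\lambda : (X,S) \to (Y,R)$ by a simple formula in terms of its components $v_\lambda$ and $u_\lambda$. Because $\Lens$ is the opposite appearing in the source, I need to check $\K(\id_{(X,S)}) = \id_{\K(X,S)}$ and $\K(\mu \circ \lambda) = \K(\lambda) \circ \K(\mu)$ for composable lenses $\lambda : (X,S) \to (Y,R)$ and $\mu : (Y,R) \to (Z,Q)$.

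For the identity, I would unfold: the identity lens on $(X,S)$ has $v_{\id} = \id_X$ and $u_{\id}(x,s) = s$ (the right projection), so for any $k : X \to S$ and $x : X$,
\[ \K(\id)(k)(x) = u_{\id}(x, k(\id_X(x))) = k(x), \]
giving $\K(\id) = \id_{X \to S}$.

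For composition, the main check is a direct substitution using the composition formula from the previous proposition: $v_{\mu \circ \lambda} = v_\mu \circ v_\lambda$ and $u_{\mu \circ \lambda}(x,q) = u_\lambda(x, u_\mu(v_\lambda(x), q))$. Then for $k : Z \to Q$ and $x : X$,
\[ \K(\mu \circ \lambda)(k)(x) = u_\lambda\bigl(x, u_\mu(v_\lambda(x), k(v_\mu(v_\lambda(x))))\bigr), \]
while expanding $(\K(\lambda) \circ \K(\mu))(k)(x) = \K(\lambda)(\K(\mu)(k))(x)$ gives the same expression. Both sides coincide pointwise, so $\K$ is contravariant functorial.

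There is no real obstacle here: the formula for $\K(\lambda)$ is essentially the universal property of the continuation, and the definition of lens composition is designed precisely so that the \emph{update} component of a composite propagates the continuation outward through the backward pass. The only thing to be a little careful about is the contravariance — one should write out the composition in $\Lens^\op$ as $\lambda \circ \mu$ in $\Lens$ at the right moment, so that the pullback of continuations goes in the expected direction. Everything else is a short calculation.
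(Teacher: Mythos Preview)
Your proposal is correct and is precisely the routine verification the paper has in mind; the paper's own proof is simply the word ``Routine.'' Your pointwise checks of the identity and contravariant composition axioms are exactly what is required, and nothing further is needed.
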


\begin{proof}
	Routine.
\end{proof}

\begin{definition}
	The \define{context functor} $\C : \Lens \times \Lens^\op \to \Set$ is defined by $\C (\Phi, \Psi) = \V (\Phi) \times \K (\Psi)$.
\end{definition}

Although we could see $\C$ as a profunctor $\Lens \to \Lens$, this will not be a helpful point of view.

\begin{proposition}
	$\hom_\Lens (I, (X, S)) \cong X$ defines a representation $\V \cong \hom_\Lens (I, -)$, and $\hom_\Lens ((X, S), I) \cong X \to S$ defines a representation $\K \cong \hom_\Lens (-, I)$.
\end{proposition}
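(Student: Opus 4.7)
The plan is to exhibit explicit bijections on hom-sets and then check that they are natural in the free variable, yielding the two representations.

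For the first representation, I would compute $\hom_\Lens(I, (X,S))$ directly. A lens $\lambda : (1,1) \to (X,S)$ consists of $v_\lambda : 1 \to X$ and $u_\lambda : 1 \times S \to 1$; the second component is forced, so $\lambda$ is determined by the element $v_\lambda(\ast) \in X$. This gives the bijection $\hom_\Lens(I, (X,S)) \cong X = \V(X,S)$. For naturality in $(X,S)$, take $\mu : (X,S) \to (Y,R)$ and $\lambda \in \hom_\Lens(I, (X,S))$; then $v_{\mu \circ \lambda}(\ast) = v_\mu(v_\lambda(\ast))$, which is exactly the action $\V(\mu) = v_\mu$ applied to the element corresponding to $\lambda$. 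Hence the family of bijections commutes with the functorial actions, giving $\V \cong \hom_\Lens(I, -)$.

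For the second representation, a lens $\lambda : (X,S) \to (1,1)$ has $v_\lambda : X \to 1$ forced, and $u_\lambda : X \times 1 \to S$, which corresponds to a function $X \to S$; this gives $\hom_\Lens((X,S), I) \cong X \to S = \K(X,S)$. For naturality (now contravariant in $(X,S)$), take $\lambda : (X,S) \to (Y,R)$ and $\mu : (Y,R) \to I$ corresponding to some $k : Y \to R$. The composite $\mu \circ \lambda$ has view component into $1$, and update
\[ u_{\mu \circ \lambda}(x, \ast) = u_\lambda(x, u_\mu(v_\lambda(x), \ast)) = u_\lambda(x, k(v_\lambda(x))), \]
which is precisely the definition of $\K(\lambda)(k)(x)$. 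Thus the bijections intertwine pre-composition by $\lambda$ with $\K(\lambda)$, giving $\K \cong \hom_\Lens(-, I)$.

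Both steps are essentially mechanical unpacking of the definitions of composition in $\Lens$ and of $\V$ and $\K$; no step presents real difficulty. The only mild subtlety is being careful that the naturality squares really do commute on the nose rather than up to some coherence, but since $1$ is a singleton everywhere it appears, there is no ambiguity. The arguments for the two representations are strictly dual in form, the first using only the $v$ component of a lens and the second using only the $u$ component, which reflects the fact that $I$ is terminal in each direction of the diset structure separately.
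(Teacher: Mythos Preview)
Your proof is correct and is precisely the routine verification the paper leaves implicit; the paper's own proof consists of the single word ``Routine.'' Your unpacking of the bijections and naturality checks is exactly what that word stands for.
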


\begin{proof}
	Routine.
\end{proof}

We summarise this by the slogan that in the category of lenses, \emph{states} (morphisms from $I$) are \emph{points}, and \emph{effects} (morphisms to $I$) are \emph{continuations}.

For convenience, we will immediately redefine $\V$ and $\K$ to be equal to these $\hom$-functors, so that we can write their action on lenses in terms of lens composition.

\begin{definition}
	Let $\Phi, \Psi$ be disets.
	An \define{open game} $\G : \Phi \pto \Psi$ consists of the following data:
	\begin{itemize}
		\item A set $\Sigma (\G)$ of \define{strategy profiles}
		\item For every $\sigma : \Sigma (\G)$, a lens $\G (\sigma) : \Phi \to \Psi$
		\item For every $c : \C (\Phi, \Psi)$, a \define{best response relation} $\B (c) \subseteq \Sigma (\G) \times \Sigma (\G)$
	\end{itemize}
\end{definition}

We call a pair $c = (h, k) : \C (\Phi, \Psi)$ a \define{context} for $\G$, where $h : \V (\Phi)$ is the \define{history} and $k : \K (\Psi)$ the \define{continuation}.
We also write $\s (\G) = \Phi$ and $\t (\G) = \Psi$ for the \define{source} and \define{target} of $\G$.

The equivalence between this definition and the more concrete definition in \cite[section 2.1.4]{hedges_towards_compositional_game_theory} (over the category of sets) is easy to see.
If $\Phi = (X, S)$ and $\Psi = (Y, R)$ then the family of lenses $\G (\sigma)$ is the same as the play and coplay functions $\Sigma (\G) \to (X \to Y)$, $\Sigma (\G) \to (X \times R \to S)$.
By the isomorphisms $\V (\Phi) \cong X$ and $\K (\Psi) \cong Y \to R$, the best response relation can equivalently be written as a function $X \times (Y \to R) \to (\Sigma (\G) \to \mathcal P (\Sigma (\G)))$, where $\mathcal P$ is powerset.

We note that given a pair of games $\G, \G' : \Phi \pto \Psi$ of the same type, there is an obvious way to define morphisms between them.

\begin{definition}
	Let $\G, \G' : \Phi \pto \Psi$ be open games.
	A \define{globular morphism} $\alpha : \G \to \G'$ is a function $\Sigma (\alpha) : \Sigma (\G) \to \Sigma (\G')$ such that
	\begin{itemize}
		\item $\G (\sigma) = \G (\Sigma (\alpha) (\sigma))$ for all $\sigma : \Sigma (\G)$
		\item If $(\sigma, \sigma') \in \B_\G (c)$ then $(\Sigma (\alpha) (\sigma), \Sigma (\alpha) (\sigma')) \in \B_{\G'} (c)$
	\end{itemize}
\end{definition}

It is expected that all reasonable definitions of general morphisms between open games will agree on the globular morphisms.
However it is demonstrated by example, both in this paper and in \cite{ghani_kupke_lambert_forsberg_compositional_treatment_iterated_open_games}, that the more general morphisms are necessary.
The morphisms we define in this paper are very different to those in \cite{ghani_kupke_lambert_forsberg_compositional_treatment_iterated_open_games}, but indeed agree on the globular morphisms.

The following definition is the key definition of this paper.

\begin{definition}
	Let $\G : \Phi \pto \Psi$ and $\G : \Phi' \pto \Psi'$ be open games.
	A \define{contravariant lens morphism} $\alpha : \G \to \G'$ consists of the following data:
	\begin{itemize}
		\item Lenses $\s (\alpha) : \Phi' \to \Phi$ and $\t (\alpha) : \Psi' \to \Psi$
		\item A function $\Sigma (\alpha) : \Sigma (\G) \to \Sigma (\G')$
	\end{itemize}
	satisfying the following two axioms:
	\begin{itemize}
		\item For all $\sigma : \Sigma (\G)$, the following diagram in $\Lens$ commutes:
		\begin{center} \begin{tikzpicture}[node distance=3cm, auto]
			\node (X) {$\Phi$}; \node (Y) [right of=X] {$\Psi$}; \node (X') [below of=X] {$\Phi'$}; \node (Y') [below of=Y] {$\Psi'$};
			\draw [->] (X) to node {$\G (\sigma)$} (Y) ; \draw [->] (X') to node {$\G' (\Sigma (\alpha) (\sigma))$} (Y');
			\draw [->] (X') to node [right] {$\s (\alpha)$} (X); \draw [->] (Y') to node [right] {$\t (\alpha)$} (Y);
		\end{tikzpicture} \end{center}
		\item For all $(h, k) : \C (\Phi', \Psi)$ and all $\sigma, \sigma' : \Sigma (\G)$, if
		\[ (\sigma, \sigma') \in \B_\G (\s (\alpha) \circ h, k) \]
		then
		\[ (\Sigma (\alpha) (\sigma), \Sigma (\alpha) (\sigma')) \in \B_{\G'} (h, k \circ \t (\alpha)) \]
	\end{itemize}
\end{definition}

We call a pair $(h, k) : \C (\Phi', \Psi)$ a \define{context} for $\alpha$.

We represent a contravariant lens morphism $\alpha : \G \to \G'$ as a square
\begin{center} \begin{tikzpicture}[auto]
	\node (X) at (0, 0) {$\Phi$}; \node (Y) at (3, 0) {$\Psi$}; \node (X') at (0, -3) {$\Phi'$}; \node (Y') at (3, -3) {$\Psi'$};
	\draw [-|->] (X) to node [above] {$\G$} node [below] {$\Sigma (\G)$} (Y);
	\draw [-|->] (X') to node [above] {$\G'$} node [below] {$\Sigma (\G')$} (Y');
	\draw [->] (X') to node [right] {$\s (\alpha)$} (X); \draw [->] (Y') to node [right] {$\t (\alpha)$} (Y);
	\draw [->] (1.5, -.5) to node {$\Sigma (\alpha)$} (1.5, -2.5);
\end{tikzpicture} \end{center}
anticipating the double category structure.

Having done the work of representing open games in terms of lenses, this definition is \emph{almost} automatic, but with one crucial twist: the lenses $\s (\alpha)$ and $\t (\alpha)$ go in the opposite direction to $\alpha$.
If they were covariant then the resulting definition would be similar to that of \cite{ghani_kupke_lambert_forsberg_compositional_treatment_iterated_open_games} and have many features in common, including the elegant representation of repeated games.
However, we will demonstrate in this paper that the alternative definition has several attractive features.
We give the definition the more specific name \emph{contravariant lens morphism} to distinguish it from alternatives, but since this definition is the subject of this paper, we will simply refer to it as a \emph{morphism} from now.

%

\begin{proposition}
	Open games and morphisms form a category $\Game_v$, with identities and composition lifted from $\Set$ and $\Lens$.
\end{proposition}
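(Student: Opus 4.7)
The plan is to construct identities and composition explicitly, then verify that (i) each is a well-defined morphism (the commuting square and best-response axioms hold), and (ii) the category axioms reduce to the corresponding axioms in $\Set$ and $\Lens$.

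For the identity on $\G : \Phi \pto \Psi$ I would take $\s(\id_\G) = \id_\Phi$, $\t(\id_\G) = \id_\Psi$ in $\Lens$, and $\Sigma(\id_\G) = \id_{\Sigma(\G)}$ in $\Set$. The commuting square is trivial, and the best-response clause reads ``if $(\sigma, \sigma') \in \B_\G(h, k)$ then $(\sigma, \sigma') \in \B_\G(h, k)$''. For the composite of $\alpha : \G \to \G'$ and $\beta : \G' \to \G''$, noting that $\s$ and $\t$ are contravariant while $\Sigma$ is covariant, I would set $\s(\beta \circ \alpha) = \s(\alpha) \circ \s(\beta)$, $\t(\beta \circ \alpha) = \t(\alpha) \circ \t(\beta)$, and $\Sigma(\beta \circ \alpha) = \Sigma(\beta) \circ \Sigma(\alpha)$.

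Next I would paste the two commuting squares: starting from $\G(\sigma) \circ \s(\alpha) = \t(\alpha) \circ \G'(\Sigma(\alpha)(\sigma))$ and $\G'(\Sigma(\alpha)(\sigma)) \circ \s(\beta) = \t(\beta) \circ \G''(\Sigma(\beta)(\Sigma(\alpha)(\sigma)))$, postcomposing the first with $\s(\beta)$ on the right and substituting yields the required identity for $\beta \circ \alpha$. The main subtlety, and the step I expect to take the most care, is the best-response axiom for composites: given a context $(h, k) : \C(\Phi'', \Psi)$ and an assumption $(\sigma, \sigma') \in \B_\G(\s(\alpha) \circ \s(\beta) \circ h, k)$, I would first view $(\s(\beta) \circ h, k)$ as a context in $\C(\Phi', \Psi)$ and apply $\alpha$'s axiom to obtain $(\Sigma(\alpha)(\sigma), \Sigma(\alpha)(\sigma')) \in \B_{\G'}(\s(\beta) \circ h, k \circ \t(\alpha))$, then reinterpret $(h, k \circ \t(\alpha))$ as a context in $\C(\Phi'', \Psi')$ and apply $\beta$'s axiom to obtain the conclusion in $\B_{\G''}(h, k \circ \t(\alpha) \circ \t(\beta))$, which matches the composed source and target.

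Finally, the unit and associativity laws follow componentwise: on $\Sigma$ they are the unit and associativity of function composition in $\Set$, and on $\s$, $\t$ they are the unit and associativity of lens composition established in the previous section. Since the axioms defining a morphism are equations and implications (not extra structure), no coherence is needed beyond showing each of the three coordinates respects the category laws in its own ambient category.
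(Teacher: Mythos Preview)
Your proposal is correct and is exactly the routine verification the paper has in mind; the paper's own proof is simply the word ``Routine.'' Your explicit treatment of the contravariance of $\s,\t$ versus the covariance of $\Sigma$, the pasting of commuting squares, and the two-step application of the best-response axiom is precisely how one unwinds that word.
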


\begin{proof}
	Routine.
\end{proof}

The symbol $\Game_v$ is mnemonic for \emph{vertical}, hinting that this will be the vertical category of our double category.
(Note that we follow the orientation convention of \cite{shulman10}.)

\begin{proposition}
	Let $\G, \G' : \Phi \pto \Psi$ be open games.
	Then globular morphisms $\G \to \G'$ as previously defined are equivalent to morphisms $\alpha : \G \to \G'$ with $\s (\alpha) = \id_\Phi$ and $\t (\alpha) = \id_\Psi$.
\end{proposition}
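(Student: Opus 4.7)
The plan is to prove the two equivalences by unfolding the definition of a morphism $\alpha$ in the special case $\s(\alpha) = \id_\Phi$, $\t(\alpha) = \id_\Psi$, and checking that each of the two axioms collapses to the corresponding clause in the definition of a globular morphism. Since both notions share the same underlying datum, namely a function $\Sigma(\alpha) : \Sigma(\G) \to \Sigma(\G')$, the equivalence amounts to showing that the two axiom systems coincide termwise; there is no nontrivial correspondence to construct.

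For the first axiom, I would observe that the commuting square reduces, after substituting $\s(\alpha) = \id_\Phi$ and $\t(\alpha) = \id_\Psi$, to the equation $\id_\Psi \circ \G(\sigma) = \G'(\Sigma(\alpha)(\sigma)) \circ \id_\Phi$ in $\Lens$. Using that identities are two-sided units for composition in $\Lens$ (proved in the earlier proposition constructing $\Lens$), this is exactly $\G(\sigma) = \G'(\Sigma(\alpha)(\sigma))$, the first globular axiom. No calculation with the concrete formulas for $v$ and $u$ is needed.

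For the second axiom, the set of contexts $\C(\Phi', \Psi) = \C(\Phi, \Psi)$ agrees on the nose because $\Phi' = \Phi$. The hypothesis $(\sigma, \sigma') \in \B_\G(\s(\alpha) \circ h, k)$ becomes $(\sigma, \sigma') \in \B_\G(\id_\Phi \circ h, k) = \B_\G(h, k)$, using that $\V \cong \hom_\Lens(I, -)$ so that $\s(\alpha) \circ h$ is literal lens composition. Dually $k \circ \t(\alpha) = k \circ \id_\Psi = k$ via $\K \cong \hom_\Lens(-, I)$. The implication therefore becomes: if $(\sigma, \sigma') \in \B_\G(h, k)$ then $(\Sigma(\alpha)(\sigma), \Sigma(\alpha)(\sigma')) \in \B_{\G'}(h, k)$, which is the second globular axiom.

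The only subtle point, and really the only thing worth writing down, is to make sure that the two different uses of ``composition'' on $h$ and $k$ are coherent with the shift from the earlier proposition's first formulation of $\V, \K$ as functors to their reformulation as $\hom$-functors. Once that identification is fixed, both directions of the equivalence are literally the same data satisfying literally the same two conditions, so the proof concludes by saying that the passage from a globular morphism to a morphism with identity source/target legs, and its inverse, is the identity on the underlying function of strategy profiles.
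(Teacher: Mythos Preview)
Your proposal is correct and is exactly the unfolding that the paper's one-word proof (``Trivial'') gestures at: substitute identities for $\s(\alpha)$ and $\t(\alpha)$, observe that the commuting square collapses to $\G(\sigma) = \G'(\Sigma(\alpha)(\sigma))$, and that the best-response axiom collapses to the globular condition because $\id_\Phi \circ h = h$ and $k \circ \id_\Psi = k$. One cosmetic remark: the square in the paper has $\s(\alpha)$ and $\t(\alpha)$ pointing upward, so the literal equation is $\G(\sigma)\circ\id_\Phi = \id_\Psi\circ\G'(\Sigma(\alpha)(\sigma))$ rather than the version you wrote with the identities swapped, but of course this makes no difference.
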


\begin{proof}
	Trivial.
\end{proof}

\begin{proposition}\label{prop:multiple-fibration}
	~\newline \vspace{-.5cm} 
	\begin{itemize}
		\item $\s : \Game \to \Lens^\op$ is an opfibration whose opcartesian liftings are $\t$-vertical and $\Sigma$-vertical
		\item $\t : \Game \to \Lens^\op$ is a fibration whose cartesian liftings are $\s$-vertical and $\Sigma$-vertical
		\item $\Sigma : \Game \to \Set$ is a fibration whose cartesian liftings are $\s$-vertical and $\t$-vertical
	\end{itemize}
\end{proposition}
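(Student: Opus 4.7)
The plan is to construct explicit (op)cartesian cleavages for each of the three functors and verify their universal properties separately. The underlying observation is that $\s$ and $\t$ behave dually via pre- and post-composition of lenses with $\G(\sigma)$, while $\Sigma$ acts by reindexing the strategy set along a function; in each case the best-response relation is adjusted in the unique way compatible with the morphism axioms.

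For the opcartesian lift of a lens $f : \Phi' \to \Phi$ at $\G : \Phi \pto \Psi$, I would define $f_!\G : \Phi' \pto \Psi$ by $\Sigma(f_!\G) = \Sigma(\G)$, $(f_!\G)(\sigma) = \G(\sigma) \circ f$ in $\Lens$, and $\B_{f_!\G}(h,k) = \B_\G(f \circ h, k)$, using $\V \cong \hom_\Lens(I,-)$ to pre-compose the history by $f$. The lift itself is the morphism $\alpha : \G \to f_!\G$ with $\s(\alpha) = f$, $\t(\alpha) = \id_\Psi$, $\Sigma(\alpha) = \id$; both morphism axioms hold tautologically. Dually, for the cartesian lift of $g : \Psi' \to \Psi$ at $\G : \Phi \pto \Psi'$, define $g^*\G : \Phi \pto \Psi$ by $\Sigma(g^*\G) = \Sigma(\G)$, $(g^*\G)(\sigma) = g \circ \G(\sigma)$, and $\B_{g^*\G}(h,k) = \B_\G(h, k \circ g)$, using $\K \cong \hom_\Lens(-,I)$ to post-compose the continuation. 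For the $\Sigma$-fibration, the cartesian lift of $f : S \to \Sigma(\G)$ is $f^*\G : \s(\G) \pto \t(\G)$ with $\Sigma(f^*\G) = S$, $(f^*\G)(\sigma) = \G(f(\sigma))$, and $\B_{f^*\G}(c) = (f \times f)^{-1}(\B_\G(c))$, and the lift is the morphism with $\s, \t$ both identity and $\Sigma$-component $f$.

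To verify the universal property in, say, the opcartesian case, I would take a second morphism $\beta : \G \to \G''$ with a factorisation $\s(\beta) = f \circ v$ in $\Lens$ for some lens $v : \s(\G'') \to \Phi'$, and define the mediating morphism $\gamma : f_!\G \to \G''$ with $\s(\gamma) = v$, $\t(\gamma) = \t(\beta)$, $\Sigma(\gamma) = \Sigma(\beta)$. The commuting-square axiom for $\gamma$ follows by rewriting $(f_!\G)(\sigma) \circ v = \G(\sigma) \circ (f \circ v) = \G(\sigma) \circ \s(\beta)$ and invoking the square for $\beta$; the best-response axiom follows because $\B_{f_!\G}(v \circ h, k) = \B_\G(\s(\beta) \circ h, k)$ by construction. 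Uniqueness of $\gamma$ is immediate from the data $\s(\gamma), \t(\gamma), \Sigma(\gamma)$ being forced by the factorisation $\gamma \circ \alpha = \beta$. The $\t$-cartesian and $\Sigma$-cartesian cases are entirely analogous, with the continuation (resp.\ strategy function) being the piece that forces the unique factorisation.

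The only real obstacle is bookkeeping: one must keep straight the direction reversal in $\Lens^\op$ (so that pre-composition by a lens in $\Lens$ corresponds to the opfibration's lift direction) and check that each universal property is witnessed by a uniquely determined triple of data in the three components $\s, \t, \Sigma$. No nontrivial constructions on lenses or on best-response relations are needed beyond the pre- and post-composition already used to define the cleavages, so the proof is essentially routine once the cleavages are written down.
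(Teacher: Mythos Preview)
Your proposal is correct and follows essentially the same approach as the paper: explicit cleavages via pre-/post-composition of lenses and reindexing of strategy sets, with the mediating morphism in the universal property forced componentwise. You are in fact more careful than the paper in one place---the paper writes $\B_{\lambda_!(\G)}(c)=\B_\G(c)$, which only typechecks under the identification $\B_{\lambda_!(\G)}(h,k)=\B_\G(\lambda\circ h,k)$ that you spell out---and you sketch all three cases whereas the paper treats only the $\s$-opfibration and declares the others similar.
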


\begin{proof}
	\hyperref[prf:multiple-fibration]{See appendix.}
\end{proof}

\begin{proposition}
	There is a fibred forgetful functor from the fibration $\Sigma : \Game \to \Set$ to the family fibration $\mathbf{Fam} (\Lens) \to \Set$, that forgets best response.
\end{proposition}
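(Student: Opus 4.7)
My plan is to construct an explicit forgetful functor $U : \Game \to \mathbf{Fam}(\Lens)$, check that it commutes strictly with the projections to $\Set$, and then verify that it sends $\Sigma$-cartesian arrows to cartesian arrows of the family fibration. A preliminary step is to pin down what ``family of lenses'' means: since an open game carries, for each strategy, a \emph{morphism} of $\Lens$ rather than an object, I read $\mathbf{Fam}(\Lens)$ as families valued in $\Lens^\to$, the arrow category.

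On objects I would send $\G : \Phi \pto \Psi$ to the pair $(\Sigma(\G), \G(-))$, where $\G(-)$ indexes the lenses $\G(\sigma) : \Phi \to \Psi$ by $\Sigma(\G)$. On a morphism $\alpha : \G \to \G'$ I would take the function $\Sigma(\alpha) : \Sigma(\G) \to \Sigma(\G')$ together with, for each $\sigma \in \Sigma(\G)$, the commutative square in $\Lens$ supplied by the first axiom of a contravariant lens morphism; this square is precisely a morphism $\G(\sigma) \to \G'(\Sigma(\alpha)(\sigma))$ in $\Lens^\to$, so the whole assembles into a bona fide morphism of $\mathbf{Fam}(\Lens)$. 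Functoriality then follows from the functoriality of $\s$, $\t$, $\Sigma$ and the pasting of squares; crucially, the second (best response) axiom contributes no data to the image, and is exactly what the functor forgets.

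Commutativity with the projections to $\Set$ is immediate from the construction, since both $\Sigma$ and $\pi \circ U$ return $\Sigma(\G)$ on objects and $\Sigma(\alpha)$ on morphisms, where $\pi : \mathbf{Fam}(\Lens) \to \Set$ is the family fibration. For the fibred condition I would invoke Proposition \ref{prop:multiple-fibration}, which characterises the $\Sigma$-cartesian morphisms in $\Game$ as those $\alpha$ with $\s(\alpha) = \id$ and $\t(\alpha) = \id$. For such an $\alpha$ the commutative square collapses to the equation $\G(\sigma) = \G'(\Sigma(\alpha)(\sigma))$, and so $U(\alpha)$ becomes the function $\Sigma(\alpha)$ decorated with identity squares in $\Lens^\to$, which is exactly the canonical cartesian lift of $\Sigma(\alpha)$ in $\mathbf{Fam}(\Lens)$.

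The principal obstacle here is interpretive rather than calculational: one must first settle the target $\mathbf{Fam}(\Lens)$ so that the data of open games maps into it cleanly. Once the $\Lens^\to$ reading is fixed, the axioms of a contravariant lens morphism line up perfectly with the data of a morphism in the family fibration (one commutative square per fibre), and the fibred condition is essentially dictated by the description of $\Sigma$-cartesian arrows in Proposition \ref{prop:multiple-fibration}, so no serious computation remains.
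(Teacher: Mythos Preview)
Your proposal is essentially what the paper has in mind; the paper's own proof is the single word ``Routine'', so there is nothing more detailed to compare against, and your explicit unpacking via the arrow category is the natural way to make the statement precise.

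There is one variance slip worth flagging. The square supplied by the first axiom has $\s(\alpha) : \Phi' \to \Phi$ and $\t(\alpha) : \Psi' \to \Psi$ in $\Lens$, so as a morphism of $\Lens^\to$ it points from $\G'(\Sigma(\alpha)(\sigma))$ to $\G(\sigma)$, not the direction you state. This is consistent with the paper's remark that $\s, \t$ are functors $\Game_v \to \Lens^\op$; the clean fix is to take the target to be $\mathbf{Fam}\bigl((\Lens^\op)^\to\bigr)$ (equivalently $\mathbf{Fam}\bigl((\Lens^\to)^\op\bigr)$), after which your construction lands correctly. With that adjustment, the rest of your argument---functoriality from pasting of squares, strict commutation with the projections to $\Set$, and preservation of cartesian lifts via Proposition~\ref{prop:multiple-fibration}---goes through unchanged. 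One small wording point: that proposition exhibits a \emph{cleavage} whose cartesian lifts are $\s$- and $\t$-vertical, rather than characterising all cartesian morphisms; but for a functor over $\Set$, preserving a cleavage is enough to be fibred, so your reasoning stands.
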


\begin{proof}
	Routine.
\end{proof}

\section{The double category of open games}\label{sec:double-category}

A double category \cite{kelly_street_review_elements_2_categories} is defined as an internal category object in the category of large categories and functors.
(Compare that a 2-category is defined as a category \emph{enriched} over categories.)
Equivalently, a 2-category contains four sorts of things: objects, horizontal 1-cells, vertical 1-morphisms, and 2-cells.
Given a pair of horizontal 1-cells $F : X \pto Y$, $G : W \pto Z$, a 2-cell $\alpha : F \to G$ consists of a pair of vertical 1-morphisms $f : X \to W$, $g : Y \to X$ between the objects, and a square
\begin{center} \begin{tikzpicture}[node distance=3cm, auto]
	\node (X) {$X$}; \node (Y) [right of=X] {$Y$}; \node (W) [below of=X] {$W$}; \node (Z) [right of=W] {$Z$};
	\draw [-|->] (X) to node [above] {$G$} (Y); \draw [-|->] (W) to node [above] {$H$} (Z);
	\draw [->] (X) to node [right] {$f$} (W); \draw [->] (Y) to node [right] {$g$} (Z);
	\node at (1.5, -1.5) {$\Downarrow \alpha$};
\end{tikzpicture} \end{center}

In this paper we are concerned with pseudo double categories, in which the vertical morphisms form a category, but the horizontal morphisms form a category only up to invertible 2-cells.
Hence, there is a category of vertical morphisms and a bicategory of horizontal morphisms.

A standard example of a double category has as objects sets, horizontal 1-cells relations, vertical 1-cells functions and 2-cells inclusion.
Another has as objects categories, horizontal 1-cells profunctors, vertical 1-cells functors and 2-cells natural transformations.
We will show that there is a double category whose 1-cells are disets, horizontal 1-cells are open games, vertical 1-cells are reversed lenses, and 2-cells are contravariant lens morphisms.

In this paper we mostly follow the notation of \cite{shulman10}, which gives an explicit definition of symmetric monoidal pseudo double categories.

\begin{definition}
	Let $\Phi$ be a diset.
	We define an open game $\u (\Phi) : \Phi \pto \Phi$ by $\Sigma (\u (\Phi)) = 1$, $\u (\Phi) (*) = \id_\Phi$ and $(*, *) \in \B_{\u (\Phi)} (c)$ for all contexts $c : \C (\Phi, \Phi)$.
\end{definition}

\begin{proposition}
	For each lens $\lambda : \Phi \to \Psi$ there is a morphism of open games $\u (\lambda) : \u (\Psi) \to \u (\Phi)$ defined by $\s (\u (\lambda)) = \t (\u (\lambda)) = \lambda$ and $\Sigma (\u (\lambda)) = \id_1$.
	Then $\u$ defines a functor $\Lens^\op \to \Game_v$.
\end{proposition}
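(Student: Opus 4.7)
The plan is to verify in turn (a) that the data $\u(\lambda)$ really satisfies the two axioms of a morphism of open games, and (b) that $\u$ preserves identities and composition. Both parts are essentially trivial: the key observation is that every $\u(\Phi)$ has a singleton strategy set and a best response relation that contains the unique pair $(*,*)$ for \emph{every} context, so axioms that quantify over strategies or contexts collapse.

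First I would check the commuting square axiom. Unpacking, with $\G = \u(\Psi)$ and $\G' = \u(\Phi)$, the only strategy is $* : 1$, and the square reduces to the equation $\id_\Psi \circ \lambda = \lambda \circ \id_\Phi$ in $\Lens$, which is immediate. Next I would check the best response axiom: for any context $(h, k) : \C(\Phi, \Psi)$, both the hypothesis $(*, *) \in \B_{\u(\Psi)}(\lambda \circ h, k)$ and the conclusion $(*, *) \in \B_{\u(\Phi)}(h, k \circ \lambda)$ hold by the definition of $\u$, so the implication is trivially true. This establishes that $\u(\lambda)$ is a morphism in $\Game_v$.

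For functoriality, I would use the fact that composition and identities in $\Game_v$ are lifted componentwise from $\Set$ and $\Lens$. For identities, $\s(\u(\id_\Phi)) = \t(\u(\id_\Phi)) = \id_\Phi$ and $\Sigma(\u(\id_\Phi)) = \id_1$ agree on the nose with the identity morphism $\id_{\u(\Phi)}$ in $\Game_v$. For composition, given $\lambda : \Phi \to \Psi$ and $\mu : \Psi \to \Theta$, I need to verify $\u(\mu \circ \lambda) = \u(\lambda) \circ \u(\mu)$ as morphisms $\u(\Theta) \to \u(\Phi)$. Checking the three components, $\s(\u(\lambda) \circ \u(\mu)) = \s(\u(\mu)) \circ \s(\u(\lambda)) = \mu \circ \lambda = \s(\u(\mu \circ \lambda))$ and symmetrically for $\t$, while $\Sigma(\u(\lambda) \circ \u(\mu)) = \id_1 \circ \id_1 = \id_1 = \Sigma(\u(\mu \circ \lambda))$. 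Note the order reversal on $\s$ and $\t$ is exactly what makes $\u$ contravariant, and matches the signature $\Lens^\op \to \Game_v$.

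There is no real obstacle: the entire argument is bookkeeping once one correctly tracks variance. The only thing to be careful about is the direction conventions, since $\s$ and $\t$ of a morphism in $\Game_v$ point \emph{opposite} to the morphism itself, so composition in $\Game_v$ induces the reverse composition on $\s$ and $\t$; this is precisely what makes $\u$ land in $\Lens^\op$ rather than $\Lens$.
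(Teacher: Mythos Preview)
Your proposal is correct and matches the paper's approach, which simply labels the proof ``Trivial'': you have spelled out exactly the routine verification that the paper omits. The only content is tracking the contravariance of $\s$ and $\t$ correctly, which you do.
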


\begin{proof}
	Trivial.
\end{proof}

\begin{definition}
	Let $\G : \Phi \pto \Psi$ and $\H : \Psi \pto \Theta$ be open games.
	The open game $\H \odot \G : \Phi \pto \Theta$ is defined by
	\begin{itemize}
		\item $\Sigma (\H \odot \G) = \Sigma (\G) \times \Sigma (\H)$
		\item $(\H \odot \G) (\sigma, \tau) = \H (\tau) \circ \G (\sigma)$
		\item $((\sigma, \tau), (\sigma', \tau')) \in \B_{\H \odot \G} (h, k)$ iff
		\[ (\sigma, \sigma') \in \B_\G (h, k \circ \H (\tau)) \]
		and
		\[ (\tau, \tau') \in \B_\H (\G (\sigma) \circ h, k) \]
	\end{itemize}
\end{definition}

In the previous section we noted that the definition of open games in this paper is equivalent to that of \cite{hedges_towards_compositional_game_theory}.
Under this equivalence, the composition $\H \odot \G$ corresponds to the definition $\H \circ_N \G$ in \cite[sections 2.2.3 and 2.2.4]{hedges_towards_compositional_game_theory}, in which the continuation $k \circ \H (\tau)$ of $\G$ in the previous definition is referred to as $k_{\tau \circ}$.
This is a primitive form of sequential play, which will be illustrated in practice in the last section of this paper.

\begin{proposition}\label{prop:sequential-morphisms-exist}
	Let $\Phi \overset\G\pto \Psi \overset\H\pto \Theta$ and $\Phi' \overset{\G'}\pto \Psi' \overset{H'}\pto \Theta'$ be open games, and let $\alpha : \G \to \G'$ and $\beta : \H \to \H'$ be morphisms such that $\t (\alpha) = \s (\beta)$.
	Then there is a morphism $\beta \odot \alpha : \H \odot \G \to \H' \odot \G'$ defined by $\s (\beta \odot \alpha) = \s (\alpha)$, $\t (\beta \odot \alpha) = \t (\beta)$ and $\Sigma (\beta \odot \alpha) = \Sigma (\alpha) \times \Sigma (\beta)$.
\end{proposition}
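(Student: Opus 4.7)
The plan is to verify the two axioms of a morphism of open games for the claimed data $(\s(\alpha), \t(\beta), \Sigma(\alpha) \times \Sigma(\beta))$. Nothing new needs to be constructed; everything follows from the corresponding axioms for $\alpha$ and $\beta$ together with the hypothesis $\t(\alpha) = \s(\beta)$.

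First I would establish the commuting square. For each $(\sigma, \tau) \in \Sigma(\G) \times \Sigma(\H)$ the required identity in $\Lens$ is
$$\H(\tau) \circ \G(\sigma) \circ \s(\alpha) = \t(\beta) \circ \H'(\Sigma(\beta)(\tau)) \circ \G'(\Sigma(\alpha)(\sigma)),$$
and this is obtained by horizontally pasting the square for $\alpha$ at $\sigma$ next to the square for $\beta$ at $\tau$, joining them along the edge $\t(\alpha) = \s(\beta)$.

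Next I would handle the best-response axiom. Fix a context $(h, k) : \C(\Phi', \Theta)$ and suppose $((\sigma, \tau), (\sigma', \tau')) \in \B_{\H \odot \G}(\s(\alpha) \circ h, k)$. Unpacking the definition of $\B_{\H \odot \G}$ gives
$$(\sigma, \sigma') \in \B_\G(\s(\alpha) \circ h, k \circ \H(\tau)), \qquad (\tau, \tau') \in \B_\H(\G(\sigma) \circ \s(\alpha) \circ h, k).$$
To the first I apply the best-response axiom of $\alpha$ with context $(h, k \circ \H(\tau))$, and then rewrite the continuation using $\H(\tau) \circ \t(\alpha) = \H(\tau) \circ \s(\beta) = \t(\beta) \circ \H'(\Sigma(\beta)(\tau))$, which combines $\t(\alpha) = \s(\beta)$ with the square for $\beta$. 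Symmetrically, in the second I first use the square for $\alpha$ to replace $\G(\sigma) \circ \s(\alpha)$ by $\s(\beta) \circ \G'(\Sigma(\alpha)(\sigma))$, and then apply the best-response axiom of $\beta$ with context $(\G'(\Sigma(\alpha)(\sigma)) \circ h, k)$. Conjoining the two conclusions gives exactly the two clauses of $((\Sigma(\alpha)(\sigma), \Sigma(\beta)(\tau)), (\Sigma(\alpha)(\sigma'), \Sigma(\beta)(\tau'))) \in \B_{\H' \odot \G'}(h, k \circ \t(\beta))$, which is what the axiom requires.

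No step is a genuine obstacle; the argument is pure bookkeeping of which context is fed into which axiom. The only subtlety worth flagging is that the whole decomposition hinges on using the square condition of each of $\alpha$ and $\beta$ to massage the history or continuation for the other, so both square axioms and the compatibility $\t(\alpha) = \s(\beta)$ are used in concert. In particular, this is where one sees that the composite $\beta \odot \alpha$ could not be defined without the matching condition on $\t(\alpha)$ and $\s(\beta)$, which is exactly what one expects of 2-cells in a double category.
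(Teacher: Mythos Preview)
Your proposal is correct and matches the paper's proof essentially step for step: the first axiom is handled by pasting the two commuting squares along the shared edge $\t(\alpha)=\s(\beta)$, and the second by unpacking $\B_{\H\odot\G}$, using the square for $\alpha$ to rewrite the history fed to $\beta$ and the square for $\beta$ (together with $\t(\alpha)=\s(\beta)$) to rewrite the continuation produced by $\alpha$, then applying the best-response axioms of $\alpha$ and $\beta$ separately and recombining. There is no substantive difference in strategy or in the order of the rewrites.
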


\begin{proof}
	\hyperref[prf:sequential-morphisms-exist]{See appendix.}
\end{proof}

\begin{proposition}\label{prop:odot-bifunctor}
	$\odot$ defines a functor
	\[ \Game \times_{\Lens^\op} \Game \to \Game \]
	where the pullback is over
	\[ \Game \overset\s\longrightarrow \Lens^\op \overset\t\longleftarrow \Game \]
\end{proposition}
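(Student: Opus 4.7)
The plan is to verify the two functoriality axioms---preservation of identities and preservation of composition---at each of the three components $\s$, $\t$, $\Sigma$ that make up a morphism of open games. Proposition~\ref{prop:sequential-morphisms-exist} already provides the action on morphisms and certifies that the output $\beta \odot \alpha$ is a well-formed morphism of open games, so there is no need to revisit either the commuting-square axiom or the best-response condition here.

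First I would confirm that the pullback $\Game \times_{\Lens^\op} \Game$ is a well-defined category by checking that composition in it is closed. Given composable pullback-morphisms $(\alpha, \beta) : (\G, \H) \to (\G', \H')$ and $(\alpha', \beta') : (\G', \H') \to (\G'', \H'')$, we have $\t(\alpha) = \s(\beta)$ and $\t(\alpha') = \s(\beta')$ by assumption. Using the functoriality of $\s, \t : \Game_v \to \Lens^\op$ established in Proposition~\ref{prop:multiple-fibration}, one obtains $\t(\alpha' \circ \alpha) = \t(\alpha) \circ_\Lens \t(\alpha') = \s(\beta) \circ_\Lens \s(\beta') = \s(\beta' \circ \beta)$, so $(\alpha' \circ \alpha, \beta' \circ \beta)$ still lies in the pullback.

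Preservation of identities is then immediate by unfolding definitions: the identity at $(\G, \H)$ in the pullback is $(\id_\G, \id_\H)$, and applying $\odot$ yields a morphism with source $\s(\id_\G) = \id$, target $\t(\id_\H) = \id$ and strategy component $\Sigma(\id_\G) \times \Sigma(\id_\H) = \id_{\Sigma(\G) \times \Sigma(\H)}$, matching $\id_{\H \odot \G}$ componentwise.

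For preservation of composition I would compare $(\beta' \odot \alpha') \circ (\beta \odot \alpha)$ with $(\beta' \circ \beta) \odot (\alpha' \circ \alpha)$ componentwise. The $\s$- and $\t$-components reduce on both sides to $\s(\alpha) \circ_\Lens \s(\alpha')$ and $\t(\beta) \circ_\Lens \t(\beta')$ respectively, via the formulas $\s(\beta \odot \alpha) = \s(\alpha)$, $\t(\beta \odot \alpha) = \t(\beta)$ from Proposition~\ref{prop:sequential-morphisms-exist} together with functoriality of $\s$ and $\t$. The $\Sigma$-components agree because
\[ (\Sigma(\alpha') \times \Sigma(\beta')) \circ (\Sigma(\alpha) \times \Sigma(\beta)) = (\Sigma(\alpha') \circ \Sigma(\alpha)) \times (\Sigma(\beta') \circ \Sigma(\beta)) \]
by the interchange law for cartesian products of functions in $\Set$. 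The only thing needing care is keeping straight the opposite direction of composition in $\Lens^\op$ versus $\Lens$; once that bookkeeping is aligned, the equalities are forced, and there is no real obstacle.
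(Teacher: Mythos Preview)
Your proposal is correct and follows essentially the same approach as the paper: both verify the interchange law $(\beta' \odot \alpha') \circ (\beta \odot \alpha) = (\beta' \circ \beta) \odot (\alpha' \circ \alpha)$ by checking the $\s$, $\t$ and $\Sigma$ components separately, relying on Proposition~\ref{prop:sequential-morphisms-exist} for the well-formedness of $\beta \odot \alpha$. You are slightly more thorough in also spelling out closure of the pullback under composition and preservation of identities, which the paper leaves implicit.
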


\begin{proof}
	\hyperref[prf:odot-bifunctor]{See appendix.}
\end{proof}

\begin{proposition}
	There are globular natural isomorphisms
	\begin{align*}
		\a_{\I, \H, \G} &: (\I \odot \H) \odot \G \overset\cong\longrightarrow \I \odot (\H \odot \G) \\
		\l_\G &: \u (\t (\G)) \odot \G \overset\cong\longrightarrow \G \\
		\r_\G &: \G \odot \u (\s (\G)) \overset\cong\longrightarrow \G
	\end{align*}
\end{proposition}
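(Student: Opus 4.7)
The plan is to let the underlying strategy-set functions of $\a_{\I, \H, \G}$, $\l_\G$ and $\r_\G$ be the associator and unitors of the cartesian monoidal structure on $\Set$, and to verify mechanically that these lift to globular isomorphisms of open games. Since globularity forces the source and target lens components to be identities, what remains in each case is to check the commuting lens-square axiom and the best-response axiom.

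For the associator, the lens-square axiom reduces to associativity of lens composition: evaluated at matched strategies both composites equal $\I(\upsilon) \circ \H(\tau) \circ \G(\sigma)$. For the best-response axiom I would unfold the nested definitions of $\B$ on both sides down to atomic relations on $\G, \H, \I$. Using $(\I \odot \H)(\tau, \upsilon) = \I(\upsilon) \circ \H(\tau)$ and $(\H \odot \G)(\sigma, \tau) = \H(\tau) \circ \G(\sigma)$, the $(\I \odot \H) \odot \G$ grouping unpacks to the conjunction of
\[ (\sigma, \sigma') \in \B_\G(h, k \circ \I(\upsilon) \circ \H(\tau)), \]
\[ (\tau, \tau') \in \B_\H(\G(\sigma) \circ h, k \circ \I(\upsilon)), \]
\[ (\upsilon, \upsilon') \in \B_\I(\H(\tau) \circ \G(\sigma) \circ h, k), \]
and the $\I \odot (\H \odot \G)$ grouping unpacks to exactly the same three conditions. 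Hence $\a_{\I, \H, \G}$ is a well-defined globular morphism, invertible by the set-theoretic inverse associator.

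The unitors are more immediate: set $\Sigma(\l_\G)(\sigma, \ast) = \sigma$ and $\Sigma(\r_\G)(\ast, \sigma) = \sigma$. Since $\u(\t(\G))(\ast) = \id$, we have $(\u(\t(\G)) \odot \G)(\sigma, \ast) = \G(\sigma)$, and the best-response conjunction collapses because the $\u$-component $(\ast, \ast) \in \B_{\u(\t(\G))}$ is always satisfied, leaving precisely $(\sigma, \sigma') \in \B_\G(h, k \circ \id) = \B_\G(h, k)$. The right unitor $\r_\G$ is entirely symmetric. Globular naturality in each slot then reduces to naturality of the underlying cartesian associator and unitors, combined with the explicit formula for $\odot$ on morphisms from Propositions \ref{prop:sequential-morphisms-exist} and \ref{prop:odot-bifunctor}, which is automatic.

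The only mild obstacle is the bookkeeping for the associator's best-response axiom, since three nested $\odot$'s produce several rewritten continuations and histories and one must carefully track how they thread through the intermediate stages. The key observation that prevents any combinatorial difficulty is that the definition of $\odot$ on best-response is itself built from lens composition, so rebracketing on the lens side exactly mirrors rebracketing of the three atomic $\B$-conditions.
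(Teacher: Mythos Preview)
Your proposal is correct and follows exactly the approach the paper takes: the paper's proof is the single sentence ``Each of these morphisms is over the corresponding structure morphism of the cartesian monoidal category $\Set$,'' and your argument is simply an explicit unfolding of that claim, verifying the lens-square and best-response axioms that the paper leaves implicit.
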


\begin{proof}
	Each of these morphisms is over the corresponding structure morphism of the cartesian monoidal category $\Set$.
\end{proof}

\begin{proposition}\label{thm:double-category}
	The above structures form a pseudo double category whose category of objects is $\Lens^\op$ and whose category of morphisms is $\Game_v$.
\end{proposition}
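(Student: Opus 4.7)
The plan is to match each ingredient of Shulman's explicit definition \cite{shulman10} of a pseudo double category with a structure already constructed in this section. Setting $\mathbb{D}_0 = \Lens^\op$ and $\mathbb{D}_1 = \Game_v$, the source and target functors $\s, \t : \Game_v \to \Lens^\op$ are provided by Proposition~\ref{prop:multiple-fibration}, the unit functor is $\u : \Lens^\op \to \Game_v$, and the horizontal composition is the pullback functor $\odot : \Game \times_{\Lens^\op} \Game \to \Game$ of Proposition~\ref{prop:odot-bifunctor}. The globular associator $\a$ and unitors $\l, \r$ are the natural isomorphisms of the immediately preceding proposition, which have already been shown to be genuine contravariant lens morphisms and natural in 2-cells.

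First I would verify the strict compatibility equations required of the source and target: $\s \circ \u = \t \circ \u = \id_{\Lens^\op}$, $\s \circ \odot = \s \circ \pi_1$, and $\t \circ \odot = \t \circ \pi_2$. All four hold on the nose from the definitions of $\u(\Phi)$ (whose source and target are both $\Phi$) and of $\H \odot \G$ (whose source is $\s(\G)$ and whose target is $\t(\H)$). What remains is the pentagon axiom for $\a$ and the triangle axiom relating $\a$ to $\l, \r$. The crucial observation is that each of $\a, \l, \r$ is the identity on source and target disets, and carries its nontrivial content entirely on the strategy-profile set, where under the identifications $\Sigma(\H \odot \G) = \Sigma(\G) \times \Sigma(\H)$ and $\Sigma(\u(\Phi)) = 1$ it is exactly the corresponding associator or unitor of the cartesian monoidal category $\Set$. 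Because a globular morphism is determined by its underlying function on strategies, the pentagon and triangle for $\Game_v$ reduce pointwise to the pentagon and triangle for $\Set$, which are standard.

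The main obstacle is therefore bookkeeping rather than conceptual difficulty: one must confirm that the reassociated best-response data really does match up. Unfolding the definition of $\odot$ twice, the relation $\B_{(\I \odot \H) \odot \G}$ evaluated at a context $(h, k)$ decomposes into three conditions indexed by $\G, \H, \I$, with histories and continuations formed from the lenses $\G(\sigma), \H(\tau), \I(\upsilon)$; the analogous expansion of $\B_{\I \odot (\H \odot \G)}$ produces exactly the same three conditions, the only differences being the order in which strategy components are bundled into pairs and the bracketing of intermediate lens compositions such as $\I(\upsilon) \circ \H(\tau) \circ \G(\sigma)$. Both differences are irrelevant because composition in $\Lens$ is strictly associative. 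The unitor cases are strictly easier, since $\u(\Phi)$ has trivial strategy set and identity lens and so contributes no best-response constraints. Once these matchings are recorded, the coherence axioms transport from $\Set$ and the result follows.
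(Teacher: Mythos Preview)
Your proposal is correct and follows essentially the same route as the paper: the paper's proof also reduces the remaining triangle and pentagon axioms to the corresponding Mac Lane axioms in the cartesian monoidal category $\Set$, using exactly your observation that each path is a globular morphism and hence determined by its $\Sigma$-component. Your additional checks (the strict compatibilities $\s\circ\u = \t\circ\u = \id$, $\s\circ\odot = \s\circ\pi_1$, etc., and the unfolding of the best-response data for the associator) are sound but go beyond what the paper writes out, since the latter is already subsumed by the preceding proposition establishing $\a,\l,\r$ as globular natural isomorphisms.
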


\begin{proof}
	\hyperref[prf:double-category]{See appendix.}
\end{proof}

It follows immediately that open games and globular morphisms form a bicategory.
The equivalence relation $\sim$ on games defined in \cite[section 2.2.2]{hedges_towards_compositional_game_theory} is precisely the relation of globular isomorphism, and the category $\Game (\Set)$ is precisely the category of horizontal morphisms modulo globular isomorphism.

\begin{definition}
	Let $\Phi_1, \Phi_2$ be disets.
	Since $\V (\Phi_1 \otimes \Phi_2) \cong \V (\Phi_1) \times \V (\Phi_2)$, we have projections
	\[ \V (\Phi_1) \xleftarrow{\pi_1} \V (\Phi_1 \otimes \Phi_2) \xrightarrow{\pi_2} \V (\Phi_2) \]
\end{definition}

\begin{definition}
	Let $\Phi, \Phi', \Psi, \Psi'$ be disets.
	We define functions
	\[ L : \hom_\Lens (\Phi', \Psi') \to (\C (\Phi \otimes \Phi', \Psi \otimes \Psi') \to \C (\Phi, \Psi)) \]
	\[ R : \hom_\Lens (\Phi, \Psi) \to (\C (\Phi \otimes \Psi', \Psi \otimes \Psi') \to \C (\Phi', \Psi')) \]
	by
	\[ L (\lambda') (h, k) = (\pi_1 (h), k \circ (\Psi \otimes (\lambda' \circ \pi_2 (h))) \circ r_\Psi^{-1}) \]
	\[ R (\lambda) (h, k) = (\pi_2 (h), k \circ ((\lambda \circ \pi_1 (h)) \otimes \Psi') \circ l_{\Psi'}^{-1}) \]
\end{definition}

These continuations are, more explicitly,
\[ \Psi \xrightarrow{r_\Psi^{-1}} \Psi \otimes I \xrightarrow{\Psi \otimes \pi_2 (h)} \Psi \otimes \Phi' \xrightarrow{\Psi \otimes \lambda'} \Psi \otimes \Psi' \xrightarrow{k} I \]
\[ \Psi' \xrightarrow{l_{\Psi'}^{-1}} I \otimes \Psi' \xrightarrow{\pi_1 (h) \otimes \Psi'} \Phi \otimes \Psi' \xrightarrow{\lambda \otimes \Psi'} \Psi \otimes \Psi' \xrightarrow{k} I \]

\begin{proposition}\label{prop:projection-lemma}
	Let $\Xi \xrightarrow\kappa \Phi \xrightarrow\lambda \Psi \xrightarrow\mu \Theta$ and $\Xi' \xrightarrow{\kappa'} \Phi' \xrightarrow{\lambda'} \Psi' \xrightarrow{\mu'} \Theta'$ be lenses.
	Then the following diagram commutes:
	\begin{center} \begin{tikzpicture}[node distance=3.5cm, auto]
		\node (A) {$\C (\Xi, \Theta)$}; \node (B) [right of=A] {$\C (\Xi \otimes \Xi', \Theta \otimes \Theta')$}; \node (C) [right of=B] {$\C (\Xi', \Theta')$};
		\node (D) [below of=A] {$\C (\Phi, \Psi)$}; \node (E) [below of=B] {$\C (\Phi \otimes \Phi', \Psi \otimes \Psi')$}; \node (F) [below of=C] {$\C (\Phi', \Psi')$};
		\draw [->] (B) to node [above=5pt] {$L (\mu' \circ \lambda' \circ \kappa')$} (A); \draw [->] (B) to node [above=5pt] {$R (\mu \circ \lambda \circ \kappa)$} (C);
		\draw [->] (E) to node [above] {$L (\lambda')$} (D); \draw [->] (E) to node {$R (\lambda )$} (F);
		\draw [->] (A) to node {$\C (\kappa, \mu)$} (D); \draw [->] (B) to node [left] {$\C (\kappa \otimes \kappa',$} node [right] {$\mu \otimes \mu')$} (E); \draw [->] (C) to node [left] {$\C (\kappa', \mu')$} (F);
	\end{tikzpicture} \end{center}
\end{proposition}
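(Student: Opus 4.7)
The plan is to verify the two sub-squares of the diagram (the left one built from $L$ and the right one built from $R$) separately. By the evident symmetry between $L$ and $R$ that swaps the two tensor factors and exchanges the right unitor $r$ for the left unitor $l$, it suffices to prove commutativity of the left square; the right square then follows by the identical argument with $l$ in place of $r$.

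For the left square, I would fix an arbitrary context $(h, k) \in \C(\Xi \otimes \Xi', \Theta \otimes \Theta')$ and compute both composites. Going down then left, $\C(\kappa \otimes \kappa', \mu \otimes \mu')$ followed by $L(\lambda')$ produces the history $\pi_1((\kappa \otimes \kappa') \circ h)$ and the continuation $k \circ (\mu \otimes \mu') \circ (\Psi \otimes (\lambda' \circ \pi_2((\kappa \otimes \kappa') \circ h))) \circ r_\Psi^{-1}$. Going left then down, $L(\mu' \circ \lambda' \circ \kappa')$ followed by $\C(\kappa, \mu)$ produces history $\kappa \circ \pi_1(h)$ and continuation $k \circ (\Theta \otimes (\mu' \circ \lambda' \circ \kappa' \circ \pi_2(h))) \circ r_\Theta^{-1} \circ \mu$. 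The two histories agree immediately because $\pi_i$ commutes with post-composition by $\otimes$-factored lenses, i.e.\ $\pi_1((\kappa \otimes \kappa') \circ h) = \kappa \circ \pi_1(h)$ and similarly for $\pi_2$.

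The remaining work is to equate the two continuations. Writing $\alpha = \lambda' \circ \kappa' \circ \pi_2(h) : I \to \Psi'$ and treating it as a lens, bifunctoriality of $\otimes$ rewrites $(\mu \otimes \mu') \circ (\Psi \otimes \alpha)$ as $\mu \otimes (\mu' \circ \alpha)$, so the first continuation simplifies to $k \circ (\mu \otimes (\mu' \circ \alpha)) \circ r_\Psi^{-1}$. For the second, naturality of $r$ gives $r_\Theta^{-1} \circ \mu = (\mu \otimes \id_I) \circ r_\Psi^{-1}$, and a further application of bifunctoriality rewrites $(\Theta \otimes (\mu' \circ \alpha)) \circ (\mu \otimes \id_I)$ as $\mu \otimes (\mu' \circ \alpha)$, yielding the same expression.

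I do not anticipate any genuine obstacle; the only friction is keeping the unitor conventions and the four overloaded names $\Phi, \Psi, \Phi', \Psi'$ from the definition of $L$ straight, since they are instantiated with different disets on the upper and lower rows. Apart from that, the proof reduces to bifunctoriality of $\otimes$ and naturality of the unitors in the symmetric monoidal category $\Lens$.
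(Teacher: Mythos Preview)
Your proposal is correct and follows essentially the same approach as the paper. The paper also splits into the history and continuation components, handles the history via the projection identity $\pi_i\circ(\kappa\otimes\kappa')=\kappa_i\circ\pi_i$, and reduces the continuation equality to naturality of the unitor $r^{-1}$ together with bifunctoriality of $\otimes$; it merely packages these steps into a single commuting diagram rather than the equational rewriting you give.
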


\begin{proof}
	\hyperref[prf:projection-lemma]{See appendix.}
\end{proof}

\begin{definition}
	Let $\G_1 : \Phi_1 \pto \Psi_1$ and $\G_2 : \Phi_2 \pto \Psi_2$ be open games.
	We define an open game $\G_1 \otimes \G_2 : \Phi_1 \otimes \Phi_2 \pto \Psi_1 \otimes \Psi_2$ by
	\begin{itemize}
		\item $\Sigma (\G_1 \otimes \G_2) = \Sigma (\G_1) \times \Sigma (\G_2)$
		\item $(\G_1 \otimes \G_2) (\sigma_1, \sigma_2) = \G_1 (\sigma_1) \otimes \G_2 (\sigma_2)$
		\item $((\sigma_1, \sigma_2), (\sigma_1', \sigma_2')) \in \B_{\G_1 \otimes \G_2} (c)$ iff
		\[ (\sigma_1, \sigma_1') \in \B_{\G_1} (L (\G_2 (\sigma_2)) (c)) \]
		and
		\[ (\sigma_2, \sigma_2') \in \B_{\G_2} (R (\G_1 (\sigma_1)) (c)) \]
	\end{itemize}
\end{definition}

Continuing the connection between definitions in this paper and those of \cite{hedges_towards_compositional_game_theory}, this definition corresponds to \cite[section 2.2.7]{hedges_towards_compositional_game_theory}, also denoted $\G_1 \otimes \G_2$.
It is a primitive form of simultaneous play.
The continuation parts of the contexts $L (\G_2 (\sigma_2)) ((h_1, h_2), k)$ and $R (\G_1 (\sigma_1)) ((h_1, h_2), k)$ are respectively referred to as $k_{\otimes \sigma_2 (h_2)}$ and $k_{\sigma_1 (h_1) \otimes}$.

\begin{proposition}\label{prop:tensor-morphisms}
	Let $\alpha_1 : \G_1 \to \G_1'$ and $\alpha_2 : \G_2 \to \G_2'$ be morphisms of open games.
	Then there is a morphism $\alpha_1 \otimes \alpha_2 : \G_1 \otimes \G_2 \to \G_1' \to \G_2'$ defined by $\s (\alpha_1 \otimes \alpha_2) = \s (\alpha_1) \otimes \s (\alpha_2)$, $\t (\alpha_1 \otimes \alpha_2) = \t (\alpha_1) \otimes \t (\alpha_2)$, and $\Sigma (\alpha_1 \otimes \alpha_2) = \Sigma (\alpha_1) \times \Sigma (\alpha_2)$.
\end{proposition}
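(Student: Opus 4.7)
The plan is to verify the two axioms of a contravariant lens morphism for $\alpha_1 \otimes \alpha_2$: the commuting square in $\Lens$ and the best-response preservation condition.

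For the commuting square, I would appeal to the bifunctoriality of the monoidal product on $\Lens$: tensoring the two individual commuting squares witnessing $\alpha_1$ and $\alpha_2$ yields the required commuting square for $\alpha_1 \otimes \alpha_2$, since $(\G_1 \otimes \G_2)(\sigma_1, \sigma_2) = \G_1(\sigma_1) \otimes \G_2(\sigma_2)$ by definition. This part is essentially automatic.

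The real work lies in the best-response axiom. Given a context $((h_1, h_2), k) : \C(\Phi_1' \otimes \Phi_2', \Psi_1 \otimes \Psi_2)$ and assuming the best-response relation in $\G_1 \otimes \G_2$ at the ``pulled-back'' context $((\s(\alpha_1) \otimes \s(\alpha_2)) \circ (h_1, h_2), k)$, I would first unfold the definition of $\otimes$ on open games to obtain two separate best-response conditions for $\G_1$ and $\G_2$ at the projected contexts $L(\G_2(\sigma_2))(\cdot)$ and $R(\G_1(\sigma_1))(\cdot)$. I would then apply the axioms of $\alpha_1$ and $\alpha_2$ individually to transport each best response to $\G_1'$ and $\G_2'$ respectively, and finally repack them via the definition of $\otimes$ into the desired best-response condition for $\G_1' \otimes \G_2'$ at $((h_1, h_2), k \circ (\t(\alpha_1) \otimes \t(\alpha_2)))$.

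The main obstacle is checking that the intermediate contexts match up in the right way. After applying the axiom of $\alpha_1$, the resulting continuation in $\G_1'$ has the form $k_1^* \circ \t(\alpha_1)$, where $k_1^*$ arises from the $L$-projection using $\G_2(\sigma_2)$; I need this to coincide with the $L$-projection using $\G_2'(\Sigma(\alpha_2)(\sigma_2))$ of the target context $((h_1, h_2), k \circ (\t(\alpha_1) \otimes \t(\alpha_2)))$. This identity follows from the commuting-square axiom of $\alpha_2$ (replacing $\t(\alpha_2) \circ \G_2'(\Sigma(\alpha_2)(\sigma_2))$ with $\G_2(\sigma_2) \circ \s(\alpha_2)$), combined with naturality of the right unitor and bifunctoriality of $\otimes$; equivalently, it is a direct instance of the projection lemma (Proposition \ref{prop:projection-lemma}) specialised appropriately. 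The $R$-half of the argument is handled symmetrically, with the roles of $\alpha_1$ and $\alpha_2$ interchanged.
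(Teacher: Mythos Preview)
Your proposal is correct and follows essentially the same approach as the paper's proof: both verify the first axiom via bifunctoriality of $\otimes$ in $\Lens$, and both handle the second axiom by unfolding $\B_{\G_1 \otimes \G_2}$ into its $L$- and $R$-components, applying the individual morphism axioms of $\alpha_1$ and $\alpha_2$, and repacking. You have also correctly identified that the context-matching step relies on the commuting-square axiom of the \emph{other} morphism together with Proposition~\ref{prop:projection-lemma}, which is exactly how the paper proceeds.
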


\begin{proof}
	\hyperref[prf:tensor-morphisms]{See appendix.}
\end{proof}

\begin{proposition}
	$\otimes$ defines a functor $\Game_v \times \Game_v \to \Game_v$.
\end{proposition}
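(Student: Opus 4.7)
The plan is to verify functoriality by the two standard axioms --- preservation of identities and of composition --- noting that the assignment on objects is given by the definition of $\G_1 \otimes \G_2$, and the assignment on morphisms is supplied by Proposition \ref{prop:tensor-morphisms}. Since a morphism in $\Game_v$ is determined by, and equality is tested on, its three pieces of data $(\s(\alpha), \t(\alpha), \Sigma(\alpha))$, and the tensor of morphisms is defined componentwise in each of these slots, functoriality reduces to componentwise functoriality of the three assignments.

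For identities, $\id_\G$ has components $(\id_{\s(\G)}, \id_{\t(\G)}, \id_{\Sigma(\G)})$, so unfolding the definition shows that $\id_{\G_1} \otimes \id_{\G_2}$ has components $(\id_{\Phi_1} \otimes \id_{\Phi_2}, \id_{\Psi_1} \otimes \id_{\Psi_2}, \id_{\Sigma(\G_1)} \times \id_{\Sigma(\G_2)})$; functoriality of $\otimes$ on $\Lens$ and of $\times$ on $\Set$ then collapse these to the components of $\id_{\G_1 \otimes \G_2}$. For composition, given $\alpha_i : \G_i \to \G_i'$ and $\beta_i : \G_i' \to \G_i''$, apply $\s$ to both sides of the target equation
\[ (\beta_1 \otimes \beta_2) \circ (\alpha_1 \otimes \alpha_2) = (\beta_1 \circ \alpha_1) \otimes (\beta_2 \circ \alpha_2). \]
Recalling from Proposition \ref{prop:multiple-fibration} that $\s$ is valued in $\Lens^\op$, the left side unfolds to $(\s(\alpha_1) \otimes \s(\alpha_2)) \circ_\Lens (\s(\beta_1) \otimes \s(\beta_2))$, while the right side unfolds to $(\s(\alpha_1) \circ_\Lens \s(\beta_1)) \otimes (\s(\alpha_2) \circ_\Lens \s(\beta_2))$; these coincide by bifunctoriality of $\otimes : \Lens \times \Lens \to \Lens$. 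The same reasoning applies verbatim to $\t$, and on strategy-profile sets the corresponding statement is bifunctoriality of cartesian product on $\Set$.

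The only real pitfall is bookkeeping: because $\s$ and $\t$ are contravariant on morphisms of $\Game_v$ (they land in $\Lens^\op$), one must remember to reverse the order of $\s(\alpha_i)$ and $\s(\beta_i)$ when rewriting composites in $\Lens$, so that the bifunctoriality identity is applied in the correct orientation. Beyond this, the verification is a direct componentwise check, and one would expect the paper's actual proof to consist of the single word ``Routine''.
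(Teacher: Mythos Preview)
Your argument is correct and is exactly the routine componentwise check the paper has in mind; the paper's own proof consists of the single word ``Trivial.'' Your remark about the contravariance of $\s$ and $\t$ is the only nontrivial bookkeeping point, and you handle it correctly.
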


\begin{proof}
	Trivial.
\end{proof}

\begin{proposition}
	There are natural isomorphisms
	\begin{align*}
		\alpha_{\G, \H, \I} &: (\G \otimes \H) \otimes \I \overset\cong\longrightarrow \G \otimes (\H \otimes \I) \\
		\lambda_\G &: \u (I) \otimes \G \overset\cong\longrightarrow \G \\
		\rho_\G &: \G \otimes \u (I) \overset\cong\longrightarrow \G \\
		\sigma_{\G, \H} &: \G \otimes \H \overset\cong\longrightarrow \H \otimes \G
	\end{align*}
\end{proposition}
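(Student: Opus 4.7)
The plan is to construct each structure morphism componentwise, using the evident structure morphism from the symmetric monoidal category $\Lens$ for the $\s$ and $\t$ parts (with direction reversed, since $\s$ and $\t$ of a game morphism point opposite to the morphism itself) and the corresponding set-theoretic bijection for $\Sigma$. Concretely, I would set $\s(\alpha_{\G,\H,\I}) = a^{-1}$ and $\t(\alpha_{\G,\H,\I}) = a^{-1}$ in $\Lens$, with $\Sigma(\alpha_{\G,\H,\I})$ the associator in $\Set$; and define $\lambda_\G$, $\rho_\G$, $\sigma_{\G,\H}$ analogously from the unitors and symmetry in $\Lens$ together with the corresponding bijections on strategy sets.

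For each such candidate, I would verify the two axioms of a contravariant lens morphism. The commuting square of play lenses follows from the bifunctoriality of $\otimes$ in $\Lens$ combined with the defining identity $(\G_1 \otimes \G_2)(\sigma_1, \sigma_2) = \G_1(\sigma_1) \otimes \G_2(\sigma_2)$, so that the axiom reduces to one of the Mac Lane coherence squares in $\Lens$ itself. The best response axiom is the substantive content: expanding $\B_{\G_1 \otimes \G_2}$ on either side of the square produces continuations reindexed via the operators $L$ and $R$, and the required matching of contexts is exactly an instance of Proposition \ref{prop:projection-lemma} (the projection lemma), which was introduced for precisely this purpose.

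Naturality of each family in $\Game_v$ then reduces to the naturality of $a$, $l$, $r$, $s$ in $\Lens$ (for the $\s$ and $\t$ components) and in $\Set$ (for $\Sigma$), with the best response compatibility of the naturality square again handled by the projection lemma. Invertibility is obtained by exhibiting the inverse candidate, whose defining data are the inverses of $a$, $l$, $r$, $s$ in each ambient category; the round-trip composites reduce to identity morphisms of games component by component. The principal technical obstacle throughout is the best response axiom for the associator and the symmetry, where the two sides of a square produce continuations arising from different nested patterns of $L$ and $R$ applied to contexts on three-fold tensors; reconciling these is exactly what Proposition \ref{prop:projection-lemma} accomplishes, which is why that lemma was developed beforehand.
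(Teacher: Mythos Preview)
Your approach is essentially the paper's: the structure morphisms are taken componentwise from the symmetric monoidal category $\Lens^\op \times \Set \times \Lens^\op$, and one then checks that each is a contravariant lens morphism and that naturality holds---the paper compresses all of this into ``routine to verify.'' One small caveat: Proposition~\ref{prop:projection-lemma} as stated concerns how $L$ and $R$ commute with $\C(\kappa,\mu)$ for a pair of parallel lens chains, whereas the associator and symmetry require comparing \emph{nested} applications of $L$ and $R$ across re-bracketed or swapped three-fold tensors; these identities are of the same flavour and follow by the same kind of direct calculation, but they are not literally instances of the lemma, so you should either state the needed variants or simply unwind the definitions directly.
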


\begin{proof}
	These are the same as the structure morphisms in the symmetric monoidal category $\Lens^\op \times \Set \times \Lens^\op$, where $\Set$ is cartesian monoidal.
	It is routine to verify that these still natural isomorphisms in $\Game_v$.
\end{proof}

\begin{proposition}
	$\Game_v$ is a symmetric monoidal category.
\end{proposition}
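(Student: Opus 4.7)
The plan is to avoid grinding through the Mac Lane pentagon, triangle and hexagon by hand, and instead lift coherence from simpler symmetric monoidal categories through a faithful functor.

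First I would observe that there is an evident forgetful functor
\[ F : \Game_v \to \Lens^\op \times \Set \times \Lens^\op \]
sending an open game $\G : \Phi \pto \Psi$ to the triple $(\Phi, \Sigma(\G), \Psi)$ and a morphism $\alpha$ to $(\s(\alpha), \Sigma(\alpha), \t(\alpha))$. This functor is faithful essentially by definition: a morphism of open games \emph{is} such a triple together with two properties (the commuting square and best-response preservation), so equality of morphisms in $\Game_v$ is decided by equality of their three components. Next I would check that $F$ is strict monoidal on the nose with respect to the chosen tensor: $F(\G_1 \otimes \G_2) = F(\G_1) \otimes F(\G_2)$ (using cartesian product on the $\Set$ factor and the lens tensor on the two $\Lens^\op$ factors), and $F(\u(I)) = ((1,1), 1, (1,1))$, which is the monoidal unit of the codomain.

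The preceding proposition already exhibits $\alpha_{\G,\H,\I}$, $\lambda_\G$, $\rho_\G$ and $\sigma_{\G,\H}$ as morphisms in $\Game_v$ whose images under $F$ are precisely the associator, unitors and symmetry of the codomain $\Lens^\op \times \Set \times \Lens^\op$ (using that $\Lens$ is symmetric monoidal by an earlier proposition, hence so is $\Lens^\op$, and that $\Set$ is cartesian monoidal). Naturality in $\Game_v$ was asserted there. With these in place, each Mac Lane coherence equation to be checked in $\Game_v$ is an equation between two parallel morphisms in $\Game_v$; applying $F$ transports it to the analogous equation in $\Lens^\op \times \Set \times \Lens^\op$, which holds componentwise in a product of symmetric monoidal categories. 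Because $F$ is faithful, this reflects back to equality in $\Game_v$. The same argument handles the braiding axioms ($\sigma_{\H,\G} \circ \sigma_{\G,\H} = \id$ and the hexagon), completing the proof.

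The only genuine content, and the one possible obstacle, is verifying that the purported structure isomorphisms really do live in $\Game_v$, i.e.\ that the best-response axiom of a contravariant lens morphism holds for each of $\alpha, \lambda, \rho, \sigma$. But this is immediate: on both sides of each structure isomorphism, the best-response relation is literally the same relation (e.g.\ $\B_{\G \otimes \H}$ and $\B_{\H \otimes \G}$ are obtained from $\B_\G$ and $\B_\H$ by the symmetric pair of context reindexings $L,R$, which swap under $\sigma$), transported by the evident bijection on strategy sets coming from the $\Set$-component of the structure map. This is the same bookkeeping already carried out in the proofs of Propositions~\ref{prop:sequential-morphisms-exist} and~\ref{prop:tensor-morphisms}, so no new work is required.
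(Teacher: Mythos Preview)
Your proposal is correct and takes essentially the same approach as the paper: the paper's proof is simply ``Routine,'' but the surrounding propositions already identify the structure morphisms with those of $\Lens^\op \times \Set \times \Lens^\op$, so your explicit use of the faithful forgetful functor $F$ into that product just unpacks what the paper leaves implicit. The only difference is expository: you articulate the faithfulness-reflects-coherence argument, whereas the paper relies on the reader to see it.
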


\begin{proof}
	Routine.
\end{proof}

\begin{proposition}
	$\s, \t : \Game \to \Lens^\op$ are strict symmetric monoidal functors.
\end{proposition}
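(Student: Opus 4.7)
The plan is to check each clause of the definition of strict symmetric monoidal functor directly against the constructions already given, for both $\s$ and $\t$ in parallel. The whole proof is essentially bookkeeping, because the symmetric monoidal structure on $\Game_v$ was built to be strict over the product monoidal structure $\Lens^\op \times \Set \times \Lens^\op$, as the preceding proposition makes explicit.

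First I would verify strict preservation of the tensor on horizontal 1-cells and on vertical morphisms. By the definition of $\G_1 \otimes \G_2$, its source and target are given on the nose by $\s(\G_1 \otimes \G_2) = \s(\G_1) \otimes \s(\G_2)$ and $\t(\G_1 \otimes \G_2) = \t(\G_1) \otimes \t(\G_2)$. By Proposition \ref{prop:tensor-morphisms}, for a pair of morphisms $\alpha_1, \alpha_2$ in $\Game_v$ we have $\s(\alpha_1 \otimes \alpha_2) = \s(\alpha_1) \otimes \s(\alpha_2)$ and similarly for $\t$, where the right-hand sides are the tensors in $\Lens$ and hence, with reversed directions, in $\Lens^\op$. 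So both functors strictly preserve $\otimes$ on the nose.

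Next, the monoidal unit of $\Game_v$ is $\u(I)$, and by definition $\s(\u(I)) = \t(\u(I)) = I$, which is the monoidal unit of $\Lens^\op$. Finally, the coherence morphisms $\alpha_{\G, \H, \I}$, $\lambda_\G$, $\rho_\G$, $\sigma_{\G, \H}$ of $\Game_v$ were, per the preceding proposition, literally the structure morphisms of $\Lens^\op \times \Set \times \Lens^\op$ with the appropriate source, target and strategy-set components. Applying $\s$ projects onto the first factor and applying $\t$ projects onto the third, so each coherence morphism of $\Game_v$ is sent by $\s$ (resp.\ $\t$) to the corresponding coherence morphism of $\Lens^\op$. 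This immediately yields the hexagon, pentagon and triangle identities in $\Lens^\op$ from those in $\Game_v$.

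The only thing that needs care is the $\Lens^\op$ orientation: since $\s(\alpha)$ and $\t(\alpha)$ go contravariantly with respect to $\alpha$, one must interpret the tensors of these lenses as taking place in $\Lens^\op$, not $\Lens$. Once this convention is fixed the verifications are entirely routine, so I would conclude with a one-line proof in the paper pointing to the construction of $\otimes$ on games and morphisms and to the preceding proposition identifying the coherence data.
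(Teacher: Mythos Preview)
Your proposal is correct and is precisely the unpacking of what the paper means by its one-word proof ``Trivial.'' The paper's own argument gives no further detail, so your explicit check that $\s$ and $\t$ strictly preserve $\otimes$ on objects and morphisms, send $\u(I)$ to $I$, and send the coherence isomorphisms (built as the $\Lens^\op \times \Set \times \Lens^\op$ structure morphisms) to the corresponding ones in $\Lens^\op$ is exactly the intended verification.
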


\begin{proof}
	Trivial.
\end{proof}

\begin{proposition}\label{prop:identitor-distributor-isomorphisms}
	There are globular isomorphisms
	\[ \mathfrak U_{X_1, X_2} : \u (X_1 \otimes X_2) \overset\cong\longrightarrow \u (X_1) \otimes \u (X_2) \]
	and
	\[ \mathfrak X_{\G_1, \G_2, \H_1, \H_2} : (\H_1 \otimes \H_2) \odot (\G_1 \otimes \G_2) \overset\cong\longrightarrow (\H_1 \odot \G_1) \otimes (\H_2 \odot \G_2) \]
\end{proposition}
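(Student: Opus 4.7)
I would handle the two isomorphisms separately. The identitor $\mathfrak U_{\Phi_1,\Phi_2}$ is essentially trivial: both $\u(\Phi_1\otimes\Phi_2)$ and $\u(\Phi_1)\otimes\u(\Phi_2)$ have a singleton strategy set (as $1$ versus $1\times 1$), their unique lens components $\id_{\Phi_1\otimes\Phi_2}$ and $\id_{\Phi_1}\otimes\id_{\Phi_2}$ agree by functoriality of $\otimes$ on $\Lens$, and both best-response relations are total. Hence the canonical bijection $1\cong 1\times 1$ gives the required globular isomorphism with nothing else to check.

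For the distributor $\mathfrak X_{\G_1,\G_2,\H_1,\H_2}$, I would define $\Sigma(\mathfrak X)$ to be the canonical symmetry sending $((\sigma_1,\sigma_2),(\tau_1,\tau_2))$ to $((\sigma_1,\tau_1),(\sigma_2,\tau_2))$. The underlying lenses at matching strategies agree by the interchange law in the symmetric monoidal category $\Lens$:
\[ (\H_1(\tau_1)\otimes\H_2(\tau_2))\circ(\G_1(\sigma_1)\otimes\G_2(\sigma_2)) = (\H_1(\tau_1)\circ\G_1(\sigma_1))\otimes(\H_2(\tau_2)\circ\G_2(\sigma_2)). \]

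The substantive step is matching the best-response relations. On the left, expanding $\odot$ first and then $\otimes$, the condition for a pair of strategy tuples at context $(h,k)$ unfolds into a conjunction of four conditions, one in each of $\G_1,\G_2,\H_1,\H_2$. For instance, the $\G_1$ clause occurs at context $L(\G_2(\sigma_2))\bigl(h,\, k\circ(\H_1(\tau_1)\otimes\H_2(\tau_2))\bigr)$. On the right, expanding $\otimes$ first and then $\odot$, the $\G_1$ clause occurs at $(h_1',\,k_1'\circ\H_1(\tau_1))$ where $(h_1',k_1')=L(\H_2(\tau_2)\circ\G_2(\sigma_2))(h,k)$. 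That these two contexts coincide is exactly Proposition \ref{prop:projection-lemma} applied with $\kappa=\id_{\Phi_1}$, $\lambda=\G_1(\sigma_1)$, $\mu=\H_1(\tau_1)$ and $\kappa'=\id_{\Phi_2}$, $\lambda'=\G_2(\sigma_2)$, $\mu'=\H_2(\tau_2)$. The remaining three clauses reduce to analogous instances of the projection lemma, using the $R$ half for the $\G_2$ and $\H_2$ sides. Because the contexts are literally equal and the strategy bijection is symmetric, the morphism is a globular isomorphism rather than just a morphism.

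The main obstacle is purely bookkeeping: two nested best-response definitions must be unfolded in both orders, and each of the four resulting context equalities recognised as the projection lemma specialised appropriately. No step is conceptually deep --- indeed Proposition \ref{prop:projection-lemma} was stated precisely to encode the needed coherence so that it can be discharged cleanly here.
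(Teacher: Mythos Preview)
Your proof is correct and follows essentially the same approach as the paper: both handle $\mathfrak U$ trivially via the bijection $1\cong 1\times 1$, define $\Sigma(\mathfrak X)$ as the canonical symmetry, obtain the lens axiom from bifunctoriality of $\otimes$ on $\Lens$, and reduce the best-response axiom to four context equalities each discharged by an instance of Proposition~\ref{prop:projection-lemma}. The paper presents the best-response step as a single if-and-only-if chain (which simultaneously yields the inverse), but this is only a cosmetic difference from your observation that the context equalities and the bijectivity of $\Sigma(\mathfrak X)$ make the argument symmetric.
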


\begin{proof}
	\hyperref[prf:identitor-distributor-isomorphisms]{See appendix.}
\end{proof}

\begin{theorem}\label{thm:symmetric-monoidal-double-category}
	Disets, open games and morphisms form a symmetric monoidal pseudo double category.
\end{theorem}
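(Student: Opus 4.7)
The plan is to verify the axioms from Shulman's explicit axiomatization of symmetric monoidal pseudo double categories \cite{shulman10}, using the structure already assembled. The underlying pseudo double category is given by Proposition \ref{thm:double-category}; the category of objects $\Lens^\op$ is symmetric monoidal by Section \ref{sec:lenses}; the category of vertical arrows $\Game_v$ has just been shown to be symmetric monoidal; the source and target functors $\s, \t : \Game \to \Lens^\op$ are strict symmetric monoidal; and the globular unit comparator $\mathfrak U$ and distributor $\mathfrak X$ are supplied by Proposition \ref{prop:identitor-distributor-isomorphisms}.

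What remains is to check the coherence axioms: naturality of $\mathfrak U$ and $\mathfrak X$, compatibility of $\mathfrak X$ with the horizontal associator $\a$ and the unitors $\l, \r$, the two unit pentagons relating $\mathfrak U$ to $\mathfrak X$, and compatibility of $\mathfrak X$ with the vertical braiding $\sigma_{\G, \H}$. Each such axiom is an equation between globular morphisms of open games whose source and target disets already agree, so it is determined by its $\Sigma$-component together with its action on best response. The $\Sigma$-components are all instances of standard symmetric monoidal coherence in the cartesian monoidal category $\Set$, and so commute automatically.

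The content of the verification therefore lies entirely in matching the best response relations on the two sides of each axiom. This is where Proposition \ref{prop:projection-lemma} does the work: it controls how the context-projection operators $L$ and $R$ compose through chains of lenses, which is exactly what is needed to translate the nested contexts appearing in an expression involving both $\odot$ and $\otimes$ between one bracketing and another. The main obstacle will be the axiom relating $\mathfrak X$ with the horizontal associator $\a$, where each side involves six underlying games $\G_1, \G_2, \H_1, \H_2, \I_1, \I_2$ and several layers of nested $L, R$ in the best response condition. Unwinding both sides and reassociating the continuations using Proposition \ref{prop:projection-lemma} reduces the required equality to coherence in $\Set$ and in $\Lens$. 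The remaining axioms follow the same template with fewer games and shallower nesting; I would carry out the detailed chases in the appendix and leave only this reduction pattern in the main text.
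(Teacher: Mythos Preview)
Your outline correctly identifies the ingredients already in place and correctly locates the remaining work in the coherence diagrams of \cite{shulman10}. But you then misidentify what has to be checked. A morphism in $\Game_v$ is the \emph{data} $(\s(\alpha),\t(\alpha),\Sigma(\alpha))$ subject to two \emph{properties}; two morphisms are equal precisely when their three data components agree. The best response condition is not part of the data and therefore plays no role in verifying that a diagram of already-established morphisms commutes. Once $\mathfrak U$, $\mathfrak X$, $\a$, $\l$, $\r$, and the $\otimes$-structure morphisms are known to be morphisms (which is the content of the propositions you cite), each coherence axiom reduces to checking the $\s$-, $\t$-, and $\Sigma$-components separately; the first two live in $\Lens^\op$ and the third in cartesian $\Set$, and all three commute by ordinary symmetric monoidal coherence there. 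This is exactly the paper's argument: the diagrams are ``trivial to prove, by checking the $\s$, $\t$ and $\Sigma$ components separately.''

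Your invocation of Proposition~\ref{prop:projection-lemma} here is misplaced. That lemma is what makes the \emph{existence} of $\mathfrak X$ as a globular isomorphism go through in Proposition~\ref{prop:identitor-distributor-isomorphisms}: there one must show that a candidate bijection on strategy profiles really is a morphism (and its inverse is too), which does require matching best response relations and hence the $L/R$ calculus. But that work has already been done by the time you reach the coherence axioms; re-doing it for each pentagon and hexagon is unnecessary, and suggesting that ``the content of the verification lies entirely in matching the best response relations'' is a conceptual slip about what equality of morphisms in $\Game_v$ means. Also note that not all the morphisms in the coherence diagrams are globular (e.g.\ $\u(\alpha_{X_1,X_2,X_3})$ has nontrivial $\s$ and $\t$), so you should check all three components, not just $\Sigma$.
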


\begin{proof}
	\hyperref[prf:symmetric-monoidal-double-category]{See appendix.}
\end{proof}

Note that we cannot immediately apply \cite[theorem 5.1]{shulman10} to deduce that the horizontal bicategory is also symmetric monoidal, because the horizontal bicategory is not framed \cite{shulman_framed_bicategories_monoidal_fibrations}.
However, the hypotheses of this theorem appear to be stronger than necessary: all vertical isomorphisms, which includes the structure morphisms of the symmetric monoidal structure, have companions and conjoints.
This will not impact us in practice, because we will work directly with the double category rather than the bicategory structure.

A common (and reasonable) complaint has been that little game-theoretic information can be obtained from the string diagram alone.
For example, no information about the equilibria of a game can be understood from its string diagram.
Thus, despite having a graphical language, in order to answer any nontrivial question about an open game given its diagram it is still necessary to explicitly calculate the denotation, which is tedious and error-prone, and makes compositionality much less useful in practice.

It is expected that a monoidal double category supports a 3-dimensional graphical calculus of surface diagrams.
To completely formalise these claims appears to still require a large amount of foundational work, however.

Having obtained a higher-dimensional categorical language that is fully compositional and has the potential to be graphical, the next step is to formalise interesting game-theoretic concepts using this language.
This is the subject of the remainder of this paper.
The long-term aim towards which we are working is to be able to reason about game-theoretic concepts using nothing but surface diagrams.

\section{States of open games}\label{sec:states}

\begin{definition}
	Let $\G : \Phi \pto \Psi$ be an open game.
	A \define{state} of $\G$ is a pair $(\sigma, k)$ where $\sigma : \Sigma (\G)$ and $k : \K (\Psi)$ with the property that for all $h : \V (\Phi)$, $(\sigma, \sigma) \in \B_\G (h, k)$.
	We also say that $\sigma$ is a state of $\G$ \define{over} $k$.
	We write $\St (\G)$ for the set of states of $\G$.
\end{definition}

Informally, a state consists of a choice of continuation and a choice of strategy that is a Nash equilibrium for that continuation, for all histories.
As we will see, this combines aspects of Nash and subgame perfect equilibria in a flexible way.
The definition of states is a weakening of the definition of \emph{solution} considered in \cite[section 3.3]{hedges_towards_compositional_game_theory}, in which $\sigma$ may depend on $k$, but not on $h$.
The latter is more game-theoretically plausible (because many open games have interesting solutions in this sense), but is incompatible with $\otimes$ and hence not fully compositional.
The definition of state is intended as a compromise between game-theoretic usefulness and compositionality.

\begin{proposition}
	Let $\alpha : \G \to \G'$ be a morphism of open games, and let $\sigma$ be a state of $\G$ over $k$.
	Then $\Sigma (\alpha) (\sigma)$ is a state of $\G'$ over $\t (\alpha) \circ k$.
	Thus we have a function $\St (\alpha) : \St (\G) \to \St (\G')$.
\end{proposition}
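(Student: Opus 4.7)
The plan is to unfold the definition of a state of $\G$, then apply the best-response axiom built into $\alpha$ in the only obvious way. Concretely, a state consists of a strategy profile $\sigma : \Sigma(\G)$ and a continuation $k : \K(\Psi)$ such that $(\sigma, \sigma) \in \B_\G(h, k)$ for every history $h : \V(\Phi)$. I need to verify that $\Sigma(\alpha)(\sigma)$, paired with the composite continuation $k \circ \t(\alpha) : \K(\Psi')$ (read as $k$ after $\t(\alpha)$, matching the orientation of the second clause of the morphism axiom), satisfies the analogous property for $\G'$: that $(\Sigma(\alpha)(\sigma), \Sigma(\alpha)(\sigma)) \in \B_{\G'}(h', k \circ \t(\alpha))$ for every $h' : \V(\Phi')$.

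Fix an arbitrary $h' : \V(\Phi')$. Then $(h', k) : \C(\Phi', \Psi)$ is a valid context for $\alpha$, so the best-response axiom applies with the pair $(\sigma, \sigma)$. Its hypothesis, $(\sigma, \sigma) \in \B_\G(\s(\alpha) \circ h', k)$, is a direct instance of the state assumption on $\sigma$, taking the history to be $\s(\alpha) \circ h' : \V(\Phi)$. The axiom then delivers $(\Sigma(\alpha)(\sigma), \Sigma(\alpha)(\sigma)) \in \B_{\G'}(h', k \circ \t(\alpha))$, which is exactly the required condition for the arbitrary $h'$.

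For the final sentence, the function $\St(\alpha) : \St(\G) \to \St(\G')$ is defined by $(\sigma, k) \mapsto (\Sigma(\alpha)(\sigma), k \circ \t(\alpha))$; it is well-typed and lands in $\St(\G')$ by the argument just given. There is no real obstacle: the proof is essentially a single application of the morphism axiom, and the only mildly subtle point is recognising that the universal quantifier over histories in the definition of a state is exactly what permits the reparametrised history $\s(\alpha) \circ h'$ to be fed into the hypothesis. This is also the reason why, had we insisted on letting $\sigma$ depend on the history (as in the ``solution'' notion from \cite{hedges_towards_compositional_game_theory}), the analogue of this proposition would fail; the quantifier-over-histories weakening is precisely calibrated to make states functorial along morphisms.
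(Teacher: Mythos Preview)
Your proof is correct and follows essentially the same route as the paper's: fix an arbitrary history $h'$ for $\G'$, apply the state hypothesis at the reparametrised history $\s(\alpha)\circ h'$, and then invoke the best-response axiom of $\alpha$ to conclude. Your additional remarks on the definition of $\St(\alpha)$ and on why the universal quantifier over histories is essential are accurate elaborations beyond what the paper writes, but the core argument is identical.
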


\begin{proof}
	Let $h : \V (\s (\G'))$.
	Since $\sigma$ is a state of $\G$ over $k$, we have
	\[ (\sigma, \sigma) \in \B_\G (\s (\alpha) \circ h, k) \]
	Since $\alpha$ is a morphism, it follows that
	\[ (\Sigma (\alpha) (\sigma), \Sigma (\alpha) (\sigma)) \in \B_{\G'} (h, k \circ \t (\alpha)) \qedhere \]
\end{proof}

\begin{proposition}
	$\St$ defines a functor $\Game_v \to \Set$.
\end{proposition}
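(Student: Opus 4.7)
The proof plan is essentially a routine functoriality check that bundles together two facts already established in the paper: that $\Sigma : \Game \to \Set$ is a functor, and that $\t : \Game \to \Lens^\op$ is a functor (Proposition \ref{prop:multiple-fibration}).

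First, I would record the definition of $\St$ on morphisms that is implicit in the preceding proposition: for $\alpha : \G \to \G'$, the function $\St(\alpha) : \St(\G) \to \St(\G')$ is
\[ \St(\alpha)(\sigma, k) = (\Sigma(\alpha)(\sigma), k \circ \t(\alpha)), \]
where $k \circ \t(\alpha)$ is the composite in $\Lens$ of $k : \Psi \to I$ with $\t(\alpha) : \Psi' \to \Psi$, using the representation $\K \cong \hom_\Lens(-, I)$. The preceding proposition exactly guarantees that the pair on the right is again a state, so the function is well-defined.

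Next I would verify the two functor axioms. For identities, $\id_\G$ has $\Sigma(\id_\G) = \id_{\Sigma(\G)}$ and $\t(\id_\G) = \id_{\t(\G)}$, so $\St(\id_\G)(\sigma, k) = (\sigma, k \circ \id) = (\sigma, k)$. For composition, given $\alpha : \G \to \G'$ and $\beta : \G' \to \G''$, functoriality of $\Sigma$ gives $\Sigma(\beta \circ \alpha) = \Sigma(\beta) \circ \Sigma(\alpha)$, and the fact that $\t : \Game_v \to \Lens^\op$ is a functor gives $\t(\beta \circ \alpha) = \t(\alpha) \circ \t(\beta)$ as lenses. Then
\[ \St(\beta \circ \alpha)(\sigma, k) = (\Sigma(\beta)(\Sigma(\alpha)(\sigma)),\, k \circ \t(\alpha) \circ \t(\beta)), \]
which by associativity of lens composition is precisely $\St(\beta)(\St(\alpha)(\sigma, k))$.

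There is no real obstacle: the only nontrivial content, namely that $\St(\alpha)$ maps states to states, has already been discharged by the previous proposition, and what remains is pure bookkeeping on top of functors that the paper has already established. If anything, the mildest subtlety is just keeping the direction of $\t$ straight, since $\t$ is contravariant from $\Game_v$ to $\Lens$ but covariant into $\Lens^\op$; this only affects the order of composition in the continuation slot and is absorbed by associativity.
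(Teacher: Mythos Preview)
Your proposal is correct and is precisely the routine verification the paper gestures at with the one-word proof ``Routine.'' The only cosmetic remark is that the functoriality of $\t$ and $\Sigma$ you invoke is already part of the statement that $\Game_v$ is a category (and does not require the fibration result of Proposition~\ref{prop:multiple-fibration}), but this does not affect the argument.
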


\begin{proof}
	Routine.
\end{proof}

\begin{theorem}
	$\St \cong \hom_{\Game_v} (\u (I), -)$.
\end{theorem}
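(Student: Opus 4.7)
The plan is to establish a natural bijection between morphisms $\alpha : \u(I) \to \G$ and states of $\G$. Using the representation $\K \cong \hom_\Lens(-, I)$ from the earlier proposition, specifying such a morphism amounts to giving lenses $\s(\alpha) : \Phi \to I$ and $\t(\alpha) : \Psi \to I$ (i.e., continuations in $\K(\Phi)$ and $\K(\Psi)$) together with an element $\sigma = \Sigma(\alpha)(*) \in \Sigma(\G)$, where $\G : \Phi \pto \Psi$ and $* \in 1 = \Sigma(\u(I))$, all subject to the two axioms of a contravariant lens morphism.

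First, I would observe that the commuting-square axiom, applied at the unique $* \in \Sigma(\u(I))$, forces $\s(\alpha) = \t(\alpha) \circ \G(\sigma)$, since $\u(I)(*) = \id_I$. Thus $\s(\alpha)$ is redundant data, uniquely determined by $\sigma$ and $\t(\alpha)$, and the independent choices reduce to the pair $(\sigma, \t(\alpha))$.

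Next, I would unpack the best-response axiom. Since $\Sigma(\u(I)) = 1$, the hypothesis pair is always $(*, *)$, which lies in $\B_{\u(I)}(c)$ for every context $c$ by definition of $\u(I)$; so the axiom becomes an unconditional statement. A context lives in $\C(\Phi, I) = \V(\Phi) \times \K(I)$, and since $\K(I) \cong 1$ this collapses to a choice of history $h \in \V(\Phi)$; moreover the continuation $k \circ \t(\alpha)$ in the conclusion equals $\t(\alpha)$ because the $I$-component $k$ is the identity lens on $I$. The axiom therefore reads: for every $h \in \V(\Phi)$, $(\sigma, \sigma) \in \B_\G(h, \t(\alpha))$, which is exactly the condition that $(\sigma, \t(\alpha))$ be a state of $\G$. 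This gives the set-level bijection $\alpha \mapsto (\Sigma(\alpha)(*), \t(\alpha))$, with inverse sending $(\sigma, k)$ to the morphism whose components are $\Sigma(\alpha)(*) = \sigma$, $\t(\alpha) = k$, and $\s(\alpha) = k \circ \G(\sigma)$ as dictated by the first axiom.

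Finally, I would verify naturality in $\G$. For $\beta : \G \to \G'$, the composition $\beta \circ \alpha$ in $\Game_v$ has $\Sigma(\beta \circ \alpha)(*) = \Sigma(\beta)(\sigma)$ and $\t(\beta \circ \alpha) = k \circ \t(\beta)$, which is precisely the definition of $\St(\beta)(\sigma, k)$ from the preceding proposition. The argument is almost entirely bookkeeping; the only mildly subtle step is tracking the degenerate collapse of Axiom 2 via $\K(I) \cong 1$, which I do not expect to present any genuine obstacle.
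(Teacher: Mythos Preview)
Your proposal is correct and follows essentially the same route as the paper's proof: both identify $\t(\alpha)$ with the continuation $k$, observe that the first axiom forces $\s(\alpha) = k \circ \G(\sigma)$, and reduce the second axiom via $\K(I) \cong 1$ to the state condition $(\sigma,\sigma) \in \B_\G(h,k)$ for all $h$. You additionally spell out the naturality square, which the paper leaves implicit, but this is routine and does not constitute a different approach.
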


\begin{proof}
	States $(\sigma, k)$ of $\G$ are in bijection with morphisms $\alpha : \u (I) \to \G$, as follows.
	There is a unique choice $\s (\alpha) = k \circ \G (\sigma)$ satisfying the first axiom.
	Since by definition $(*, *) \in \B_{\u (I)} (c)$ for all $c : \C (I, I)$, and all split contexts for $\alpha$ are of the form $(h, \id_I)$ for $h : \V (\Phi)$, the second axiom is equivalent to the condition that $(\sigma, \sigma) \in \B_\G (h, k)$ for all $h : \V (\Phi)$.
\end{proof}

This is summarised in the diagram
\begin{center} \begin{tikzpicture}[node distance=3cm, auto]
	\node (A) {$I$}; \node (B) [right of=A] {$I$}; \node (C) [below of=A] {$\Phi$}; \node (D) [below of=B] {$\Psi$};
	\draw [-|->] (A) to node [above] {$\u (I)$} node [below] {$1$} (B); \draw [-|->] (C) to node [above] {$\G$} node [below] {$\Sigma_\G$} (D);
	\draw [->] (C) to node {$k \circ \G (\sigma)$} (A); \draw [->] (D) to node [right] {$k$} (B);
	\draw [->] (1.5, -.5) to node {$\sigma$} (1.5, -2.5);
\end{tikzpicture} \end{center}

Noting that $\u (I)$ is the monoidal unit of $\Game_v$, states of open games are indeed states in the more general sense of monoidal categories.

\begin{corollary}[Backward induction]
	Let $\G$ and $\H$ be open games with $\t (\G) = \s (\H)$.
	Let $\tau$ be a state of $\H$ over $k$, and let $\sigma$ be a state of $\G$ over $k \circ \H (\tau)$.
	Then $(\sigma, \tau)$ is a state of $\H \odot \G$ over $k$.
\end{corollary}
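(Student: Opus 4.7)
The plan is to unfold the definition of state for $\H \odot \G$ and observe that the required condition decomposes exactly into the two hypotheses. First I would fix an arbitrary history $h : \V(\Phi)$ and aim to establish that $((\sigma, \tau), (\sigma, \tau)) \in \B_{\H \odot \G}(h, k)$. By the defining clause of $\B_{\H \odot \G}$ from the definition of $\odot$, this reduces to verifying the conjunction of $(\sigma, \sigma) \in \B_\G(h, k \circ \H(\tau))$ and $(\tau, \tau) \in \B_\H(\G(\sigma) \circ h, k)$.

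For the first conjunct, the hypothesis that $\sigma$ is a state of $\G$ over $k \circ \H(\tau)$ asserts that $(\sigma, \sigma) \in \B_\G(h', k \circ \H(\tau))$ for every $h' : \V(\Phi)$; specialising to $h' = h$ gives what is needed. For the second conjunct, note that under the identification $\V(\Psi) \cong \hom_\Lens(I, \Psi)$ the composite $\G(\sigma) \circ h$ is an element of $\V(\Psi)$; then the hypothesis that $\tau$ is a state of $\H$ over $k$ supplies $(\tau, \tau) \in \B_\H(h'', k)$ for every $h'' : \V(\Psi)$, and we specialise to $h'' = \G(\sigma) \circ h$.

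Since $h$ was arbitrary, the pair $((\sigma, \tau), k)$ meets the definition of a state of $\H \odot \G$. I do not expect any technical obstacle: the argument is a direct unfolding, and its entire substance is the observation that the two universal quantifications in the two state hypotheses cover exactly the two histories that appear on the right-hand side of the definition of $\B_{\H \odot \G}$. The reason it merits the name \emph{backward induction} is game-theoretic rather than technical: the state of the later game $\H$ is chosen first, and its induced continuation $k \circ \H(\tau)$ then plays the role of the continuation over which the state of the earlier game $\G$ must be optimal --- the standard pattern of solving a dynamic game from the last stage backward.
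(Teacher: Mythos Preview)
Your argument is correct: unfolding the definition of $\B_{\H \odot \G}$ and instantiating the two universal hypotheses at $h$ and at $\G(\sigma)\circ h$ is a complete proof.

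The paper, however, takes a different (and shorter) route. The result is stated as a \emph{corollary} immediately after the representability theorem $\St \cong \hom_{\Game_v}(\u(I),-)$, and the intended argument is purely categorical: the state $\tau$ over $k$ is a 2-cell $\beta:\u(I)\to\H$ with $\s(\beta)=k\circ\H(\tau)$ and $\t(\beta)=k$; the state $\sigma$ over $k\circ\H(\tau)$ is a 2-cell $\alpha:\u(I)\to\G$ with $\t(\alpha)=k\circ\H(\tau)=\s(\beta)$. Hence the horizontal composite $\beta\odot\alpha:\u(I)\odot\u(I)\to\H\odot\G$ exists by Proposition~\ref{prop:sequential-morphisms-exist}, and precomposing with the unitor $\u(I)\cong\u(I)\odot\u(I)$ yields a morphism $\u(I)\to\H\odot\G$ with $\t$-component $k$ and $\Sigma$-component $(\sigma,\tau)$. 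Your direct verification is what underlies the well-definedness of $\beta\odot\alpha$ (indeed it is essentially the second-axiom check inside the proof of Proposition~\ref{prop:sequential-morphisms-exist}, specialised to $\G'=\H'=\u(I)$); the paper's point in packaging it this way is to show that backward induction is nothing more than horizontal composition of states in the double category.
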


As its name suggests, the previous result is similar to the \emph{backward induction} method of game theory, although in a far more general form.
The intuition is that we first fix the strategy profile $\tau$ of $\H$, and then players in $\G$ reason as though players in $\H$ play $\tau$.

\begin{corollary}
	Let $\G_1$ and $\G_2$ be open games.
	Let $\sigma_1$ be a state of $\G_1$ over $k_1$, and let $\sigma_2$ be a state of $\G_2$ over $k_2$.
	Then $(\sigma_1, \sigma_2)$ is a state of $\G_1 \otimes \G_2$ over $k_1 \otimes k_2$.
\end{corollary}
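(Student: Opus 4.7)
The plan is to unfold the definition of a state applied to $\G_1 \otimes \G_2$ and show that each of the two best-response conditions generated by the definition of $\otimes$ reduces exactly to the hypothesis that $\sigma_i$ is a state of $\G_i$ over $k_i$. Concretely, we must show that for every $h : \V(\Phi_1 \otimes \Phi_2)$, $((\sigma_1, \sigma_2), (\sigma_1, \sigma_2)) \in \B_{\G_1 \otimes \G_2}(h, k_1 \otimes k_2)$. Using the isomorphism $\V(\Phi_1 \otimes \Phi_2) \cong \V(\Phi_1) \times \V(\Phi_2)$, write $h = (h_1, h_2)$. By the definition of $\B_{\G_1 \otimes \G_2}$, the goal splits into
\[ (\sigma_1, \sigma_1) \in \B_{\G_1}(L(\G_2(\sigma_2))((h_1, h_2), k_1 \otimes k_2)) \]
\[ (\sigma_2, \sigma_2) \in \B_{\G_2}(R(\G_1(\sigma_1))((h_1, h_2), k_1 \otimes k_2)) \]

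The main computation is to check that $L(\G_2(\sigma_2))((h_1, h_2), k_1 \otimes k_2) = (h_1, k_1)$, and symmetrically $R(\G_1(\sigma_1))((h_1, h_2), k_1 \otimes k_2) = (h_2, k_2)$. The history parts are immediate by definition, since $\pi_1(h_1, h_2) = h_1$ and $\pi_2(h_1, h_2) = h_2$. The interesting part is the continuation, which for $L$ unfolds to $(k_1 \otimes k_2) \circ (\Psi_1 \otimes (\G_2(\sigma_2) \circ h_2)) \circ r_{\Psi_1}^{-1}$. By bifunctoriality of $\otimes$ this rewrites as $k_1 \otimes (k_2 \circ \G_2(\sigma_2) \circ h_2)$, where the right-hand tensor factor is a lens $I \to I$; since $\hom_\Lens(I, I) \cong 1$, that factor equals $\id_I$. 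A short coherence argument for the unitor then gives $(k_1 \otimes \id_I) \circ r_{\Psi_1}^{-1} = k_1$ (modulo the canonical identification $I \otimes I \cong I$, which equals the unit of the monoidal structure on the nose here, since $(1,1) \otimes (1,1) = (1,1)$). The computation for $R$ is entirely analogous using the left unitor.

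Once these two identities are established, the two required best-response conditions become $(\sigma_1, \sigma_1) \in \B_{\G_1}(h_1, k_1)$ and $(\sigma_2, \sigma_2) \in \B_{\G_2}(h_2, k_2)$, both of which hold by the hypothesis that $\sigma_i$ is a state of $\G_i$ over $k_i$ (applied to the histories $h_i$).

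The only step that is not purely definition-chasing is verifying the continuation simplification. This is where one has to be careful: it is tempting to just invoke bifunctoriality, but one genuinely needs the observation that any lens $I \to I$ is trivial, plus one invocation of unitor naturality. Beyond that the proof is entirely routine, and it is a direct corollary of the construction of $\otimes$ on games that the tensor product of a state with a state is again a state.
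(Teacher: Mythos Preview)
Your proof is correct, but it takes a more hands-on route than the paper intends. The paper places this statement as an unproven corollary immediately after the representability theorem $\St \cong \hom_{\Game_v}(\u(I), -)$, and the intended one-line argument is purely abstract: states are morphisms out of the monoidal unit $\u(I)$, the tensor of morphisms $\alpha_1 \otimes \alpha_2 : \u(I) \otimes \u(I) \to \G_1 \otimes \G_2$ is already known to be a well-defined morphism of open games (Proposition~\ref{prop:tensor-morphisms}), and precomposing with the unit isomorphism $\u(I) \cong \u(I) \otimes \u(I)$ yields a state of $\G_1 \otimes \G_2$ whose $\t$-component is $k_1 \otimes k_2$. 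No unfolding of $L$, $R$, or $\B_{\G_1 \otimes \G_2}$ is needed.

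Your approach instead verifies the best-response condition directly, computing that $L(\G_2(\sigma_2))((h_1,h_2), k_1 \otimes k_2)$ collapses to $(h_1, k_1)$ and symmetrically for $R$. This is self-contained and does not depend on the representability theorem, which is a genuine advantage if one wants an elementary proof. The cost is that you redo by hand a special case of the coherence calculation already packaged into Proposition~\ref{prop:projection-lemma} and Proposition~\ref{prop:tensor-morphisms}; the paper's route is shorter precisely because that work was front-loaded. One small quibble: $(1,1) \otimes (1,1) = (1 \times 1, 1 \times 1)$ is not literally $(1,1)$, only canonically isomorphic, so your ``on the nose'' remark is slightly off, though it does not affect the argument.
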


\begin{definition}
	An open game $\G : \Phi \pto \Psi$ is called \emph{strategically trivial} if it satisfies the following two properties:
	\begin{itemize}
		\item $\Sigma (\G) \cong 1$
		\item $(*, *) \in \B_\G (h, k)$ for all contexts $(h, k) : \C (\Phi, \Psi)$
	\end{itemize}
\end{definition}

Since a strategically trivial open game $\G : (X, S) \pto (Y, R)$ is defined up to globular isomorphism by its lens $\G (*) : (X, S) \to (Y, R)$, the (horizontal) subcategory of strategically trivial games is equivalent to $\Lens$.
Given a lens $\lambda : (X, S) \to (Y, R)$, we abuse notation and write $\lambda : (X, S) \pto (Y, R)$ for the corresponding strategically trivial open game.
In particular, for functions $f : X \to Y$ and $g : R \to S$ we have an open game $(f, g) : (X, S) \pto (Y, R)$, and for each set $X$ there is a strategically trivial open game $\varepsilon : (X, X) \pto I$ corresponding to the counit lens \cite{hedges_coherence_lenses_open_games} with update function $u_\varepsilon (x, *) = x$.

\begin{proposition}
	Let $\G : \Phi \pto \Psi$ be a strategically trivial game.
	Then for every continuation $k : \K (\Psi)$, $\G$ has exactly one state over $k$.
	In particular, a strategically trivial effect $\G : \Phi \pto I$ has a unique state.
\end{proposition}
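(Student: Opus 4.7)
The plan is to unwind the two bullet points in the definition of \emph{strategically trivial} and observe that they make the state condition vacuous. First I would fix an arbitrary continuation $k : \K(\Psi)$ and note that because $\Sigma(\G) \cong 1$, there is a unique element $\sigma_0 : \Sigma(\G)$, so there is at most one candidate state over $k$, namely $(\sigma_0, k)$. This immediately gives the uniqueness half of the claim.

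For existence, I would verify that $(\sigma_0, k)$ really is a state. Unfolding the definition from Section \ref{sec:states}, this requires that $(\sigma_0, \sigma_0) \in \B_\G(h, k)$ for every $h : \V(\Phi)$. But this is exactly the second bullet in the definition of strategic triviality, applied to the context $(h, k)$. So the state condition holds automatically, without any real computation.

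For the \emph{in particular} clause, I would observe that by the representation $\K \cong \hom_\Lens(-, I)$ established earlier, $\K(I) \cong \hom_\Lens(I, I)$, and since $I = (1,1)$, a lens $I \to I$ consists of a pair of functions $1 \to 1$ and $1 \times 1 \to 1$, both of which are unique. Hence $\K(I)$ is a singleton, and combining this with the per-continuation uniqueness from the main statement yields a unique state overall.

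There is no real obstacle here: the work was already done in setting up the definitions, and the proposition is essentially a sanity check that the definition of strategic triviality matches the intuition that such games carry no strategic content. The only thing to be slightly careful about is not confusing uniqueness of the state (the pair $(\sigma, k)$) with uniqueness of the strategy, which is the relevant point when the target is not $I$.
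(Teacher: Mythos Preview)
Your argument is correct and is essentially the paper's proof unpacked: the paper just exhibits the unique morphism $\u(I) \to \G$ diagrammatically (with $\s = k \circ \G(*)$, $\t = k$, $\Sigma = \id_1$), relying implicitly on the two bullets of strategic triviality exactly as you spell them out. Your explicit treatment of the ``in particular'' clause via $\K(I) \cong 1$ is a detail the paper leaves to the reader.
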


\begin{proof}
	The unique morphism is
	\begin{center} \begin{tikzpicture}[node distance=3cm, auto]
		\node (A) {$I$}; \node (B) [right of=A] {$I$}; \node (C) [below of=A] {$\Phi$}; \node (D) [below of=B] {$\Psi$};
		\draw [-|->] (A) to node [above] {$\u (I)$} node [below] {$1$} (B); \draw [-|->] (C) to node [above] {$\G$} node [below] {$1$} (D);
		\draw [->] (C) to node {$k \circ \G (*)$} (A); \draw [->] (D) to node [right] {$k$} (B);
		\draw [double equal sign distance] (1.5, -.5) to (1.5, -2.5);
	\end{tikzpicture} \end{center}
	\vspace{-.75cm} \qedhere
\end{proof}

Note that the strategically trivial effects $(X, S) \pto I$ are `internal continuations', and are in bijection with functions $X \to S$.
These include the utility functions of game theory.

\begin{definition}
	For all sets $X$ and $Y$ we define an open game $\D_{X, Y} : (X, 1) \pto (Y, \R)$ (where $\R$ is the set of real numbers, interpreted as utility) called a \emph{decision}, as follows:
	\begin{itemize}
		\item $\Sigma (\D_{X, Y}) = X \to Y$
		\item $\D_{X, Y} (\sigma)$ is the unique lens $(X, 1) \to (Y, \R)$ with $\V (\D_{X, Y} (\sigma)) = \sigma$
		\item $(\sigma, \sigma') \in \D_{X, Y} (h, k)$ iff $k (\sigma' (h)) \geq k (y)$ for all $y : Y$
	\end{itemize}
\end{definition}

\begin{proposition}\label{prop:states-of-decisions}
	Let $X, Y$ be sets and $k : Y \to \R$.
	Then there is a bijective correspondence between
	\begin{itemize}
		\item States of $\D_{X, Y}$ over $k$
		\item Functions $X \to \arg\max (k)$, where $\arg\max (k) \subseteq Y$ is the set of maximising points of $k$
	\end{itemize}
	In particular, if $Y$ is finite then $\D_{X, Y}$ has at least one state over every $k$.
\end{proposition}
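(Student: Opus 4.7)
The plan is to simply unfold the definition of state for the specific open game $\D_{X,Y}$ and observe that the resulting condition is exactly ``$\sigma$ takes values in $\arg\max(k)$''. There is no genuine obstacle here — everything follows by chasing the definitions — so the task is really to present the unpacking clearly.

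First I would recall that $\s(\D_{X,Y}) = (X, 1)$, so $\V(\s(\D_{X,Y})) \cong X$, and hence a context for $\D_{X,Y}$ over the fixed continuation $k : Y \to \R$ is indexed by a history $h : X$. A state of $\D_{X,Y}$ over $k$ is, by definition, a strategy profile $\sigma : X \to Y$ such that $(\sigma, \sigma) \in \B_{\D_{X,Y}}(h, k)$ for every $h : X$. Plugging $\sigma' = \sigma$ into the best-response relation of $\D_{X,Y}$ gives the condition $k(\sigma(h)) \geq k(y)$ for all $y : Y$, i.e.\ $\sigma(h) \in \arg\max(k)$.

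Next I would conclude that $\sigma : X \to Y$ is a state over $k$ if and only if $\sigma$ factors through the inclusion $\arg\max(k) \hookrightarrow Y$, producing a natural bijection between states over $k$ and functions $X \to \arg\max(k)$. The bijection is concretely the (co)restriction of codomain, and its inverse postcomposes with the inclusion. For the final assertion, when $Y$ is finite and nonempty the real-valued function $k$ attains its maximum, so $\arg\max(k)$ is nonempty, and any constant function at a maximiser gives at least one state. (If one also wants to cover $Y = \emptyset$, one notes that $\D_{X,Y}$ has a state over $k$ iff $X = \emptyset$, in agreement with the bijection.)

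The main step whose subtlety is worth flagging is the first one — namely, identifying the history set with $X$ via the representation $\V \cong \hom_\Lens(I, -)$ so that the quantification over $h : \V(X,1)$ in the definition of state becomes quantification over $h : X$. After that, everything is a direct rewrite of the best-response clause in the definition of $\D_{X,Y}$.
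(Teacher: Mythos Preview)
Your proposal is correct and follows essentially the same approach as the paper: both simply unfold the definition of state for $\D_{X,Y}$ and observe that the resulting condition $k(\sigma(h)) \geq k(y)$ for all $h:X$, $y:Y$ says exactly that $\sigma$ lands in $\arg\max(k)$. Your version is more detailed---you spell out the identification $\V(X,1)\cong X$, the factoring through the inclusion, and the edge cases for the ``in particular'' clause---but the underlying argument is the same one-line unpacking.
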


\begin{proof}
	The pair $(\sigma, k)$ is a state of $\D_{X, Y}$ iff $k (\sigma (x)) \geq k (y)$ for all $x : X$ and $y : Y$, which is equivalent to the range of $\sigma$ being $\arg\max (k)$.
\end{proof}

The following theorem and its proof are essentially restatements of \cite[sections 3.1.2 -- 3.1.4]{hedges_towards_compositional_game_theory}.
This is one of the key connections between open games and classical game theory, characterising tensor products of decisions as normal form games.

\begin{theorem}\label{thm:normal-form-theorem}
	Let $Y_1, \ldots, Y_n$ be an indexed family of sets for $n \geq 1$, and let $k : \prod_{i = 1}^n Y_i \to \R^n$ be a function.
	Then there is a bijective correspondence between
	\begin{itemize}
		\item States of $\bigotimes_{i = 1}^n \D_{1, Y_i}$ over $k$
		\item Pure Nash equilibria of the $n$-player normal form game with outcome function $k$
	\end{itemize}
\end{theorem}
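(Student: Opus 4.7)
The plan is to unfold the definitions on both sides and match them, using Proposition~\ref{prop:states-of-decisions} as the single-decision base case. First, I identify the underlying bijection of sets: since each $\D_{1, Y_i}$ has strategy set $1 \to Y_i \cong Y_i$, we have $\Sigma\bigl(\bigotimes_{i=1}^n \D_{1, Y_i}\bigr) \cong \prod_{i=1}^n Y_i$, so a strategy profile of the tensor product is literally a pure strategy profile of the normal form game.

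Second, I unfold the state condition. The source diset of the tensor is $(1,1)$, so $\V((1,1)) \cong 1$ and the only history is $*$; the state condition for $(\sigma, k)$ therefore reduces to $(\sigma, \sigma) \in \B_{\bigotimes \D_{1, Y_i}}(*, k)$. Iterating the two-fold tensor best response definition (by induction on $n$), this is equivalent to requiring, for each $i \in \{1, \ldots, n\}$, that $(\sigma_i, \sigma_i) \in \B_{\D_{1, Y_i}}(*, k_i)$ for some continuation $k_i : Y_i \to \R$ derived by iteratively applying the operators $L$ and $R$.

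Third — the main computation — I show that $k_i(y) = \bigl(k(\sigma_1, \ldots, \sigma_{i-1}, y, \sigma_{i+1}, \ldots, \sigma_n)\bigr)_i$. Each application of $L(\G_j(\sigma_j))$ or $R(\G_j(\sigma_j))$ post-composes $k$ with a lens $\Psi_i \otimes (\cdots)$ (or $(\cdots) \otimes \Psi_i$) that plugs in $\sigma_j$ on the view side and projects $\R$ off on the update side; iterating these substitutes every $\sigma_j$ with $j \neq i$ into $k$ on views and leaves only the $i$-th coordinate of $\R^n$ on updates. Proposition~\ref{prop:projection-lemma} lets me handle the reassociation cleanly, so I never have to explicitly carry around associators and unitors during the induction. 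Once $k_i$ is computed, Proposition~\ref{prop:states-of-decisions} says $(\sigma_i, \sigma_i) \in \B_{\D_{1, Y_i}}(*, k_i)$ iff $\sigma_i \in \arg\max(k_i)$, i.e.\ $(k(\sigma))_i \geq (k(\sigma'_i, \sigma_{-i}))_i$ for all $\sigma'_i : Y_i$, and the conjunction over $i$ is precisely the Nash equilibrium condition.

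The main obstacle is step three: keeping the bookkeeping of the iterated $L$ and $R$ operators tractable. The cleanest route is to set up the induction so that at stage $i$ the context presented to the tail $\bigotimes_{j > i} \D_{1, Y_j}$ is of the expected form with the earlier $\sigma_{j < i}$ already substituted, and to cite Proposition~\ref{prop:projection-lemma} to collapse the tower of associators into a single projection that just reads off the $i$-th coordinate of $\R^n$. Once this computation is packaged, the rest of the proof is essentially a matter of definition-chasing.
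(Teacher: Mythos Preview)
Your proposal is correct and follows essentially the same route as the paper: both arguments proceed by induction on $n$, unfold the tensor best-response definition, and compute the derived continuations $k_L, k_R$ (your $k_i$) to reduce to the single-decision case. The only cosmetic difference is that the paper first characterises the \emph{entire} best-response relation of $\bigotimes_{i=1}^n \D_{1,Y_i}$ (for arbitrary $\sigma,\sigma'$) and then specialises to fixpoints, whereas you work with the fixpoint condition $(\sigma,\sigma)$ throughout; also, the paper computes $k_L,k_R$ by direct calculation rather than invoking Proposition~\ref{prop:projection-lemma}, and does not route through Proposition~\ref{prop:states-of-decisions} (which is really just the definition of $\B_{\D_{1,Y}}$ in this case).
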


\begin{proof}
	\hyperref[prf:normal-form-theorem]{See appendix.}
\end{proof}

Note that if $\alpha_1$ is a state of $\mathcal D_{I, Y_1}$ over $k_1$ and $\alpha_2$ is a state of $\D_{I, Y_2}$ over $k_2$ then $\alpha_1 \otimes \alpha_2$ is a state of $\mathcal D_{I, Y_1} \otimes \mathcal D_{I, Y_2}$ over $k_1 \otimes k_2$.
However, the continuation $k_1 \otimes k_2$ corresponds to a function $Y_1 \times Y_2 \to \R^2$ that is `separable', in the sense that $(k_1 \otimes k_2) (y_1, y_2) = (k_1 (y_1), k_2 (y_2))$, and hence the $\otimes$-separable states can only be equilibria of these separable games.
Such games are game-theoretically trivial in the sense that there is no strategic interaction between players, and so a game degenerates into a tuple of independent maximisation problems.
Only the $\otimes$-inseparable states of tensor products of decisions correspond to nontrivial Nash equilibria.

Despite the abstractions introduced in this paper the proof of the previous theorem is notably `manual', and so the reader might be left wondering what the purpose of the abstraction was.
The point of this theorem is that it allows us to talk about Nash equilibria purely internally in a categorical structure, which can be combined with the other abstractions we have introduced in order to later reason about Nash equilibria in purely abstract terms.

In the last section we will see both of these points illustrated: a Nash equilibrium of a bimatrix game must be introduced `atomically' since it cannot be composed from simpler components, but once it has been introduced, it can be further composed using purely categorical methods.

\section{Backward induction}\label{sec:backward-induction}

The previous theorem characterised Nash equilibria of normal form games in terms of states of open games.
In this section we do the same to Nash and subgame perfect equilibria of extensive form games of perfect information.

\begin{definition}
	Let $n \geq 1$ and let $X_1, \ldots, X_n$ be a sequence of sets.
	We define an open game
	\[ \D^\Delta_{X_1, \ldots, X_n} : \left( \prod_{i = 1}^{n - 1} X_i, \R^{n - 1} \right) \pto \left( \prod_{i = 1}^n X_i, \R^n \right) \]
	by the string diagram in figure \ref{fig:copy-decision}, where the black node is the (strategically trivial) copying open game
	\[ \left( \Delta_{\prod_{i = 1}^{n - 1} X_i}, 1 \right) : \left( \prod_{i = 1}^{n - 1} X_i, 1 \right) \pto \left( \prod_{i = 1}^{n - 1} X_i \times \prod_{i = 1}^{n - 1} X_i, 1 \right) \]
\end{definition}

\begin{figure}
	\begin{center} \begin{tikzpicture}
		\node [rectangle, minimum height=2cm, draw] (A) at (6, 0) {$\D_{\prod_{i = 1}^{n - 1} X_i, X_n}$};
		\node [circle, scale=.5, fill=black, draw] (m) at (3, .5) {};
		\node (X1) at (1.5, .5) {$\displaystyle \prod_{i = 1}^{n - 1} X_i$}; \node (R1) at (1.5, -.5) {$\R^{n - 1}$};
		\node (X2) at (9, 1.5) {$\displaystyle \prod_{i = 1}^{n - 1} X_i$}; \node (Xn) at (9, .5) {$X_n$}; \node (R2) at (9, -.5) {$\R^{n - 1}$}; \node (R) at (9, -1.5) {$\R$};
		\draw [->-] (X1) to (m); \draw [->-] (m) to [out=45, in=180] (X2); \draw [->-] (m) to [out=-45, in=180] (A); \draw [->-] (A.east |- Xn) to (Xn);
		\draw [->-] (R2) to [out=180, in=0] (6, -1.5) to [out=180, in=0] (R1); \draw [->-] (R) to [out=180, in=0] (A.east |- R2);
	\end{tikzpicture} \end{center}
	\caption{Definition of $\D^\Delta_{X_1, \ldots, X_n}$ in terms of $\D_{\prod_{i = 1}^{n - 1} X_i, X_n}$}
	\label{fig:copy-decision}
\end{figure}

It is routine to check that $\D^\Delta_{X_1, \ldots, X_n}$ is concretely given as follows, up to globular isomorphism.
The set of strategy profiles is
\[ \Sigma \left( \D^\Delta_{X_1, \ldots, X_n} \right) = \prod_{i = 1}^{n - 1} X_i \to X_n \]
Given $\sigma : \Sigma \left( \D^\Delta_{X_1, \ldots, X_n} \right)$, the lens $\D^\Delta_{X_1, \ldots, X_n} (\sigma)$ has
\[ v_{\D^\Delta_{X_1, \ldots, X_n} (\sigma)} : \prod_{i = 1}^{n - 1} X_i \to \prod_{i = 1}^n X_i, x \mapsto (x, \sigma (x)) \]
and
\[ u_{\D^\Delta_{X_1, \ldots, X_n} (\sigma)} : \prod_{i = 1}^{n - 1} X_i \times \R^n \to \R^{n - 1}, (x, r) \mapsto r_{-n} \]
For $h : \prod_{i = 1}^{n - 1} X_i$ and $k : \prod_{i = 1}^n X_i \to \R^n$, the best response relation
\[ (\sigma, \sigma') \in \B_{\D^\Delta_{X_1, \ldots, X_n}} (h, k) \]
holds iff
\[ k (h, \sigma' (h))_n \geq k (h, x_n)_n \]
for all $x_n : X_n$.

We now come to one of the key results of this paper: For an extensive form game, Nash equilibria correspond to states in general, and subgame perfect equilibria correspond to $\odot$-separable states.
Using this, we can use the category-theoretic notion of $\odot$-separability (which could moreover be visible in a surface diagram as a glued boundary) to \emph{define} the game-theoretic notion of subgame perfection for general open games.
We state and prove the theorem for the special case of finite sequential games (defined in section \ref{sec:game-theory}) for simplicity, but we will see an example in section \ref{sec:example} that is not in this special case.

\begin{theorem}\label{thm:extensive-form-theorem}
	Let $X_1, \ldots, X_n$ be an indexed family of sets for $n \geq 1$, and let $k : \prod_{i = 1}^n Y_i \to \R^n$.
	Then there is a bijective correspondence between:
	\begin{itemize}
		\item States of $\bigodot_{i = 1}^n \D^\Delta_{X_1, \ldots, X_i}$ over $k$
		\item Pure Nash equilibria of the $n$-player sequential game with outcome function $k$
	\end{itemize}
	Moreover there is a bijective correspondence between:
	\begin{itemize}
		\item States of $\bigodot_{i = 1}^n \D^\Delta_{X_1, \ldots, X_i}$ over $k$ that are of the form $\bigodot_{i = 1}^n \alpha_i$, where each $\alpha_i$ is a state of $\D^\Delta_{X_1, \ldots, X_i}$
		\item Subgame perfect equilibria of the $n$-player sequential game with outcome function $k$
	\end{itemize}
\end{theorem}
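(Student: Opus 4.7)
The approach is to unfold the composite $\H = \bigodot_{i=1}^n \G_i$ (writing $\G_i = \D^\Delta_{X_1, \ldots, X_i}$), read off the best-response condition for a state, and match both equilibrium notions. Iterating $\Sigma(\H \odot \G) = \Sigma(\G) \times \Sigma(\H)$ shows that $\Sigma(\H) = \prod_{i=1}^n(\prod_{j<i} X_j \to X_i)$, which is exactly the set of sequential-game strategy profiles. The view side of $\H(\sigma)$ sends the unique point $* : 1$ to the strategic extension $v^\sigma$, since each $\G_i$ extends the running history by appending $\sigma_i$; the update side strips payoff components one at a time, because the update function of $\D^\Delta_{X_1, \ldots, X_i}$ drops the $i$-th coordinate.

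I then apply $\B_{\H \odot \G}$ inductively to write $(\sigma, \sigma) \in \B_\H(*, k)$ as the conjunction over $i$ of $(\sigma_i, \sigma_i) \in \B_{\G_i}(h_i, k_i)$, where $h_i = (v^\sigma_1, \ldots, v^\sigma_{i-1})$ is the view produced by the upstream games under $\sigma$, and $k_i$ is $k$ post-composed with the downstream lenses $\G_n(\sigma_n) \cdots \G_{i+1}(\sigma_{i+1})$. The key calculation, combining the view and update descriptions above, is that the $i$-th component of $k_i(h_i, x_i) \in \R^i$ equals $k(v^\sigma_{(v^\sigma)_1^{i-1}, x_i})_i$. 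Unfolding $\B_{\G_i}$ via the explicit description of $\D^\Delta$ then reduces $(\sigma_i, \sigma_i) \in \B_{\G_i}(h_i, k_i)$ to the sequential-game Nash condition of Section~\ref{sec:game-theory}, giving the first bijection.

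For the second bijection, a $\odot$-separable state $\bigodot_{i=1}^n \alpha_i$ arises by iterated application of the backward-induction corollary: each $\alpha_i$ is a state of $\G_i$ over a continuation $k_i$, with $k_n = k$ and $k_i = k_{i+1} \circ \G_{i+1}(\sigma_{i+1})$. Unlike an arbitrary state of $\H$, being a state of $\G_i$ demands $(\sigma_i, \sigma_i) \in \B_{\G_i}(h_i, k_i)$ for \emph{every} $h_i \in \prod_{j<i} X_j$, not only the on-path one; the same calculation then translates this into the subgame-perfection condition that $\sigma_i(x_1, \ldots, x_{i-1})$ maximises $k(v^\sigma_{(x_1, \ldots, x_{i-1}, x_i)})_i$ over $x_i$ for all $(x_1, \ldots, x_{i-1})$. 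Conversely, any subgame-perfect equilibrium supplies compatible $\alpha_i$'s by the corollary. The main obstacle is the bookkeeping in step two — verifying that the iterated view gives the strategic extension while the iterated update strips components so that player $i$'s continuation evaluates at the $i$-th payoff coordinate — after which the two equilibrium conditions match by inspection.
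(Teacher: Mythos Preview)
Your proposal is correct and follows essentially the same route as the paper: you unfold the composite $\bigodot_{i=1}^n \D^\Delta_{X_1,\ldots,X_i}$ by induction to identify the strategy profiles, the view as the strategic extension $v^\sigma$, and the best-response condition as the conjunction of the per-player conditions with the on-path history and the downstream continuation, then match this against the classical Nash and subgame-perfect definitions. The paper carries out exactly this computation, with the same key observation for the second part that a state of each $\D^\Delta_{X_1,\ldots,X_i}$ imposes the best-response condition for \emph{all} histories rather than just the on-path one.
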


\begin{proof}
	\hyperref[prf:extensive-form-theorem]{See appendix.}
\end{proof}

The previous proof can be equivalently written as a \emph{proof by backward induction}, which is a proof by finite \emph{bar induction} on the tree of subgames.

In standard game theory, the \emph{representation} of a game is typically dissociated from its \emph{analysis}.
On any game of a given class (for example extensive-form games) one can define a variety of solution concepts (for example pure Nash equilibrium, mixed Nash equilibrium, subgame perfect equilibrium, correlated equilibrium, Bayesian Nash equilibrium).
A curious fact about open games is that the representation of a model as an open game automatically `builds in' a particular solution concept, namely pure Nash equilibrium.
This is because the $\odot$ and $\otimes$ operators make essential use of Nash-like reasoning, in which the players in each component act \emph{as though} they know which strategies will be played in the other component.

In particular, it has proven to be difficult to characterise subgame perfect equilibria, despite the fact that the \emph{representation} of open games appears to be dynamic (that is, has a temporal component; subgame perfect equilibrium is specifically intended for dynamic games).
A failed attempt was made in \cite{hedges_towards_compositional_game_theory} by modifying the operator $\odot$, there called $\circ_{SP}$, but it fails to form a symmetric monoidal category with $\otimes$, and hence does not support a string diagram language.
(See the conclusion section of \cite{hedges_towards_compositional_game_theory}.)

A direct solution is given in \cite{ghani_kupke_lambert_forsberg_compositional_treatment_iterated_open_games} by defining a `subgame perfection operator' that modifies an open game's solution concept from Nash to subgame perfect equilibrium.
From a practical perspective their solution is similar to the one in this paper, offering flexibility between Nash and subgame perfect equilibrium.
Ours has the specific advantage that it separates the representation of a game from its analysis, as in standard game theory.

%
%

\section{Limits and colimits of lenses}\label{sec:limits}

We will begin, for completeness of presentation, by considering limits of lenses.
However, we will not use them in this paper.

Let $\left< \V, \K \right> : \Lens \to \Set \times \Set^\op$ be the universal morphism
\begin{center} \begin{tikzpicture}[node distance=3cm, auto]
	\node (A) at (0, 2) {$\Lens$}; \node (B) at (2, 0) {$\Set \times \Set^\op$}; \node (C) at (2, -2.5) {$\Set$}; \node (D) at (5, 0) {$\Set^\op$};
	\draw [->] (B) to node {$\pi_1$} (C); \draw [->] (B) to node {$\pi_2$} (D);
	\draw [->] (A) to [out=-90, in=180] node {$\V$} (C); \draw [->] (A) to [out=0, in=90] node {$\K$} (D);
	\draw [->, dashed] (A) to node {$\left< \V, \K \right>$} (B);
\end{tikzpicture} \end{center}
Then $\left< \V, \K \right> \dashv (-, -)$, since the left adjoint acts on disets by
\[ \left< \V, \K \right> (X, S) = (X, X \to S) \]
and there are natural isomorphisms
\begin{align*}
	&\hom_{\Set \times \Set^\op} ((X, X \to S), (Y, R)) \\
	=\ &(X \to Y) \times (R \to (X \to S)) \\
	\cong\ &(X \to Y) \times (X \times R \to S) \\
	=\ &\hom_\Lens ((X, S), (Y, R)) \qedhere
\end{align*}

Since $(-, -)$ is a right adjoint, it preserves limits.
Since limits in $\Set \times \Set^\op$ are computed pointwise from limits and colimits in $\Set$, this allows us to compute various limits in $\Lens$.
For example, products in $\Lens$ are given by
\[ \prod_{i : I} (X_i, S_i) = \left( \prod_{i : I} X_i, \coprod_{i : I} S_i \right) \]

More generally, since $(X \times) : \Set \to \Set$ preserves colimits, $\cokl (X \times)$ has all limits, and so the fibres $\V^{-1} (X) \cong \cokl (X \times)^\op$ have all limits.
Moreover, we can show that the reindexing functors $f^* : \V^{-1} (Y) \to \V^{-1} (X)$ preserve limits.
We can then apply \cite[exercise 9.2.4]{jacobs-categorical-logic-type-theory} to deduce that $\Lens$ has all limits and $\V$ preserves them.

As an application of this, since $\Lens$ has pullbacks we can define a category of symmetric lenses as spans in $\Lens$.
This contrasts with monomorphic lenses, in which pullbacks do not exist in general unless the put-get law is imposed \cite{johnson_rosebrugh_spans_lenses}.
This can be used to give a definition of \emph{symmetric lawless lenses}.
It may be possible to apply this to develop a more symmetrical theory of open games, where the horizontal category $\Game$ is generalised from teleological \cite{hedges_coherence_lenses_open_games} to $\dag$-compact closed.

The simplest example of this is that any two objects over $0$ in $\Lens$ are uniquely isomorphic, and every object over $0$ is initial.

Next we come to colimits. 
For each set $S$, let $F_S : \Lens \to \Set$ be the functor given on objects by $F_S (Y, R) = Y \times (R \to S)$.
Then $(-, S) \dashv F_S$, since there are natural isomorphisms
\begin{align*}
	&\hom_\Lens ((X, S), (Y, R)) \\
	=\ &(X \to Y) \times (X \times R \to S) \\
	=\ &X \to Y \times (R \to S) \\
	=\ &\hom_\Set (X, F_S (Y, R)) \qedhere
\end{align*}

Since $(-, S)$ is a left adjoint, it preserves colimits.
In particular, $\Lens$ has coproducts of the form
\[ \coprod_{i : I} (X_i, S) = \left( \coprod_{i : I} X_i, S \right) \]
For example, $(1, 0)$ is terminal in $\Lens$.

\begin{proposition}\label{prop:product-games}
	The product in $\Game_v$ of a family of open games
	\[ \G_i : (X_i, S) \pto (Y_i, R) \]
	is
	\[ \prod_{i : I} \G_i : \left( \coprod_{i : I} X_i, S \right) \pto \left( \coprod_{i : I} Y_i, R \right) \]
	given as follows.
	The set of strategy profiles is
	\[ \Sigma \left( \prod_{i : I} \G_i \right) = \prod_{i : I} \Sigma (\G_i) \]
	The lenses are given by
	\[ \left( \prod_{i : I} \G_i \right) (\sigma) = \coprod_{i : I} \G_i (\sigma_i) \]
	Noting that
	\[ \V \left( \coprod_{i : I} X_i, S \right) \cong \coprod_{i : I} X_i \]
	the best response relation
	\[ (\sigma, \sigma') \in \B_{\prod_{i : I} \G_i} (\iota_j (h), k) \]
	holds iff
	\[ (\sigma_j, \sigma'_j) \in \B_{\G_j} (h, k \circ \iota_j) \]
	The projections $\pi_j : \prod_{i : I} \G_i \to \G_j$ are given by
	\begin{center} \begin{tikzpicture} [node distance=3cm, auto]
		\node (A) at (0, 0) {$\displaystyle \left( \coprod_{i : I} X_i, S \right)$}; \node (B) at (5, 0) {$\displaystyle \left( \coprod_{i : I} Y_i, R \right)$};
		\node (C) [below of=A] {$(X_j, S)$}; \node (D) [below of=B] {$(Y_j, R)$};
		\draw [-|->] (A) to node [above] {$\displaystyle \prod_{i : I} \G_i$} node [below] {$\displaystyle \prod_{i : I} \Sigma (\G_i)$} (B);
		\draw [-|->] (C) to node [above] {$\G_j$} node [below] {$\Sigma (\G_j)$} (D);
		\draw [->] (C) to node [right] {$(\iota_j, S)$} (A); \draw [->] (D) to node [right] {$(\iota_j, R)$} (B);
		\draw [->] (2.5, -1) to node {$\pi_j$} (2.5, -2.4);
	\end{tikzpicture} \end{center}
\end{proposition}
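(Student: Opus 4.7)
The plan is to verify three things: that $\prod_{i:I} \G_i$ is a well-formed open game, that each projection $\pi_j$ is a morphism of open games, and that together they satisfy the universal property of products in $\Game_v$. The verification decomposes along the three independent components of a morphism (source lens, target lens, strategy function), each living in its own base category ($\Lens$, $\Lens$, $\Set$), so the relevant universal properties are cotupling in $\Lens$ (twice) and tupling in $\Set$.

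First I would check well-formedness. The lens $(\prod_{i:I} \G_i)(\sigma)$ is obtained by cotupling the composites $(\iota_i, R) \circ \G_i(\sigma_i) : (X_i, S) \to (\coprod_i Y_i, R)$ via the coproduct $(\coprod_i X_i, S) = \coprod_i (X_i, S)$ established earlier in this section. Since $\V(\coprod_i X_i, S) \cong \coprod_i X_i$ is a disjoint union, every history has a unique presentation $h' = \iota_j(h)$, so the best-response clause determines a well-defined relation. Verifying that each projection $\pi_j$ is a morphism is then routine: its commuting square reduces, by the defining property of the cotuple, to the identity $(\prod_i \G_i)(\sigma) \circ (\iota_j, S) = (\iota_j, R) \circ \G_j(\sigma_j)$, and its best-response axiom is precisely the definition of $\B_{\prod_i \G_i}$ on contexts split through $\iota_j$.

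For the universal property, given a family $\beta_i : \H \to \G_i$ with $\H : \Phi \pto \Psi$, I would define the mediator $\beta : \H \to \prod_{i:I} \G_i$ component-wise: take $\s(\beta) : (\coprod_i X_i, S) \to \Phi$ and $\t(\beta) : (\coprod_i Y_i, R) \to \Psi$ to be the cotuples in $\Lens$ of the $\s(\beta_i)$ and $\t(\beta_i)$ respectively, and $\Sigma(\beta) : \Sigma(\H) \to \prod_i \Sigma(\G_i)$ to be the tuple of the $\Sigma(\beta_i)$ in $\Set$. The equality $\pi_j \circ \beta = \beta_j$ and the uniqueness of $\beta$ subject to it then follow component-wise from the uniqueness clauses of these three universal constructions. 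The commuting-square axiom at $\sigma : \Sigma(\H)$ is checked by pre-composing with each $(\iota_j, S)$: both sides then collapse to the commuting square for $\beta_j$, so equality as lenses out of $(\coprod_i X_i, S)$ follows from the uniqueness of the cotuple.

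The main obstacle will be the best-response axiom for $\beta$. An arbitrary split context for $\beta$ has source lens $\s(\beta) \circ h'$ with $h' : \V(\coprod_i X_i, S) \cong \coprod_i X_i$, uniquely of the form $\iota_j(h)$; the cotuple identity $\s(\beta) \circ (\iota_j, S) = \s(\beta_j)$ rewrites this as $(\s(\beta_j) \circ h, k)$. From $(\sigma, \sigma') \in \B_\H(\s(\beta_j) \circ h, k)$ the morphism axiom for $\beta_j$ supplies $(\Sigma(\beta_j)(\sigma), \Sigma(\beta_j)(\sigma')) \in \B_{\G_j}(h, k \circ \t(\beta_j))$, and the definition of $\B_{\prod_i \G_i}$ combined with the cotuple identity $\t(\beta) \circ (\iota_j, R) = \t(\beta_j)$ transports this into the required relation on the context $(\iota_j(h), k \circ \t(\beta))$ for $\prod_i \G_i$, completing the verification.
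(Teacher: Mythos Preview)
Your proposal is correct and follows essentially the same route as the paper: you define the mediating morphism component-wise via cotupling in $\Lens$ for $\s$ and $\t$ and tupling in $\Set$ for $\Sigma$, and you verify the best-response axiom by writing an arbitrary history as $\iota_j(h)$, invoking the cotuple identity on the source side to reduce to the hypothesis of $\beta_j$'s morphism axiom, and then using the cotuple identity on the target side together with the definition of $\B_{\prod_i \G_i}$ to conclude. The paper's proof is organised identically, differing only in notation (it writes $\alpha_i$ for your $\beta_i$) and in omitting your explicit well-formedness check.
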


\begin{proof}
	\hyperref[prf:product-games]{See appendix.}
\end{proof}

The product $\G_1 \times \G_2$ is an \emph{external choice}, in which the decision of whether $\G_1$ or $\G_2$ is played is determined by the history.
This is strongly reminiscent of products in categories in game semantics \cite{abramsky_mccusker_game_semantics} and additive conjunction in linear logic.
In particular, it is typical of products in game semantics that \emph{strategies} compose by cartesian product, and \emph{plays} compose by disjoint union.
This analogy suggests the following correspondence between game semantics and compositional game theory: The \emph{player} of game semantics corresponds to the $n \geq 0$ noncooperative players in an open game, and the \emph{opponent} of game semantics corresponds to the context $(h, k)$.

\begin{proposition}\label{thm:product-states}
	Let $\G_i : (X_i, R) \pto (Y_i, S)$ be a family of open games.
	Then a state of $\prod_{i : I} \G_i$ consists, up to isomorphism, of a state for each $\G_i$.
\end{proposition}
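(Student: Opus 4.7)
The plan is to leverage the representation theorem $\St \cong \hom_{\Game_v}(\u(I), -)$ from the previous section. Since representable functors preserve limits, and products are limits, we immediately obtain a chain of natural isomorphisms
\[ \St\left( \prod_{i : I} \G_i \right) \cong \hom_{\Game_v}\left( \u(I), \prod_{i : I} \G_i \right) \cong \prod_{i : I} \hom_{\Game_v}(\u(I), \G_i) \cong \prod_{i : I} \St(\G_i). \]
This one-line argument is the cleanest route, and it can be taken as the proof if one is content with a purely abstract statement.

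To verify that this abstract bijection matches the intuitive content of the proposition, I would unpack it using the explicit description of $\prod_{i : I} \G_i$ given in Proposition \ref{prop:product-games}. A state of $\prod_{i : I} \G_i$ over $k : \K(\coprod_{i : I} Y_i, R)$ is a strategy profile $\sigma \in \prod_{i : I} \Sigma(\G_i)$ satisfying $(\sigma, \sigma) \in \B_{\prod_i \G_i}(h, k)$ for every $h \in \V(\coprod_{i : I} X_i, S) \cong \coprod_{i : I} X_i$. Every such $h$ has the form $\iota_j(h_j)$ for a unique $j$ and $h_j \in X_j$, and by the definition of the best response relation of the product, this condition is equivalent to $(\sigma_j, \sigma_j) \in \B_{\G_j}(h_j, k \circ \iota_j)$.

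The final step is to note that a continuation $k : (\coprod_{i : I} Y_i) \to R$ corresponds via the universal property of the coproduct to a family $(k_i)_{i : I}$ with $k_i = k \circ \iota_i \in \K(\G_i)$. Under this identification, the condition that $(\sigma, k)$ be a state of $\prod_{i : I} \G_i$ splits across $I$ into the independent conditions that $(\sigma_j, k_j)$ be a state of $\G_j$ for every $j$. Assembling this bijection into a natural isomorphism recovers the abstract argument above.

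The main obstacle is essentially nil, since the proposition is a formal consequence of representability; the only care required is in verifying that histories really do decompose cleanly along the coproduct, which is exactly the content of $\V(\coprod_{i : I} X_i, S) \cong \coprod_{i : I} X_i$ and the componentwise definition of $\B$ for the product. Both of these are already recorded in Proposition \ref{prop:product-games}, so the proof reduces to bookkeeping.
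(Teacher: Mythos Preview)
Your proposal is correct and takes essentially the same approach as the paper: the paper's proof is precisely your one-line representability argument, invoking that $\hom_{\Game_v}(\u(I),-)$ preserves limits to obtain $\hom_{\Game_v}(\u(I),\prod_{i:I}\G_i)\cong\prod_{i:I}\hom_{\Game_v}(\u(I),\G_i)$. Your additional unpacking via Proposition~\ref{prop:product-games} is not in the paper but is a correct elaboration of the same isomorphism.
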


\begin{proof}
	Since $\hom_{\Game_v} (\u (I), -)$ preserves limits,
	\[ \hom_{\Game_v} \left( \u (I), \prod_{i : I} \G_i \right) \cong \prod_{i : I} \hom_{\Game_v} (\u (I), \G_i) \]
\end{proof}

\section{Example: Market entry game}\label{sec:example}

We illustrate the previous three sections by considering the market entry game, a standard example of game theory in microeconomics \cite[example 9.B.3]{mas_collel_etal_microeconomic_theory}.
This game is played between two players: a prospective \emph{entrant} $E$ into a market, and an \emph{incumbent} $I$ in that market.
In the first stage, $E$ has the choice to either enter the market, or immediately quit.
If $E$ chooses to quit the game ends immediately.
If $E$ chooses to enter, in the second stage $E$ and $I$ simultaneously choose to either \emph{fight} or \emph{accommodate} the other (which could mean, for example, setting a low or high price).

The extensive form representation is depicted in figure \ref{fig:market-entry-game}.
The dotted line denotes that the two connected nodes share an information set, meaning that the firm $I$ does not know which of the two nodes they are in; in this way, the right hand subtree represents a simultaneous game.
(More precisely, the word `simultaneous' means the choices are made independently, but the physical timing might or might not be simultaneous.)

\begin{figure}
	\begin{center} \begin{tikzpicture}
		\node (A) at (2, 6) {$E$}; \node (B) at (0, 4) {$(0, 2)$}; \node (C) at (4, 4) {$E$};
		\node (D) at (2, 2) {$I$}; \node (E) at (6, 2) {$I$};
		\node (F) at (1, 0) {$(-3, -1)$}; \node (G) at (3, 0) {$(1, -2)$}; \node (H) at (5, 0) {$(-2, -1)$}; \node (I) at (7, 0) {$(3, 1)$};
		\draw [-] (A) to node [left] {$Q$} (B); \draw [-] (A) to node [right] {$C$} (C); \draw [-] (C) to node [left] {$F$} (D); \draw [-] (C) to node [right] {$A$} (E);
		\draw [-] (D) to node [left] {$F$} (F); \draw [-] (D) to node [right] {$A$} (G); \draw [-] (E) to node [left] {$F$} (H); \draw [-] (E) to node [right] {$A$} (I);
		\draw [-, dashed] (D) to (E);
	\end{tikzpicture} \end{center}
	\caption{Extensive form representation of market entry game}
	\label{fig:market-entry-game}
\end{figure}

The feature of this game that is specifically awkward for approaches based on theoretical computer science is that the extensive-form tree is unbalanced.
Other approaches use dependent types to allow types of choices to dependent on earlier values, but can still only allow more general `dependent subgames' such as in the market entry game using encoding tricks such as dummy moves, and using large negative utilities to rule out certain plays.
(Examples of game theory developed in a dependent type system include \cite{lescanne12,botta13}.)
An external choice operator solves the more general problem of dependent subgames in an elegant way.

We represent the market entry game as an open game as follows.
The decision of the first player (corresponding to the root node in figure \ref{fig:market-entry-game}) is modelled as a utility-maximising decision
\[ \D_{1, 1 + 1} : I \pto (1 + 1, \R) \]
where the choice $\iota_1 (*)$ represents not entering the market, and $\iota_2 (*)$ represents entering the market.

The left subgame is represented by
\[ (1, c_0) : (1, \R) \pto I \]
where $c_0 : 1 \to \R$ is the constant function with $c_0 (*) = 0$.
Notice that the utility of $2$ for player $I$ in the left subgame has a clear economic interpretation as the profit for the incumbent firm, but it plays no role game-theoretically, and it never appears in the open game representation.
(We could however add an additional utility-maximising player in the left subgame with only a single choice, which has no effect on the equilibrium analysis but is more faithful to the economic situation.)
As a strategically trivial effect, $(1, c_0)$ has a unique state $\alpha_L$, given by the unique lenses $\s (\alpha_L) : (1, \R) \to I$, $\t (\alpha_L) : I \to I$ and function $\Sigma (\alpha_L) : 1 \to 1$.

The right subgame is an open game
\[ \G : (1, \R) \pto I \]
which can be built compositionally from a pair of utility-maximising decisions $\D_{1, X} : I \pto (X, \R)$, where $X = \{ F, A \}$ is the set containing the choice to fight or accommodate, and the utility function $U : X \times X \to \R^2$.
This is represented by the string diagram in figure \ref{fig:market-entry-string}.

\begin{figure}
	\begin{center} \begin{tikzpicture}
		\node [isosceles triangle, isosceles triangle apex angle=90, shape border rotate=180, minimum width=2cm, draw] (D1) at (0, 0) {$\mathcal D_{1, X}$};
		\node [isosceles triangle, isosceles triangle apex angle=90, shape border rotate=180, minimum width=2cm, draw] (D2) at (-3, 0) {$\mathcal D_{1, X}$};
		\node [trapezium, trapezium left angle=0, trapezium right angle=75, shape border rotate=90, trapezium stretches=true, minimum height=1cm, minimum width=2cm, draw] (U) at (3, 1.5) {$U$};
		\node [circle, scale=0.5, fill=black, draw] (m) at (0, -2) {};
		\node (R) at (-3.5, -3) {$\R$};
		\node (d1) at (0, -.5) {}; \node (d2) at (0, .5) {}; \node (d3) at (0, 2.5) {}; \node (d4) at (0, 3.5) {}; \node (d5) at (0, 1) {}; \node (d6) at (0, 2) {};
		\draw [->-] (D1.east |- d2) to [out=0, in=180] node [above] {$X$} (U.west |- d5);
		\draw [->-] (D2.east |- d2) to [out=0, in=180] node [above] {$X$} (U.west |- d6);
		\draw [->-] (U.east |- d5) to [out=0, in=90] (4, .5) to [out=-90, in=0] node [above] {$\R$} (D1.east |- d1);
		\draw [->-] (U.east |- d6) to [out=0, in=90] (4.5, .5) to [out=-90, in=0] node [below] {$\R$} (m);
		\draw [->-] (m) to [out=135, in=0] node [above, near end] {$\R$} (D2.east |- d1); \draw [->-] (m) to [out=-135, in=0] (R);
	\end{tikzpicture} \end{center}
	\caption{String diagram for the right subgame $\G$}
	\label{fig:market-entry-string}
\end{figure}

By the coherence theorem for teleological categories \cite{hedges_coherence_lenses_open_games} this diagram defines $\G$ up to (in general non-globular) isomorphism.
Algebraically, in the terminology of this paper it is given by
\begin{align*}
	\varepsilon_\R &\odot (\u (\R, 1) \otimes \varepsilon_\R \otimes \u (1, \R)) \\
	&\odot (U, \id_{\R^2}) \\
	&\odot (\u (X, 1) \otimes (\id_1, \Delta_\R)) \\
	&\odot (\D_{1, X} \otimes \u (1, \R))
\end{align*}
Concretely, $\G$ is given as follows:
\begin{itemize}
	\item $\Sigma (\G) = X^2$
	\item $\G (\sigma) : (1, \R) \to I$ is the lens with $u_{\G (\sigma)} (*, *) = U (\sigma)_1$, i.e. the utility for player $E$ given strategy profile $\sigma$ for the subgame
	\item $(\sigma, \sigma') \in \B_\G (*, *)$ iff $U (\sigma'_1, \sigma_2)_1 \geq U (\overline{\sigma'_1}, \sigma_2)_1$ and $U (\sigma_1, \sigma'_2)_2 \geq U (\sigma_1, \overline{\sigma'_2})_2$, where $\overline{\ -\ } : X \to X$ gives the other choice, i.e. $\B_\G (*, *)$ is the best response function for the right simultaneous game
\end{itemize}


%

The right subgame has a unique Nash equilibrium $(A, A)$ in which both players accommodate.
Hence $\G$ has a unique state $\alpha_2$ with $\Sigma (\alpha) (*) = (A, A)$, in which the lens $\s (\alpha_2) : (1, \R) \to I$ has $u_{\s (\alpha_2)} (*, *) = 3$.
This is the payoff $U (A, A)_1$ for player $E$ in the Nash equilibrium.

The product $(1, c_0) \times \G : (1 + 1, \R) \pto (1 + 1, 1)$ is concretely given as follows:
\begin{itemize}
	\item $\Sigma ((1, c_0) \times \G) \cong X \times X$
	\item $((1, c_0) \times \G) (\sigma)$ is the lens $(1 + 1, \R) \to (1 + 1, 1)$ with view function
	\[ v_{((1, c_0) \times \G) (\sigma)} = \id_{1 + 1} \]
	and update function
	\[ u_{((1, c_0) \times \G) (\sigma)} (h, *) = \begin{cases}
		0 &\text{ if } h = \iota_1 (*) \\
		U (\sigma) &\text{ if } h = \iota_2 (*)
	\end{cases} \]
	\item The best response relation $(\sigma, \sigma') \in \B_{(1, c_0) \times \G} (h, k)$ iff either $h = \iota_1 (*)$, or $U (\sigma'_1, \sigma_2)_1 \geq U (\overline{\sigma'_1}, \sigma_2)_1$ and $U (\sigma_1, \sigma'_2)_2 \geq U (\sigma_1, \overline{\sigma'_2})_2$
\end{itemize}

Unlike the subgames $(1, c_0)$ and $\G$, the product $(1, c_0) \times \G$ cannot be seen as directly corresponding to a game in the classical sense.
By proposition \ref{thm:product-states}, $\left< \alpha_L, \alpha_2 \right>$ is the unique state of $(1, c_0) \times \G$, which is over the unique lens $\t (\left< \alpha_L, \alpha_R \right>) : (1 + 1, 1) \to I$.
It has $\Sigma (\left< \alpha_L, \alpha_R \right>) (*) = (A, A)$, and $\s (\left< \alpha_L, \alpha_R \right>) : (1 + 1, \R) \to I$ is the lens with
\[ u_{\s (\left< \alpha_L, \alpha_R \right>)} (h, *) = \begin{cases}
	0 &\text{ if } h = \iota_1 (*) \\
	3 &\text{ if } h = \iota_2 (*)
\end{cases} \]
Call this lens $\lambda$.

Since the previous function has a single maximising point, namely $\iota_2 (*)$, by proposition \ref{prop:states-of-decisions} the decision $\D_{1, 1 + 1}$ has a unique state $\alpha$ over $\lambda$.
This state has $\Sigma (\alpha) (*) : 1 \to 1 + 1$ given by $\Sigma (\alpha) (*) (*) = \iota_2 (*)$.

The open game
\[ \H = ((1, c_0) \times \G) \odot \D_{1, 1 + 1} : I \pto (1 + 1, 1) \]
will be our representation of the original market entry game.
It is concretely given, up to globular isomorphism, as follows.
The set of strategy profiles is
\[ \Sigma (\H) = (1 + 1) \times X^2 \]
consisting of a strategy for $E$ in the first round, and a strategy for both players in the subgame in which $E$ enters in the first round.
This is the same as the set of pure strategy profiles of the original extensive-form game.
(As in classical game theory, player $E$ is required to choose a contingent strategy for the second round, even if the strategy in the first round is to quit.)
$\H$ can be straightforwardly made into a scalar $I \pto I$ by postcomposing with the (unique) strategically trivial game $(1 + 1, 1) \pto I$.

The lens $\H (\sigma_1, \sigma_2, \sigma_3) : I \to (1 + 1, 1)$ has view function
\[ v_{\H (\sigma_1, \sigma_2, \sigma_3)} (*) = \sigma_1 \]

The best response relation $\B_\H (*, k)$ is the same as the best response relation for the market entry game given by classical game theory.
Concretely, the relation
\[ (\sigma, \sigma') \in \B_\H (*, k) \]
holds iff the following three conditions hold:
\begin{itemize}
	\item If $\sigma'_1 = \iota_1 (*)$ then $0 \geq U (\sigma_2, \sigma_3)_1$, and if $\sigma'_1 = \iota_2 (*)$ then $0 \leq U (\sigma_2, \sigma_3)_1$
	\item If $\sigma_1 = \iota_1 (*)$ then $U (\sigma'_2, \sigma_3)_1 \geq U (\overline{\sigma'_2}, \sigma_3)_1$
	\item If $\sigma_1 = \iota_1 (*)$ then $U (\sigma_2, \sigma'_3)_2 \geq U (\sigma_2, \overline{\sigma'_3})_2$
\end{itemize}

By the previous reasoning $\H$ has a unique $\odot$-separable state, namely $\alpha = \left< \alpha_L, \alpha_R \right> \odot \alpha_1$.
This corresponds to the unique subgame perfect equilibrium of the market entry game, namely that the entrant enters the market (choice $C$ or $\iota_2 (*)$), and then both players accommodate (choice $A$).
$\H$ has two additional states that are not $\odot$-separable, corresponding to the non-subgame-perfect Nash equilibria of the market entry game (or, equivalently, to the two additional fixpoints of $\B_\H (*, k)$).
One of these chooses the strategy profile $(\iota_1 (*), A, F)$, the other, $(\iota_1 (*), F, F)$.

\bibliographystyle{plainurl}
\bibliography{\string~/Dropbox/Work/refs}

\begin{thebibliography}{10}

\bibitem{abramsky_mccusker_game_semantics}
Samson Abramsky and Guy McCusker.
\newblock Game semantics.
\newblock In {\em Computational logic}, volume 165 of {\em NATO ASI Series}.
  Springer, 1999.
\newblock \href {http://dx.doi.org/10.1007/978-3-642-58622-4_1}
  {\path{doi:10.1007/978-3-642-58622-4_1}}.

\bibitem{botta13}
Nicola Botta, Cezar Ionescu, and Edwin Brady.
\newblock Sequential decision problems, dependently typed solutions.
\newblock In {\em Proeedings of {PLMMS}13}, 2013.

\bibitem{escardo11}
Martin Escard\'o and Paulo Oliva.
\newblock Sequential games and optimal strategies.
\newblock {\em Proceedings of the Royal Society A}, 467:1519--1545, 2011.
\newblock \href {http://dx.doi.org/10.1098/rspa.2010.0471}
  {\path{doi:10.1098/rspa.2010.0471}}.

\bibitem{escardo12}
Martin Escard\'o and Paulo Oliva.
\newblock Computing {N}ash equilibria of unbounded games.
\newblock {\em Proceedings of the Turing centenary conference}, 2012.

\bibitem{fong_algebra_open_interconnected_systems}
Brendan Fong.
\newblock {\em The algebra of open and interconnected systems}.
\newblock PhD thesis, University of Oxford, 2016.

\bibitem{fudenberg91}
Drew Fudenberg and Jean Tirole.
\newblock {\em Game theory}.
\newblock MIT Press, 1991.

\bibitem{ghani_kupke_lambert_forsberg_compositional_treatment_iterated_open_games}
Neil Ghani, Clemens Kupke, Alasdair Lambert, and Fredrik~Nordvall Forsberg.
\newblock A compositional treatment of iterated open games.
\newblock arXiv:1711.07968, 2017.

\bibitem{hedges_towards_compositional_game_theory}
Jules Hedges.
\newblock {\em Towards compositional game theory}.
\newblock PhD thesis, Queen Mary University of London, 2016.

\bibitem{hedges_coherence_lenses_open_games}
Jules Hedges.
\newblock Coherence for lenses and open games.
\newblock arXiv:1704.02230, 2017.

\bibitem{jacobs-categorical-logic-type-theory}
Bart Jacobs.
\newblock {\em Categorical logic and type theory}.
\newblock Studies in logic and the foundations of mathematics. Elsevier, 1999.

\bibitem{johnson_rosebrugh_spans_lenses}
Michael Johnson and Robert Rosebrugh.
\newblock Spans of lenses.
\newblock In {\em Proceedings of the Workshops of the EDBT/ICDT 2014 Joint
  Conference}, volume 1133 of {\em CEUR Workshop Proceedings}, pages 112--118,
  2014.

\bibitem{kelly_street_review_elements_2_categories}
G.~M. Kelly and Ross Street.
\newblock Review of the elements of 2-categories.
\newblock In {\em Category seminar}, volume 420 of {\em Lecture notes in
  mathematics}. Springer, 1974.
\newblock \href {http://dx.doi.org/10.1007/BFb0063101}
  {\path{doi:10.1007/BFb0063101}}.

\bibitem{lescanne12}
Pierre Lescanne and Matthieu Perrinel.
\newblock ``{B}ackward'' coinduction, {N}ash equilibrium and the rationality of
  escalation.
\newblock {\em Acta Informatica}, 49(3):117--137, 2012.

\bibitem{mas_collel_etal_microeconomic_theory}
Andreu Mas-Colell, Michael Whinston, and Jerry Green.
\newblock {\em Microeconomic theory}.
\newblock Oxford University Press, 1995.

\bibitem{pickering_gibbons_wu_profunctor_optics}
Matthew Pickering, Jeremy Gibbons, and Nicolas Wu.
\newblock Profunctor optics: {M}odular data accessors.
\newblock {\em The art, science and engineering of programming}, 1(2), 2017.
\newblock \href {http://dx.doi.org/10.22152/programming-journal.org/2017/1/7}
  {\path{doi:10.22152/programming-journal.org/2017/1/7}}.

\bibitem{shulman_framed_bicategories_monoidal_fibrations}
Michael Shulman.
\newblock Framed bicategories and monoidal fibrations.
\newblock {\em Theory and applications of categories}, 20(18):650--738, 2008.

\bibitem{shulman10}
Michael Shulman.
\newblock Constructing symmetric monoidal bicategories.
\newblock arXiv:1004.0993, 2010.

\end{thebibliography}

\onecolumn
\appendix

\begin{proof}[Proof (proposition \ref{prop:view-fibration})]\label{prf:view-fibration}
	It is trivial that $\V$ is a functor.
	
	We define the following cleavage for $\V$.
	Given a function $f : X \to Y$ we define $f^* (Y, R) = (X, R)$.
	The cartesian lifting $\overline f (Y, R) : (X, R) \to (Y, R)$ is the lens defined by $v_{\overline f (Y, R)} = f$ and $u_{\overline f (Y, R)} (x, r) = r$.
	
	To show that $\overline f (Y, R)$ is indeed cartesian, let $\lambda : (W, T) \to (Y, R)$ be a lens such that $v_\lambda = f \circ g$ for some function $g : W \to X$.
	Define another lens $\mu : (W, T) \to (X, S)$ by $v_\mu = g$ and $u_\mu = u_\lambda$.
	Then $\overline f (Y, R) \circ \mu = \lambda$, because
	\[ v_{\overline f (Y, R) \circ \mu} = v_{\overline f (Y, R)} \circ v_\mu = f \circ g = v_\lambda \]
	and
	\[ u_{\overline f (Y, R) \circ \mu} (w, r) = u_\mu (w, u_{\overline f (Y, R)} (v_\mu (w), r)) = u_\lambda (w, r) \qedhere \]
\end{proof}

\begin{proof}[Proof (proposition \ref{prop:fibres-of-view})]\label{prf:fibres-of-view}
	The fibre $\V^{-1} (X)$ by definition has as objects bisets $(X, S)$ and as morphisms lenses $\lambda : (X, S) \to (X, R)$ with $v_\lambda = \id_X$.
	On the other hand objects of $\cokl (X \times)^\op$ are sets, and morphisms are
	\[ \hom_{\cokl (X \times)^\op} (S, R) = \hom_{\cokl (X \times)} (R, S) = X \times R \to S \]
	Therefore it can directly be seen that the functor $\V^{-1} (X) \to \cokl (X \times)^\op$ given by $(X, S) \mapsto S$ and $\lambda \mapsto u_\lambda$ is an isomorphism.
	
	By definition, the reindexing functor $f^* : \V^{-1} (Y) \to \V^{-1} (X)$ acts on objects by $f^* (Y, S) = (X, S)$, and takes a lens $\lambda : (Y, S) \to (Y, R)$ with $v_\lambda = \id_X$ to $f^* (\lambda) : (X, S) \to (X, R)$ with $v_{f^* (\lambda)} = \id_X$ and $u_{f^* (\lambda)} (x, r) = u_\lambda (f (x), r)$.
	This $f^* (\lambda)$ is the unique lens making the diagram
	\begin{center} \begin{tikzpicture}[node distance=3cm, auto]
		\node (A) {$(X, S)$}; \node (B) [right of=A] {$(Y, S)$}; \node (C) [below of=A] {$(X, R)$}; \node (D) [below of=B] {$(Y, R)$};
		\draw [->] (A) to node {$\overline f (Y, S)$} (B); \draw [->] (C) to node {$\overline f (Y, R)$} (D); \draw [->] (A) to node {$f^* (\lambda)$} (C); \draw [->] (B) to node {$\lambda$} (D);
	\end{tikzpicture} \end{center}
	commute.
	On the other hand, the functor $(f^*)^\op : \cokl (Y \times)^\op \to \cokl (X \times)^\op$ has maps
	\[ (f^*)^\op : \hom_{\cokl (Y \times)^\op} (S, R) \to \hom_{\cokl (X \times)^\op} (S, R) \]
	given by
	\[ (f^*)^\op (u) (x, r) = u (f (x), r) \]
	These can be directly seen to be equal under the isomorphism.
\end{proof}

\begin{proof}[Proof (proposition \ref{prop:multiple-fibration})]\label{prf:multiple-fibration}
	It is trivial that $\s$, $\t$ and $\Sigma$ are functors.
	We prove that $\s$ is an opfibration, with the others being similar.
	
	Let $\lambda : \Phi \to \Psi$ be a lens, so $\lambda : \hom_{\Lens^\op} (\Psi, \Phi)$, and let $\G$ be an open game with $\s (\G) = \Psi$.
	Let $\lambda_! (\G) : \Phi \pto \t (\G)$ be the open game defined by $\Sigma (\lambda_! (\G)) = \Sigma (\G)$, $\lambda_! (\G) (\sigma) = \G (\sigma) \circ \lambda$ and $\B_{\lambda_! (\G)} (c) = \B_\G (c)$.
	$\underline \lambda (\G) : \G \to \lambda_! (\G)$ is the morphism defined by $\s (\underline \lambda (\G)) = \lambda$, $\t (\underline \lambda (\G)) = \id_{\t (\G)}$ and $\Sigma (\underline \lambda (\G)) = \id_{\Sigma (\G)}$.
	The morphism axioms can be easily checked.
	$\underline \lambda (\G)$ is $\t$-vertical and $\Sigma$-vertical by construction.
	
	Let $\alpha : \G \to \H$ be a morphism such that $\s (\alpha) = \lambda \circ \mu$ for some lens $\mu : \s (\H) \to \Phi$ (hence $\s (\alpha) = \mu \circ \lambda$ for $\mu : \hom_{\Lens^\op} (\Phi, \s (\H))$).
	Then there is a morphism $\beta : \lambda_! (\G) \to \H$ defined by $\s (\beta) = \mu$, $\t (\beta) = \t (\alpha)$ and $\Sigma (\beta) = \Sigma (\alpha)$.
	The morphism axioms for $\beta$ follow from those for $\alpha$, together with $\lambda \circ \mu = \s (\alpha)$.
	Then
	\[ \s (\beta \circ \underline \lambda (\G)) = \s (\underline \lambda (\G)) \circ \s (\beta) = \lambda \circ \mu = \s (\alpha) \]
	and
	\[ \t (\beta \circ \underline \lambda (\G)) = \s (\underline \lambda (\G)) \circ \t (\beta) = \id_{\t (\G)} \circ \t (\alpha) = \t (\alpha) \]
	and
	\[ \Sigma (\beta \circ \underline \lambda (\G)) = \Sigma (\beta) \circ \Sigma (\underline \lambda (\G)) = \Sigma (\alpha) \circ \id_{\Sigma (\G)} = \Sigma (\alpha) \]
	Therefore $\beta \circ \underline \lambda (\G) = \alpha$.
	This situation is illustrated in figure \ref{fig:multiple-fibration}.
\end{proof}

\begin{figure}
	\begin{center} \begin{tikzpicture}[node distance=3cm, auto]
		\node (Y) {$\Psi$}; \node (tG1) [right of=Y] {$\t (\G)$};
		\node (X) [below of=Y] {$\Phi$}; \node (tG2) [right of=X] {$\t (\G)$};
		\node (sH) [below of=X] {$\s (\H)$}; \node (tH) [right of=sH] {$\t (\H)$};
		\draw [-|->] (Y) to node [above] {$\G$} node [below] {$\Sigma (\G)$} (tG1);
		\draw [-|->] (X) to node [above] {$\lambda_! (\G)$} node [below] {$\Sigma (\G)$} (tG2);
		\draw [-|->] (sH) to node [above] {$\H$} node [below] {$\Sigma (\H)$} (tH);
		\draw [->] (X) to node [right] {$\lambda$} (Y); \draw [double equal sign distance] (tG2) to (tG1);
		\draw [->] (sH) to node [right] {$\mu$} (X); \draw [->] (tH) to node [right] {$\t (\alpha)$} (tG2);
		\draw [double equal sign distance] (1.5, -.5) to (1.5, -2.5); \draw [->] (1.5, -3.5) to node {$\Sigma (\alpha)$} (1.5, -5.5);
	\end{tikzpicture} \end{center}
	\caption{Illustration of proposition \ref{prop:multiple-fibration}}
	\label{fig:multiple-fibration}
\end{figure}

\begin{proof}[Proof (proposition \ref{prop:sequential-morphisms-exist})]\label{prf:sequential-morphisms-exist}
	The first axiom can be checked by diagram pasting:
	\begin{center} \begin{tikzpicture}[node distance=3cm, auto]
		\node (X) {$\Phi$}; \node (Y) [right of=X] {$\Psi$}; \node (Z) [right of=Y] {$\Theta$};
		\node (X') [below of=X] {$\Phi'$}; \node (Y') [below of=Y] {$\Psi'$}; \node (Z') [below of=Z] {$\Theta'$};
		\draw [->] (X) to node {$\G (\sigma)$} (Y); \draw [->] (Y) to node {$\H (\tau)$} (Z);
		\draw [->] (X') to node {$\G' (\Sigma (\alpha) (\sigma))$} (Y'); \draw [->] (Y') to node {$\H' (\Sigma (\beta) (\tau))$} (Z');
		\draw [->] (X') to node [right] {$\s (\alpha)$} (X); \draw [->] (Y') to node [left] {$\t (\alpha)$} node [right] {$= \s (\beta)$} (Y); \draw [->] (Z') to node [right] {$\t (\beta)$} (Z);
	\end{tikzpicture} \end{center}
	
	For the second axiom, let $(h, k) : \C (\Phi', \Theta)$ be a context for $\beta \odot \alpha$ and let $\sigma, \sigma' : \Sigma_\G$ and $\tau, \tau' : \Sigma_\H$ be strategy profiles.
	Suppose
	\[ ((\sigma, \tau), (\sigma', \tau')) \in \B_{\H \odot \G} (\s (\alpha) \circ h, k) \]
	so
	\[ (\sigma, \sigma') \in \B_\G (\s (\alpha) \circ h, k \circ \H (\tau)) \]
	and
	\begin{align*}
		(\tau, \tau') &\in \B_\H (\G (\sigma) \circ \s (\alpha) \circ h, k) \\
		&= \B_\H (\t (\alpha) \circ \G' (\Sigma (\alpha) (\sigma)) \circ h, k) \\
		&= \B_\H (\s (\beta) \circ \G' (\Sigma (\alpha) (\sigma)) \circ h, k)
	\end{align*}
	Then
	\begin{align*}
		(\Sigma (\alpha) (\sigma), \Sigma (\alpha) (\sigma')) &\in \B_{\G'} (h, k \circ \H (\tau) \circ \t (\alpha)) \\
		&= \B_{\G'} (h, k \circ \H (\tau) \circ \s (\beta)) \\
		&= \B_{\G'} (h, k \circ \t (\beta) \circ \H' (\Sigma (\beta) (\tau)))
	\end{align*}
	and
	\[ (\Sigma (\beta) (\tau), \Sigma (\beta) (\tau')) \in \B_{\H'} (\G' (\Sigma (\alpha) (\sigma)) \circ h, k \circ \t (\beta)) \]
	Therefore
	\[ ((\Sigma (\alpha) (\sigma), \Sigma (\beta) (\tau)), (\Sigma (\alpha) (\sigma'), \Sigma (\beta) (\tau'))) \in \B_{\H' \odot \G'} (h, k \circ \t (\beta)) \]
	and we are done.
\end{proof}

\begin{proof}[Proof (proposition \ref{prop:odot-bifunctor})]\label{prf:odot-bifunctor}
	This amounts to the following distributivity law: let $\Phi \overset\G\pto \Psi \overset\H\pto \Theta$, $\Phi' \overset{\G'}\pto \Psi' \overset{\H'}\pto \Theta'$ and $\Phi'' \overset{\G''}\pto \Psi'' \overset{\H''}\pto \Theta''$ be open games, and let $\G \xrightarrow\alpha G' \xrightarrow{\alpha'} \G''$ and $\H \xrightarrow\beta \H' \xrightarrow{\beta'} \H''$ be refinements such that $\t (\alpha) = \s (\beta)$ and $\ (\alpha') = \s (\beta')$.
	Then
	\[ (\beta' \odot \alpha') \circ (\beta \odot \alpha) = (\beta' \circ \beta) \odot (\alpha' \circ \alpha) \]
	
	For the source component,
	\begin{align*}
		\s ((\beta' \odot \alpha') \circ (\beta \odot \alpha)) &= \s (\beta' \odot \alpha') \circ \s (\beta \odot \alpha) \\
		&= \s (\alpha') \circ \s (\alpha) \\
		&= \s (\alpha' \circ \alpha) \\
		&= \s ((\beta' \circ \beta) \odot (\alpha' \circ \alpha))
	\end{align*}
	and similarly for the target component.
	For the best response component,
	\begin{align*}
		\Sigma ((\beta' \odot \alpha') \circ (\beta \odot \alpha)) (\sigma, \tau) =\ &\Sigma (\beta' \odot \alpha') (\Sigma (\beta \odot \alpha) (\sigma, \tau)) \\
		=\ &\Sigma (\beta' \odot \alpha') (\Sigma (\alpha) (\sigma), \Sigma (\beta) (\tau)) \\
		=\ &(\Sigma (\alpha') (\Sigma (\alpha) (\sigma)), \Sigma (\beta') (\Sigma (\beta) (\tau))) \\
		=\ &(\Sigma (\alpha' \circ \alpha) (\sigma), \Sigma (\beta' \circ \beta) (\tau)) \\
		=\ &(\Sigma ((\beta' \circ \beta) \odot (\alpha' \circ \alpha)) (\sigma, \tau)) \qedhere
	\end{align*}
\end{proof}

\begin{proof}[Proof (proposition \ref{thm:double-category})]\label{prf:double-category}
	It remains to prove that the following two diagrams (Mac Lane triangle and pentagon) commute:
	\begin{center} \begin{tikzpicture}[auto]
		\node (A) at (0, 0) {$(\H \odot \u (\t (\G))) \odot \G$}; \node (B) at (5, 0) {$\H \odot (\u (\t (\G)) \odot \G)$}; \node (C) at (2.5, -2) {$\H \odot \G$};
		\draw [->] (A) to node {$\a_{\G, \u_Y, \H}$} (B); \draw [->] (A) to node [left] {$\r_\H \odot \G$} (C); \draw [->] (B) to node {$\H \odot \l_\G$} (C);
	\end{tikzpicture} \begin{tikzpicture}[auto]
		\node (A) at (2.5, 0) {$((\mathcal J \odot \I) \odot H) \odot \G$};
		\node (B) at (0, -2.5) {$(\mathcal J \odot (\I \odot \H)) \odot \G$}; \node (C) at (0, -5) {$\mathcal J \odot ((\I \odot \H) \odot \G)$};
		\node (D) at (5, -2.5) {$(\mathcal J \odot \I) \odot (\H \odot \G)$}; \node (E) at (5, -5) {$\mathcal J \odot (\I \odot (\H \odot \G))$};
		\draw [->] (A) to node [left] {$\a_{\mathcal J, I, \H} \odot \G$} (B); \draw [->] (B) to node [left] {$\a_{\mathcal J, \I \odot \H, \G}$} (C);
		\draw [->] (A) to node {$\a_{\mathcal J \odot \I, \H, \G}$} (D); \draw [->] (D) to node {$\a_{\mathcal J, \I, \H \odot \G}$} (E);
		\draw [->] (C) to node {$\mathcal J \odot \a_{\I, \H, \G}$} (E);
	\end{tikzpicture} \end{center}
	Each composition path results in a globular morphism.
	Hence equality of the refinements follows from equality between the $\Sigma$ components, which are obtained from the corresponding Mac Lane axioms for the cartesian monoidal category $\Set$.
\end{proof}

\begin{proof}[Proof (proposition \ref{prop:projection-lemma})]\label{prf:projection-lemma}
	The first component
	\begin{center} \begin{tikzpicture}[node distance=3cm, auto]
		\node (A) {$\V (\Xi)$}; \node (B) [right of=A] {$\V (\Xi) \times \V (\Xi')$}; \node (C) [right of=B] {$\V (\Xi')$};
		\node (D) [below of=A] {$\V (\Phi)$}; \node (E) [below of=B] {$\V (\Phi) \times \V (\Phi')$}; \node (F) [below of=C] {$\V (\Phi')$};
		\draw [->] (B) to node [above=5pt] {$\pi_1$} (A); \draw [->] (B) to node [above=5pt] {$\pi_2$} (C);
		\draw [->] (E) to node [above] {$\pi_1$} (D); \draw [->] (E) to node {$\pi_2$} (F);
		\draw [->] (A) to node {$\V (\kappa)$} (D); \draw [->] (B) to node {$\V (\kappa) \times \V (\kappa')$} (E); \draw [->] (C) to node {$\V (\kappa')$} (F);
	\end{tikzpicture} \end{center}
	holds because $\pi_1, \pi_2$ are (isomorphic to) projections from a cartesian product of sets.
	
	For the second component, let $h : \V (\Xi \otimes \Xi')$.
	We reason that
	\[ (\Theta \otimes (\mu' \circ \lambda' \circ \kappa' \circ \pi_2 (h))) \circ r_\Psi^{-1} \circ \mu = (\Psi \otimes (\lambda' \circ \pi_2 ((\kappa \otimes \kappa') \circ h))) \circ r_\Psi^{-1} \circ (\mu \otimes \mu') \]
	from which commutativity of the left-hand square follows.
	(The proof for the right-hand square is symmetric.)
		
	See figure \ref{fig:projection-lemma}.
	Commutativity of the triangle labelled $(*)$ follows from chasing $h$ around the right-hand square of the first component.
	Commutativity of the top square is naturality of $r^{-1}$, and the other regions of the diagram commute by functorality of $\otimes$.
\end{proof}

\begin{figure*}
	\begin{center} \begin{tikzpicture}[node distance=4cm, auto]
		\node (A) {$\Psi$}; \node (B) [right of=A] {$\Theta$};
		\node (C) [below of=A] {$\Psi \otimes I$}; \node (D) [below of=B] {$\Theta \otimes I$};
		\node (E) at (-2, -8) {$\Psi \otimes \Phi'$}; \node (F) [right of=E] {$\Psi \otimes \Xi'$}; \node (G) [right of=F] {$\Theta \otimes \Xi'$};
		\node (H) at (0, -12) {$\Psi \otimes \Psi'$}; \node (I) [right of=H] {$\Theta \otimes \Theta'$};
		\draw [->] (A) to node {$\mu$} (B); \draw [->] (B) to node {$r_\Theta^{-1}$} (D); \draw [->] (A) to node {$r_\Psi^{-1}$} (C); \draw [->] (C) to node {$\mu \otimes I$} (D);
		\draw [->] (C) to node [left] {$\Psi \otimes \pi_2 ((\kappa \otimes \kappa') \circ h)$} (E); \draw [->] (C) to node {$\Psi \otimes \pi_2 (h)$} (F); \draw [->] (E) to node {$\Psi \otimes \kappa'$} (F);
		\draw [->] (D) to node {$\Theta \otimes \pi_2 (h)$} (G); \draw [->] (F) to node {$\mu \otimes \Xi'$} (G);
		\draw [->] (E) to node {$\Psi \otimes \lambda'$} (H); \draw [->] (F) to node {$\Psi \otimes (\lambda' \circ \kappa')$} (H);
		\draw [->] (G) to node {$\Theta \otimes (\mu' \circ \lambda' \circ \kappa')$} (I); \draw [->] (H) to node {$\mu \otimes \mu'$} (I);
		\node at (0, -6.5) {$(*)$};
	\end{tikzpicture} \end{center}
	\caption{Commuting diagram used to prove proposition \ref{prop:projection-lemma}}
	\label{fig:projection-lemma}
\end{figure*}

\begin{proof}[Proof (proposition \ref{prop:tensor-morphisms})]\label{prf:tensor-morphisms}
	For the first axiom, commutativity of the square
	\begin{center} \begin{tikzpicture}[node distance=3cm, auto]
		\node (X) at (0, 0) {$\s (\G_1) \otimes \s (\G_2)$}; \node (Y) at (6, 0) {$\t (\G_1) \otimes \t (\G_2)$};
		\node (X') [below of=X] {$\s (\G_1') \otimes \s (\G_2')$}; \node (Y') [below of=Y] {$\t (\G_1') \otimes \t (\G_2')$};
		\draw [->] (X) to node {$\G_1 (\sigma) \otimes \G_2 (\tau)$} (Y);
		\draw [->] (X') to node [above=5pt] {$\G_1' (\Sigma (\alpha_1) (\sigma)) \otimes \G_2' (\Sigma (\alpha_2) (\tau))$} (Y');
		\draw [->] (X') to node [right] {$\s (\alpha_1) \otimes \s (\alpha_2)$} (X); \draw [->] (Y') to node [left] {$\t (\alpha_1) \otimes \t (\alpha_2)$} (Y);
	\end{tikzpicture} \end{center}
	follows from the first axioms of $\alpha_1$ and $\alpha_2$.
	
	For the second axiom, let $c : \C (\s (\G_1' \otimes \G_2'), \t (\G_1 \otimes \G_2))$ be a context for $\alpha_1 \otimes \alpha_2$, and let $\sigma_1, \sigma_1' : \Sigma (\G_1)$ and $\sigma_2, \sigma_2' : \Sigma (\G_2)$ be strategy profiles such that
	\[ ((\sigma_1, \sigma_2), (\sigma_1', \sigma_2')) \in \B_{\G_1 \otimes \G_2} (\C (\s (\alpha_1 \otimes \alpha_2), \t (\G_1 \otimes \G_2)) (c)) \]
	Then
	\begin{align*}
		(\sigma_1, \sigma_1') \in\ &\B_{\G_1} (L (\G_2 (\sigma_2)) (\C (\s (\alpha_1 \otimes \alpha_2), \t (\G_1 \otimes \G_2)) (c))) \\
		=\ &\B_{\G_1} (\C (\s (\alpha_1), \t (\G_1)) (L (\G_2 (\sigma_2) \circ \s (\alpha_2)) (c))) \\
		=\ &\B_{\G_1} (\C (\s (\alpha_1), \t (\G_1)) (L (\t (\alpha_2) \circ \G_2' (\Sigma (\alpha_2) (\sigma_2))) (c)))
	\end{align*}
	and
	\begin{align*}
		(\sigma_2, \sigma_2') \in\ &\B_{\G_2} (R (\G_1 (\sigma_1)) (\C (\s (\alpha_1 \otimes \alpha_2), \t (\G_1 \otimes \G_2)) (c))) \\
		=\ &\B_{\G_2} (\C (\s (\alpha_2), \t (\G_2)) (R (\G_1 (\sigma_1) \circ \s (\alpha_1)) (c))) \\
		=\ &\B_{\G_2} (\C (\s (\alpha_2), \t (\G_2)) (R (\t (\alpha_1) \circ \G_1' (\Sigma (\alpha_1) (\sigma_1))) (c)))
	\end{align*}
	Therefore
	\begin{align*}
		(\Sigma (\alpha_1) (\sigma_1), \Sigma (\alpha_1) (\sigma_1')) \in\ &\B_{\G_1'} (\C (\s (\G_1'), \t (\alpha_1)) (L (\t (\alpha_2) \circ \G_2' (\Sigma (\alpha_2) (\sigma_2))) (c))) \\
		=\ &\B_{\G_1'} (L (\G_2' (\Sigma (\alpha_2) (\sigma_2))) (\C (\s (\G_1' \otimes  \G_2'), \t (\alpha_1 \otimes \alpha_2)) (c)))
	\end{align*}
	and
	\begin{align*}
		(\Sigma (\alpha_2) (\sigma_2), \Sigma (\alpha_2) (\sigma_2')) \in\ &\B_{\G_2'} (\C (\s (\G_2'), \t (\alpha_2)) (R (\t (\alpha_1) \circ \G_1' (\Sigma (\alpha_1) (\sigma_1))) (c))) \\
		=\ &\B_{\G_2'} (R (\G_1' (\Sigma (\alpha_1) (\sigma_1))) (\C (\s (\G_1' \otimes \G_2'), \t (\alpha_1 \otimes \alpha_2)) (c)))
	\end{align*}
	These combine to give
	\[ ((\Sigma (\alpha_1) (\sigma_1), \Sigma (\alpha_2) (\sigma_2)), (\Sigma (\alpha_1) (\sigma_1'), \Sigma (\alpha_2) (\sigma_2'))) \in \B_{\G_1' \otimes \G_2'} (\C (\s (\G_1' \otimes \G_2'), \t (\alpha_1 \otimes \alpha_2)) (c)) \]
	as required.
\end{proof}

\begin{proof}[Proof (proposition \ref{prop:identitor-distributor-isomorphisms})]\label{prf:identitor-distributor-isomorphisms}
	The first is trivial, with $\Sigma_{\mathfrak U_{X_1, X_2}} : 1 \to 1 \times 1, * \mapsto (*, *)$.

	For the second we take
	\begin{align*}
		\Sigma_\mathfrak X : (\Sigma (\G_1) \times \Sigma (\G_2)) \times (\Sigma (\H_1) \times \Sigma (\H_2)) &\to (\Sigma (\G_1) \times \Sigma (\H_1)) \times (\Sigma (\G_2) \times \Sigma (\H_2)), \\
		((\sigma_1, \sigma_2), (\tau_1, \tau_2)) &\mapsto ((\sigma_1, \tau_1), (\sigma_2, \tau_2))
	\end{align*}
	The first axiom follows from bifunctorality of $\otimes$ on $\Lens$:
	\[ (\H_1 (\tau_1) \otimes \H_2 (\tau_2)) \circ (\G_1 (\sigma_1) \otimes \G_2 (\sigma_2)) = (\H_1 (\tau_1) \circ \G_1 (\sigma_1)) \otimes (\H_2 (\tau_2) \circ \G_2 (\sigma_2)) \] 
	For the second axiom we calculate:
	\begin{align*}
		&(((\sigma_1, \sigma_2), (\tau_1, \tau_2)), ((\sigma_1', \sigma_2'), (\tau_1', \tau_2'))) \in \B_{(\H_1 \otimes \H_2) \odot (\G_1 \otimes \G_2)} (c) \\
		\iff\ &((\sigma_1, \sigma_2), (\sigma_1', \sigma_2')) \in \B_{\G_1 \otimes \G_2} (\mathbb C (\id_{X_1 \otimes X_2}, \H_1 (\tau_1) \otimes \H_2 (\tau_2)) (c)) \\
		\text{ and } &((\tau_1, \tau_2), (\tau_1', \tau_2')) \in \B_{\H_1 \otimes \H_2} (\mathbb C (\G_1 (\sigma_1) \otimes \G_2 (\sigma_2), \id_{Z_1 \otimes Z_2}) (c)) \\
		\iff\ &(\sigma_1, \sigma_1') \in \B_{\G_1} (L (\G_2 (\sigma_2)) (\mathbb C (\id_{X_1 \otimes X_2}, \H_1 (\tau_1) \otimes \H_2 (\tau_2)) (c))) \\
		\text{ and } &(\sigma_2, \sigma_2') \in \B_{\G_2} (R (\G_1 (\sigma_1)) (\mathbb C (\id_{X_1 \otimes X_2}, \H_1 (\tau_1) \otimes \H_2 (\tau_2)) (c))) \\
		\text{ and } &(\tau_1, \tau_1') \in \B_{\H_1} (L (\H_2 (\tau_2)) (\mathbb C (\G_1 (\sigma_1) \otimes \G_2 (\sigma_2), \id_{Z_1 \otimes Z_2}) (c))) \\
		\text{ and } &(\tau_2, \tau_2') \in \B_{\H_2} (R (\H_1 (\tau_1)) (\mathbb C (\G_1 (\sigma_1) \otimes \G_2 (\sigma_2), \id_{Z_1 \otimes Z_2}) (c))) \\
		\iff\ &(\sigma_1, \sigma_1') \in \B_{\G_1} (\mathbb C (\id_{X_1}, \H_1 (\tau_1)) (L (\H_2 (\tau_2) \circ \G_2 (\sigma_2)) (c))) \\
		\text{ and } &(\tau_1, \tau_1') \in \B_{\H_1} (\mathbb C (\G_1 (\sigma_1), \id_{Z_1}) (L (\H_2 (\tau_2) \circ \G_2 (\sigma_2)) (c))) \\
		\text{ and } &(\sigma_2, \sigma_2') \in \B_{\G_2} (\mathbb C (\id_{X_2}, \H_2 (\tau_2)) (R (\H_1 (\tau_1) \circ \G_1 (\sigma_1)) (c))) \\
		\text{ and } &(\tau_2, \tau_2') \in \B_{\H_2} (\mathbb C (\G_2 (\sigma_2), \id_{Z_2}) (R (\H_1 (\tau_1) \circ \G_1 (\sigma_1)) (c))) \\
		\iff\ &((\sigma_1, \tau_1), (\sigma_1', \tau_1')) \in \B_{\H_1 \odot \G_1} (L (\H_2 (\tau_2) \circ \G_2 (\sigma_2)) (c)) \\
		\text{ and } &((\sigma_2, \tau_2), (\sigma_2', \tau_2')) \in \B_{\H_2 \odot \G_2} (R (\H_1 (\tau_1) \circ \G_1 (\sigma_1)) (c)) \\
		\iff\ &(((\sigma_1, \tau_1), (\sigma_2, \tau_2)), ((\sigma_1', \tau_1'), (\sigma_2', \tau_2'))) \in \B_{(\H_1 \odot \G_1) \otimes (\H_2 \odot \G_2)} (c)
	\end{align*}
	The relevant equalities between contexts, which all follow from proposition \ref{prop:projection-lemma}, are summarised in figure \ref{fig:identitor-distributor-isomorphisms}.
\end{proof}

\begin{figure}
	\begin{center} \begin{tikzpicture} [node distance=6cm, auto]
		\node (CC) at (0, 0) {$\mathbb C (X_1 \otimes X_2, Z_1 \otimes Z_2)$};
		\node (CL) at (-6, 0) {$\mathbb C (X_1, Z_1)$}; \node (CR) at (6, 0) {$\mathbb C (X_2, Z_2)$};
		\node (TC) at (0, 3) {$\mathbb C (X_1 \otimes X_2, Y_1 \otimes Y_2)$}; \node (BC) at (0, -3) {$\mathbb C (Y_1 \otimes Y_2, Z_1 \otimes Z_2)$};
		\node (TL) [left of=TC] {$\mathbb C (X_1, Y_1)$}; \node (TR) [right of=TC] {$\mathbb C (X_2, Y_2)$};
		\node (BL) [left of=BC] {$\mathbb C (Y_1, Z_1)$}; \node (BR) [right of=BC] {$\mathbb C (Y_2, Z_2)$};
		\draw [->] (CC) to node {$L (\H_2 (\tau_2) \circ \G_2 (\sigma_2))$} (CL); \draw [->] (CC) to node {$R (\H_1 (\tau_1) \circ \G_1 (\sigma_1))$} (CR);
		\draw [->] (TC) to node {$L (\G_2 (\sigma_2))$} (TL); \draw [->] (TC) to node {$R (\G_1 (\sigma_1))$} (TR);
		\draw [->] (BC) to node {$L (\H_2 (\tau_2))$} (BL); \draw [->] (BC) to node {$R (\H_1 (\tau_1))$} (BR);
		\draw [->] (CC) to node [right] {$\mathbb C (\id_{X_1 \otimes X_2}, \H_1 (\tau_1) \otimes \H_2 (\tau_2))$} (TC); \draw [->] (CC) to node {$\mathbb C (\G_1 (\sigma_1) \otimes \G_2 (\sigma_2), \id_{Z_1 \otimes Z_2})$} (BC);
		\draw [->] (CL) to node [right] {$\mathbb C (\id_{X_1}, \H_1 (\tau_1))$} (TL); \draw [->] (CL) to node {$\mathbb C (\G_1 (\sigma_1), \id_{Z_1})$} (BL);
		\draw [->] (CR) to node [right] {$\mathbb C (\id_{X_2}, \H_2 (\tau_2))$} (TR); \draw [->] (CR) to node {$\mathbb C (\G_2 (\sigma_2), \id_{Z_2})$} (BR);
	\end{tikzpicture} \end{center}
	\caption{Instances of proposition \ref{prop:projection-lemma} used in proof of proposition \ref{prop:identitor-distributor-isomorphisms}}
	\label{fig:identitor-distributor-isomorphisms}
\end{figure}

\begin{proof}[Proof (theorem \ref{thm:symmetric-monoidal-double-category})]\label{prf:symmetric-monoidal-double-category}
	Following \cite{shulman10}, it remains to show that various diagrams commute.
	These are shown in figures \ref{fig:symmetric-monoidal-double-category-1}, \ref{fig:symmetric-monoidal-double-category-2}, \ref{fig:symmetric-monoidal-double-category-3} and \ref{fig:symmetric-monoidal-double-category-4}.
	These are all trivial to prove, by checking the $\s$, $\t$ and $\Sigma$ components separately.
\end{proof}

\begin{figure}
	\begin{center} \begin{tikzpicture}[node distance=3cm, auto]
		\node (A) at (0, 0) {$\u (\t (\G_1) \otimes \t (\G_2)) \odot (\G_1 \otimes \G_2)$}; \node (B) at (9, 0) {$(\u (\t (\G_1)) \otimes \u (\t (\G_2))) \odot (\G_1 \otimes \G_2)$};
		\node (C) [below of=A] {$\G_1 \otimes \G_2$}; \node (D) [below of=B] {$(\u (\t (\G_1)) \odot \G_1) \otimes (\u (\t (\G_2)) \odot \G_2)$};
		\draw [->] (A) to node {$\mathfrak U_{\t (\G_1), \t (\G_2)} \odot (\G_1 \otimes \G_2)$} (B); \draw [->] (A) to node {$\mathfrak l_{\G_1 \otimes \G_2}$} (C);
		\draw [->] (B) to node [left] {$\mathfrak X_{\G, \H, \u (\t (\G_1)), \u (\t (\G_2))}$} (D); \draw [->] (D) to node {$\mathfrak l_{\G_1} \otimes \mathfrak l_{\G_2}$} (C);
	\end{tikzpicture} \begin{tikzpicture}[node distance=3cm, auto]
		\node (A) at (0, 0) {$(\G_1 \otimes \G_2) \odot \u (\s (\G_1) \otimes \s (\G_2))$}; \node (B) at (9, 0) {$(\G_1 \otimes \G_2) \odot (\u (\s (\G_1)) \otimes \u (\s (\G_2)))$};
		\node (C) [below of=A] {$\G_1 \otimes \G_2$}; \node (D) [below of=B] {$(\G_1 \odot \u (\s (\G_1))) \otimes (\G_2 \odot \u (\s (\G_2)))$};
		\draw [->] (A) to node {$(\G_1 \otimes \G_2) \odot \mathfrak U_{\s (\G_1), \s (\G_2)}$} (B); \draw [->] (A) to node {$\mathfrak r_{\G_1 \otimes \G_2}$} (C);
		\draw [->] (B) to node [left] {$\mathfrak X_{\u (\s (\G_1)), \u (\s (\G_2)), \G_1, \G_2}$} (D); \draw [->] (D) to node {$\mathfrak r_{\G_1} \otimes \mathfrak r_{\G_2}$} (C);
	\end{tikzpicture} \begin{tikzpicture}[node distance=3cm, auto]
		\node (A) at (0, 0) {$((\I_1 \otimes \I_2) \odot (\H_1 \otimes \H_2)) \odot (\G_1 \otimes \G_2)$};
		\node (B) at (10, 0) {$((\I_1 \odot \H_1) \otimes (\I_2 \odot \H_2)) \odot (\G_1 \otimes \G_2)$};
		\node (C) [below of=A] {$(\I_1 \otimes \I_2) \odot ((\H_1 \otimes \H_2) \odot (\G_1 \otimes \G_2))$};
		\node (D) [below of=B] {$((\I_1 \odot \H_1) \odot \G_1) \otimes ((\I_2 \odot \H_2) \odot \G_2)$};
		\node (E) [below of=C] {$(\I_1 \otimes \I_2) \odot ((\H_1 \odot \G_1) \otimes (\H_2 \odot \G_1))$};
		\node (F) [below of=D] {$(\I_1 \odot (\H_1 \odot \G_1)) \otimes (\I_2 \odot (\H_2 \odot \G_2))$};
		\draw [->] (A) to node {$\mathfrak X_{\H_1, \H_2, \I_1, \I_2} \odot (\G_1 \otimes \G_2)$} (B); \draw [->] (E) to node {$\mathfrak X_{\H_1 \odot \G_1, \H_2 \odot \G_2, \I_1, \I_2}$} (F);
		\draw [->] (A) to node {$\mathfrak a_{\I_1 \otimes \I_2, \H_1 \otimes \H_2, \G_1 \otimes \G_2}$} (C); \draw [->] (C) to node {$(\I_1 \otimes \I_2) \odot \mathfrak X_{\G_1, \G_2, \H_1, \H_2}$} (E);
		\draw [->] (B) to node [left] {$\mathfrak X_{\G_1, \G_2, \I_1 \odot \H_1, \I_2 \odot \H_2}$} (D); \draw [->] (D) to node [left] {$\mathfrak a_{\I_1, \H_1, \G_1} \otimes \mathfrak a_{\I_2, \H_2, \G_2}$} (F);
	\end{tikzpicture} \end{center}
	\caption{Axioms for theorem \ref{thm:symmetric-monoidal-double-category}, part 1}
	\label{fig:symmetric-monoidal-double-category-1}
\end{figure}

\begin{figure}
	\begin{center} \begin{tikzpicture}[node distance=3cm, auto]
		\node (A) at (0, 0) {$\u ((X_1 \otimes X_2) \otimes X_3)$}; \node (B) at (7, 0) {$\u (X_1 \otimes (X_2 \otimes X_3))$};
		\node (C) [below of=A] {$\u (X_1 \otimes X_2) \otimes \mathfrak u (X_3)$}; \node (D) [below of=B] {$\u (X_1) \otimes \u (X_2 \otimes X_3)$};
		\node (E) [below of=C] {$(\u (X_1) \otimes \u (X_2)) \otimes \u (X_3)$}; \node (F) [below of=D] {$\u (X_1) \otimes (\u (X_2) \otimes \u (X_3))$};
		\draw [->] (A) to node {$\u (\alpha_{X_1, X_2, X_3})$} (B); \draw [->] (E) to node {$\alpha_{\u (X_1), \u (X_2), \u (X_3)}$} (F);
		\draw [->] (A) to node {$\mathfrak U_{X_1 \otimes X_2, X_3}$} (C); \draw [->] (C) to node {$\mathfrak U_{X_1, X_2} \otimes \u (X_3)$} (E);
		\draw [->] (B) to node {$\mathfrak U_{X_1, X_2 \otimes X_3}$} (D); \draw [->] (D) to node {$\u (X_1) \otimes \mathfrak U_{X_2, X_3}$} (F);
	\end{tikzpicture} \begin{tikzpicture}[node distance=3cm, auto]
		\node (A) at (0, 0) {$((\H_1 \otimes \H_2) \otimes \H_3) \odot ((\G_1 \otimes \G_2) \otimes \G_3)$};
		\node (B) at (10, 0) {$(\H_1 \otimes (\H_2 \otimes \H_3)) \odot (\G_1 \otimes (\G_2 \otimes \G_3))$};
		\node (C) [below of=A] {$((\H_1 \otimes \H_2) \odot (\G_1 \otimes \G_2)) \otimes (\H_3 \odot \G_3)$};
		\node (D) [below of=B] {$(\H_1 \odot \G_1) \otimes ((\H_2 \otimes \H_3) \odot (\G_2 \otimes \G_3))$};
		\node (E) [below of=C] {$((\H_1 \odot \G_1) \otimes (\H_2 \odot \G_2)) \otimes (\H_3 \odot \G_3)$};
		\node (F) [below of=D] {$(\H_1 \odot \G_1) \otimes ((\H_2 \odot \G_2) \otimes (\H_3 \odot \G_3))$};
		\draw [->] (A) to node {$\alpha_{\H_1, \H_2, \H_3} \odot \alpha_{\G_1, \G_2, \G_3}$} (B); \draw [->] (E) to node {$\alpha_{\H_1 \odot \G_1, \H_2 \odot \G_2, \H_3 \odot \G_3}$} (F);
		\draw [->] (A) to node {$\mathfrak X_{\G_1 \otimes \G_2, \G_3, \H_1 \otimes \H_2, \H_3}$} (C); \draw [->] (C) to node {$\mathfrak X_{\G_1, \G_2, \H_1, \H_2} \otimes (\H_3 \odot \G_3)$} (E);
		\draw [->] (B) to node [left] {$\mathfrak X_{\G_1, \G_2 \otimes \G_3, \H_1, \H_2 \otimes \H_3}$} (D); \draw [->] (D) to node [left] {$(\H_1 \odot \G_1) \otimes \mathfrak X_{\G_2, \G_3, \H_2, \H_3}$} (F);
	\end{tikzpicture} \end{center}
	\caption{Axioms for theorem \ref{thm:symmetric-monoidal-double-category}, part 2}
	\label{fig:symmetric-monoidal-double-category-2}
\end{figure}

\begin{figure}
		\begin{center} \begin{tikzpicture}[node distance=3cm, auto]
		\node (A) {$\u (I \otimes X)$}; \node (B) [right of=A] {$\u (I) \otimes \u (X)$}; \node (C) [below of=B] {$\u (X)$};
		\draw [->] (A) to node {$\mathfrak X_{I, X}$} (B); \draw [->] (B) to node {$\lambda_{\u (X)}$} (C); \draw [->] (A) to node [left] {$\u (\lambda_X)$} (C);
	\end{tikzpicture} \begin{tikzpicture}[node distance=3cm, auto]
		\node (A) at (0, 0) {$(\u (I) \otimes \H) \odot (\u (I) \otimes \G)$}; \node (B) at (6, 0) {$(\u (I) \odot \u (I)) \otimes (\H \odot \G)$};
		\node (C) [below of=A] {$\H \odot \G$}; \node (D) [below of=B] {$\u (I) \otimes (\H \odot \G)$};
		\draw [->] (A) to node {$\mathfrak X_{\u (I), \G, \u (I), \H}$} (B); \draw [->] (A) to node {$\lambda_\H \odot \lambda _\G$} (C);
		\draw [->] (B) to node {$\mathfrak l_{\u (I)} \otimes (\H \odot \G)$} (D); \draw [->] (D) to node {$\lambda_{\H \odot \G}$} (C);
	\end{tikzpicture} \begin{tikzpicture}[node distance=3cm, auto]
		\node (A) {$\u (X \otimes I)$}; \node (B) [right of=A] {$\u (X) \otimes \u (I)$}; \node (C) [below of=B] {$\u (X)$};
		\draw [->] (A) to node {$\mathfrak X_{X, I}$} (B); \draw [->] (B) to node {$\rho_{\u (X)}$} (C); \draw [->] (A) to node [left] {$\u (\rho_X)$} (C);
	\end{tikzpicture} \begin{tikzpicture}[node distance=3cm, auto]
		\node (A) at (0, 0) {$(\H \otimes \u (I)) \odot (\G \otimes \u (I))$}; \node (B) at (6, 0) {$(\H \odot \G) \otimes (\u (I) \odot \u (I))$};
		\node (C) [below of=A] {$\H \odot \G$}; \node (D) [below of=B] {$(\H \odot \G) \otimes \u (I)$};
		\draw [->] (A) to node {$\mathfrak X_{\G, \u (I), \H, \u (I)}$} (B); \draw [->] (A) to node {$\rho_\H \odot \rho_\G$} (C);
		\draw [->] (B) to node {$(\H \odot \G) \otimes \mathfrak r_{\u (I)}$} (D); \draw [->] (D) to node {$\rho_{\H \odot \G}$} (C);
	\end{tikzpicture} \end{center}
	\caption{Axioms for theorem \ref{thm:symmetric-monoidal-double-category}, part 3}
	\label{fig:symmetric-monoidal-double-category-3}
\end{figure}

\begin{figure}
	\begin{center} \begin{tikzpicture}[node distance=3cm, auto]
		\node (A) at (0, 0) {$\u (X_1 \otimes X_2)$}; \node (B) at (4, 0) {$\u (X_1) \otimes \u (X_2)$};
		\node (C) [below of=A] {$\u (X_2 \otimes X_1)$}; \node (D) [below of=B] {$\u (X_2) \otimes \u (X_1)$};
		\draw [->] (A) to node {$\mathfrak U_{X_1, X_2}$} (B); \draw [->] (A) to node {$\u (\sigma_{X_1, X_2})$} (C);
		\draw [->] (B) to node {$\sigma_{\u (X_1), \u (X_2)}$} (D); \draw [->] (C) to node {$\mathfrak U_{X_2, X_1}$} (D);
	\end{tikzpicture} \begin{tikzpicture}[node distance=3cm, auto]
		\node (A) at (0, 0) {$(\H_1 \otimes \H_2) \odot (\G_1 \otimes \G_2)$}; \node (B) at (6, 0) {$(\H_2 \otimes \H_1) \odot (\G_2 \otimes \G_1)$};
		\node (C) [below of=A] {$(\H_1 \odot \G_1) \otimes (\H_2 \odot \G_2)$}; \node (D) [below of=B] {$(\H_2 \odot \G_2) \otimes (\H_1 \odot \G_1)$};
		\draw [->] (A) to node {$\sigma_{\H_1, \H_2} \odot \sigma_{\G_1, \G_2}$} (B); \draw [->] (A) to node {$\mathfrak X_{\G_1, \G_2, \H_1, \H_2}$} (C);
		\draw [->] (B) to node {$\mathfrak X_{\G_2, \G_1, \H_2, \H_1}$} (D); \draw [->] (C) to node {$\sigma_{\H_1 \odot \G_1, \H_2 \odot \G_2}$} (D);
	\end{tikzpicture} \end{center}
	\caption{Axioms for theorem \ref{thm:symmetric-monoidal-double-category}, part 4}
	\label{fig:symmetric-monoidal-double-category-4}
\end{figure}

\begin{proof}[Proof (theorem \ref{thm:normal-form-theorem})]\label{prf:normal-form-theorem}
	We prove by induction on $n$ that the open game
	\[ \bigotimes_{i = 1}^n \D_{1, Y_i} : I \pto \left( \prod_{i = 1}^n Y_i, \R^n \right) \]
	is concretely given, up to unique natural isomorphism, as follows.
	Its set of strategy profiles is
	\[ \Sigma \left( \bigotimes_{i = 1}^n \D_{1, Y_i} \right) = \prod_{i = 1}^n Y_i \]
	The lens
	\[ \left( \bigotimes_{i = 1}^n \D_{1, Y_i} \right) (\sigma) : I \to \left( \prod_{i = 1}^n Y_i, \R^n \right) \]
	is the unique one with
	\[ \V \left( \left( \bigotimes_{i = 1}^n \D_{1, Y_i} \right) (\sigma) \right) = \sigma \]
	The best response relation
	\[ (\sigma, \sigma') \in \B_{\bigotimes_{i = 1}^n \D_{1, Y_i}} (h, k) \]
	holds iff, for all $1 \leq j \leq n$ and all $y_j : Y_j$,
	\[ k (\sigma'_j, \sigma_{-j})_j \geq k (y_j, \sigma_{-j})_j \]
	
	The result follows, since the set of strategy profiles of $\bigotimes_{i = 1}^n \D_{1, Y_i}$ is the set of strategy profiles of the normal form game, and fixpoints of $\B_{\bigotimes_{i = 1}^n \D_{1, Y_i}} (*, k)$ are Nash equilibria by definition.
	
	The set of strategy profiles follows since $\Sigma : \Game_v \to \Set$ is a symmetric monoidal functor.
	For the lens we have
	\[ \V \left( \left( \bigotimes_{i = 1}^n \D_{1, Y_i} \right) (\sigma) \right) = \V \left( \bigotimes_{i = 1}^n \D_{1, Y_i} (\sigma_i) \right) = \prod_{i = 1}^n \V (\D_{1, Y_i} (\sigma_i)) = \prod_{i = 1}^n \sigma_i = \sigma \]
	
	We prove the claim about best responses by induction on $n$.
	When $n = 1$, we have by definition that $(\sigma_1, \sigma'_1) \in \B_{1, Y_1} (*, k)$ iff $k (\sigma'_1) \geq k (y_1)$ for all $y_1 : Y_1$.
	This has the required form because $k (y_1, \sigma_{-1})_1 = k (y_1)$.
	
	For the inductive step, by definition
	\[ (\sigma, \sigma') \in \B_{\bigotimes_{i = 1}^{n + 1} \mathcal D_{1, Y_i}} (*, k) \]
	iff
	\[ (\sigma_{-(n + 1)}, \sigma'_{-(n + 1)}) \in \B_{\bigotimes_{i = 1}^n \mathcal D_{1, Y_i}} (L (\mathcal D_{1, Y_{n + 1}} (\sigma_{n + 1})) (*, k)) \]
	and
	\[ (\sigma_{n + 1}, \sigma'_{n + 1}) \in \B_{\mathcal D_{1, Y_{n + 1}}} \left( R \left( \left( \bigotimes_{i = 1}^n \mathcal D_{1, Y_i} \right) (\sigma_{-(n + 1)}) \right) (*, k) \right) \]
	
	Writing
	\[ L (\mathcal D_{1, Y_{n + 1}} (\sigma_{n + 1})) (*, k) = (*, k_L) \]
	and
	\[ R \left( \left( \bigotimes_{i = 1}^n \mathcal D_{1, Y_i} \right) (\sigma_{-(n + 1)}) \right) (*, k) = (*, k_R) \]
	we can directly calculate
	\[ k_L (y_{-(n + 1)}) = k (y_{-(n + 1)}, \sigma_{n + 1})_{-(n + 1)} \]
	and
	\[ k_R (y_{n + 1}) = k (\sigma_{-(n + 1)}, y_{n + 1})_{n + 1} \]
	
	By the inductive hypothesis, the first condition
	\[ (\sigma_{-(n + 1)}, \sigma'_{-(n + 1)}) \in \B_{\bigotimes_{i = 1}^n \D_{1, Y_i}} (*, k_L) \]
	holds iff for all $1 \leq j \leq n$ and all $y_j : Y_j$,
	\[ k_L  ((\sigma'_{-(n + 1)})_j, (\sigma_{-(n + 1)})_{-j})_j \geq k_L (y_j, (\sigma_{-(n + 1)})_{-j})_j \]
	After substituting $k_L$ this is equivalent to
	\[ (k ((\sigma'_{-(n + 1)})_j, (\sigma_{-(n + 1)})_{-j}, \sigma_{n + 1})_{-(n + 1)})_j 	\geq (k (y_j, (\sigma_{-(n + 1)})_{-j}, \sigma_{n + 1})_{-(n + 1)})_j \]
	and hence, after composing projectors, to
	\[ k (\sigma'_j, \sigma_{-j})_j \geq k (y_j, \sigma_{-j})_j \]
	
	The second condition
	\[ (\sigma_{n + 1}, \sigma'_{n + 1}) \in \B_{\D_{1, Y_{n + 1}}} (*, k_R) \]
	holds by definition iff for all $y_{n + 1} : Y_{n + 1}$,
	\[ k_R (\sigma'_{n + 1}) \geq k_R (y_{n + 1}) \]
	which is
	\[ k (\sigma_{-(n + 1)}, \sigma'_{n + 1})_{n + 1} \geq k (\sigma_{-(n + 1)}, y_{n + 1})_{n + 1} \]
	
	Putting these two conditions together, we have the inductive hypothesis for $n + 1$.
\end{proof}

\begin{proof}[Proof (theorem \ref{thm:extensive-form-theorem})]\label{prf:extensive-form-theorem}
	Let
	\[ \G_{X_1, \ldots, X_n} = \bigodot_{i = 1}^n \D^\Delta_{X_1, \ldots, X_i} : I \pto \left( \prod_{i = 1}^n X_i, \R^n \right) \]
	We prove that $\G_{X_1, \ldots, X_n}$ is given explicitly as follows, up to globular isomorphism.
	The set of strategy profiles is
	\[ \Sigma (\G_{X_1, \ldots, X_n}) = \prod_{i = 1}^n \left( \prod_{j = 1}^{i - 1} X_j \to X_i \right) \]
	For $\sigma : \Sigma (\G_{X_1, \ldots, X_n})$, the lens $\G_{X_1, \ldots, X_n} (\sigma)$ is the unique one with
	\[ \V (\G_{X_1, \ldots, X_n} (\sigma)) (*) = v^\sigma \]
	For $k : \prod_{i = 1}^n X_i \to \R^n$, the best response relation
	\[ (\sigma, \sigma') \in \B_{\G_{X_1, \ldots, X_n}} (*, k) \]
	holds iff for all $1 \leq i \leq n$ and all $x_i : X_i$,
	\[ k \left( v^\sigma_{(v^\sigma)_1^{i - 1}, \sigma'_i ((v^\sigma)_1^{i - 1})} \right)_i \geq k \left( v^\sigma_{(v^\sigma)_1^{i - 1}, x_i} \right)_i \]

	We prove the first claim by induction on $n$.
	It is straightforward to check the base case $\G_{X_1} = \D^\Delta_{X_1} : I \pto (X_1, \R)$.
	
	The inductive step is
	\[ \G_{X_1, \ldots, X_n, X_{n + 1}} = \D^\Delta_{X_1, \ldots, X_n, X_{n + 1}} \odot \G_{X_1, \ldots, X_n} \]
	Its set of strategies is
	\begin{align*}
		\Sigma (\G_{X_1, \ldots, X_n, X_{n + 1}}) &= \Sigma (\G_{X_1, \ldots, X_n}) \times \Sigma (\D^\Delta_{X_1, \ldots, X_n, X_{n + 1}}) \\
		&= \prod_{i = 1}^n \left( \prod_{j = 1}^{i - 1} X_j \to Xi \right) \times \left( \prod_{j = 1}^n X_j \to X_{n + 1} \right) \\
		&= \prod_{i = 1}^{n + 1} \left( \prod_{j = 1}^{i - 1} X_j \to X_i \right)
	\end{align*}
	The lens is
	\begin{align*}
		\V (\G_{X_1, \ldots, X_n, X_{n + 1}} (\sigma)) (*) &= (\V (\D^\Delta_{X_1, \ldots, X_n, X_{n + 1}} (\sigma_{n + 1})) \circ \V (\G_{X_1, \ldots, X_n} (\sigma_{-(n + 1)}))) (*) \\
		&= \V (\D^\Delta_{X_1, \ldots, X_n, X_{n + 1}} (\sigma_{n + 1})) (v^{\sigma_{-(n + 1)}}) \\
		&= (v^{\sigma_{-(n + 1)}}, \sigma_{n + 1} (v^{\sigma_{-(n + 1)}})) \\
		&= v^\sigma
	\end{align*}
	
	For $k : \prod_{i = 1}^{n + 1} X_i \to \R^{n + 1}$, the best response relation
	\[ (\sigma, \sigma') \in \B_{\G_{X_1, \ldots, X_n, X_{n + 1}}} (*, k) \]
	holds iff
	\begin{equation}\label{eqn:condition-1}
		(\sigma_{-(n + 1)}, \sigma'_{-(n + 1)}) \in \B_{\G_{X_1, \ldots, X_n}} (*, \K (\D^\Delta_{X_1, \ldots, X_n, X_{n + 1}} (\sigma_{n + 1})) (k))
	\end{equation}
	and
	\begin{equation}\label{eqn:condition-2}
		(\sigma_{n + 1}, \sigma'_{n + 1}) \in \B_{\D^\Delta_{X_1, \ldots, X_n, X_{n + 1}}} (\V (\G_{X_1, \ldots, X_n} (\sigma_{-(n + 1)})) (*), k)
	\end{equation}
	
	In condition \ref{eqn:condition-1}, the continuation is
	\[ \K (\D^\Delta_{X_1, \ldots, X_n, X_{n + 1}} (\sigma_{n + 1})) (k) : \prod_{i = 1}^n X_i \to \R^n \]
	\[ x \mapsto u_{\D^\Delta_{X_1, \ldots, X_n, X_{n + 1}} (\sigma_{n + 1})} (x, k (v_{\D^\Delta_{X_1, \ldots, X_n, X_{n + 1}} (\sigma_{n + 1})} (x))) = k (x, \sigma_{n + 1} (x))_{-(n + 1)} \]
	By the inductive hypothesis, this relation holds iff for all $1 \leq i \leq n$ and $x_i : X_i$,
	\[ \K (\D^\Delta_{X_1, \ldots, X_n, X_{n + 1}} (\sigma_{n + 1})) (k) \left( v^\sigma_{(v^\sigma)_1^{i - 1}, \sigma'_i ((v^\sigma)_1^{i - 1})} \right)_i \geq \K (\D^\Delta_{X_1, \ldots, X_n, X_{n + 1}} (\sigma_{n + 1})) (k) \left( v^\sigma_{(v^\sigma)_1^{i - 1}, x_i} \right) \]
	The left hand side of this inequation is
	\begin{align*}
		\K (\D^\Delta_{X_1, \ldots, X_n, X_{n + 1}} (\sigma_{n + 1})) (k) \left( v^\sigma_{(v^\sigma)_1^{i - 1}, \sigma'_i ((v^\sigma)_1^{i - 1})} \right)_i &= \left( k \left( v^\sigma_{(v^\sigma)_1^{i - 1}, \sigma'_i ((v^\sigma)_1^{i - 1})}, \sigma_{n + 1} \left( v^\sigma_{(v^\sigma)_1^{i - 1}, \sigma'_i ((v^\sigma)_1^{i - 1})} \right) \right)_{-(n + 1)} \right)_i \\
		&= k \left( v^\sigma_{(v^\sigma)_1^{i - 1}, \sigma'_i ((v^\sigma)_1^{i - 1})} \right)_i
	\end{align*}
	and similarly the right hand side is
	\[ \K (\D^\Delta_{X_1, \ldots, X_n, X_{n + 1}} (\sigma_{n + 1})) (k) \left( v^\sigma_{(v^\sigma)_1^{i - 1}, x_i} \right) = k \left( v^\sigma_{(v^\sigma)_1^{i - 1}, x_i} \right)_i \]
	
	In condition \ref{eqn:condition-2}, by the inductive hypothesis the history is
	\[ \V (\G_{X_1, \ldots, X_n} (\sigma_{-(n + 1)})) (*) = v^{\sigma_{-(n + 1)}} \]
	By definition of $\D^\Delta_{X_1, \ldots, X_n, X_{n + 1}}$, condition \ref{eqn:condition-2} holds iff
	\[ k (v^{\sigma_{-(n + 1)}}, \sigma'_{n + 1} (v^{\sigma_{-(n + 1)}}))_{n + 1} \geq k (v^{\sigma_{-(n + 1)}}, x_{n + 1})_{n + 1} \]
	for all $x_{n + 1} : X_{n + 1}$.
	The left hand side of this inequation is
	\[ k (v^{\sigma_{-(n + 1)}}, \sigma'_{n + 1} (v^{\sigma_{-(n + 1)}}))_{n + 1} = k (v^\sigma_{(v^\sigma)_1^n, \sigma'_{n + 1} ((v^\sigma)_1^n)})_{n + 1} \]
	and similarly the right hand side is
	\[ k (v^{\sigma_{-(n + 1)}}, x_{n + 1})_{n + 1} = k (v^\sigma_{(v^\sigma)_1^n, x_{n + 1}})_{n + 1} \]
	
	Putting together conditions \ref{eqn:condition-1} and \ref{eqn:condition-2}, we obtain the inductive hypothesis for $n + 1$.
	
	Next we prove by induction on $i$ that there is a bijective correspondence between:
	\begin{itemize}
		\item States $\alpha$ of
		\[ \bigodot_{j = n - i + 1}^n \D^\Delta_{X_1, \ldots, X_j} : \left( \prod_{j = 1}^{n - i} X_j, \R^{n - i} \right) \pto \left( \prod_{j = 1}^n X_j, \R^n \right) \]
		that are of the form
		\[ \alpha = \bigodot_{j = n - i + 1}^n \alpha_j \]
		where each $\alpha_j$ is a state of $\D^\Delta_{X_1, \ldots, X_j}$
		
		\item Strategy profiles
		\[ \sigma_{n - i + 1}^n : \prod_{j = n - i + 1}^n \left( \prod_{l = 1}^{j - 1} X_l \to X_j \right) \]
		with the property that for all $j \geq n - i + 1$, all subgames $x : \prod_{l = 1}^{j - 1} X_l$ and all deviations $x_j : X_j$,
		\[ k (v^\sigma_{x, \sigma_j (x)})_j \geq k (v^\sigma_{x, x_j})_j \]
	\end{itemize}

	In the base case $i = 1$, we have immediately that states $\alpha$ of $\D^\Delta_{X_1, \ldots, X_n}$ over $k$ are in bijection with strategies $\sigma_n : \prod_{l = 1}^{n - 1} X_l \to X_n$ with the property that for all subgames $x : \prod_{j = 1}^{n - 1} X_j$ and all deviations $x_n : X_n$,
	\[ k (v^\sigma_{x, \sigma_n (x)})_n = k (x, \sigma_n (x))_n \geq k (x, x_n)_n = k (v^\sigma_{x, x_n})_n \]
	
	For the inductive step, a $\odot$-separable state $\alpha$ of $\bigodot_{j = n - i}^n \D^\Delta_{X_1, \ldots, X_j}$ is of the form $\alpha = \alpha' \odot \alpha_{n - i}$, where $\alpha_{n - i}$ is a state of $\D^\Delta_{X_1, \ldots, X_{n - i}}$ and $\alpha'$ is a $\odot$-separable state of $\bigodot_{j = n - i + 1}^n \D^\Delta_{X_1, \ldots, X_j}$.
	This situation is depicted in figure \ref{fig:extensive-form-lemma}.
	By the inductive hypothesis, the latter are in bijection with strategy profiles
	\[ \sigma : \prod_{j = n - i + 1}^n \left( \prod_{l = 1}^{j - 1} X_l \to X_j \right) \]
	with the property that for all $j \geq n - i + 1$, all subgames $x : \prod_{l = 1}^{j - 1} X_l$ and all deviations $x_j : X_j$,
	\[ k (v^\sigma_{x, \sigma_j (x)})_j \geq k (v^\sigma_{x, x_j})_j \]
	
	The state $\alpha_{n - i}$ is over the continuation
	\[ \K \left( \left( \bigodot_{j = n - i + 1}^n \D^\Delta_{X_1, \ldots, X_j} \right) (\sigma) \right) (k) : \prod_{j = 1}^{n - i} X_j \to \R^{n - i} \]
	We call this continuation $k'$.
	It is given by
	\[ k' (x) = \left( k \left( v^{\sigma'}_x \right) \right)_1^{n - i} \]
	Thus states $\alpha_{n - i}$ with the property that the composition $\alpha' \odot \alpha_{n - i}$ is well-defined are in bijection with strategies
	\[ \sigma_{n - i} : \Sigma \left( \D^\Delta_{X_1, \ldots, X_{n - i}} \right) = \prod_{j = 1}^{n - i - 1} X_j \to X_{n - i} \]
	with the property that for all histories
	\[ h : \prod_{j = 1}^{n - i - 1} X_j \cong \V \left( \prod_{j = 1}^{n - i - 1} X_j, \R^{n - i - 1} \right) \]
	and all deviations $x_{n - i} : X_{n - i}$,
	\[  k' (h, \sigma_{n - i} (h))_{n - i} \geq k' (h, x_{n - i})_{n - i} \]
	that is to say
	\[ k \left( v^{\sigma'}_{h, \sigma_{n - i} (h)} \right)_{n - i} \geq k \left( v^{\sigma'}_{h, x_{n - i}} \right)_{n - i} \]
	
	Putting these together gives equivalence to the inductive hypothesis for $i + 1$.
	
	The second claim of the theorem follows by taking $i = n$.
\end{proof}

\begin{figure}
	\begin{center} \begin{tikzpicture}[node distance=6cm, auto]
		\node (A) {$I$}; \node (B) [right of=A] {$I$}; \node (C) [right of=B] {$I$};
		\node (D) at (0, -3) {$\displaystyle \left( \prod_{j = 1}^{n - i - 1} X_j, \R^{n - i - 1} \right)$};
		\node (E) [right of=D] {$\displaystyle \left( \prod_{j = 1}^{n - i} X_j, \R^{n - i} \right)$};
		\node (F) [right of=E] {$\displaystyle \left( \prod_{j = 1}^n X_j, \R^n \right)$};
		\draw [-|->] (A) to node [above] {$\u (I)$} node [below] {$1$} (B); \draw [-|->] (B) to node [above] {$\u (I)$} node [below] {$1$} (C);
		\draw [-|->] (D) to node [above] {$\D^\Delta_{X_1, \ldots, X_{n - i}}$} node [below] {$\displaystyle \prod_{j = 1}^{n - i - 1} X_j \to X_{n - i}$} (E);
		\draw [-|->] (E) to node [above] {$\displaystyle \bigodot_{j = n - i + 1}^n \D^\Delta_{X_1, \ldots, X_j}$} node [below] {$\displaystyle \prod_{j = n - i + 1}^n \left( \prod_{l = 1}^{j - 1} X_l \to X_j \right)$} (F);
		\draw [->] (F) to node [right] {$k$} (C); \draw [->] (E) to node [right] {$k'$} (B); \draw [->] (D) to (A);
		\draw [->] (3, -.5) to node [right] {$\sigma_{n - i}$} (3, -2.3);
		\draw [->] (9, -.5) to node [right] {$\sigma'$} (9, -1.7);
	\end{tikzpicture} \end{center}
	\caption{Inductive step of theorem \ref{thm:extensive-form-theorem}}
	\label{fig:extensive-form-lemma}
\end{figure}

\begin{proof}[Proof (proposition \ref{prop:product-games})]\label{prf:product-games}
	We first prove that the projections $\pi_j$ satisfy the axioms of a morphism of open games.
	For a strategy profile $\sigma : \prod_{i : I} \Sigma (\G_i)$ the diagram
	\begin{center} \begin{tikzpicture}[node distance=3cm, auto]
		\node (A) at (0, 0) {$\displaystyle \left( \coprod_{i : I} X_i, S \right)$}; \node (B) at (5, 0) {$\displaystyle \left( \coprod_{i : I} Y_i, R \right)$};
		\node (C) [below of=A] {$(X_j, S)$}; \node (D) [below of=B] {$(Y_j, R)$};
		\draw [->] (A) to node {$\coprod_{i : I} \G_i (\sigma_i)$} (B); \draw [->] (C) to node {$\G_j (\sigma_j)$} (D);
		\draw [->] (C) to node [right] {$(\iota_j, S)$} (A); \draw [->] (D) to node [right] {$(\iota_j, R)$} (B);
	\end{tikzpicture} \end{center}
	commutes by the universal property of the coproduct in $\Lens$.
	The second axiom of morphisms holds directly by definition.
	
	Suppose we have an open game $\H : \Phi \pto \Psi$ and a family of morphisms $\alpha_i : \H \pto \G_i$.
	We have unique choices for the $\s$, $\t$ and $\Sigma$-components of the universal morphism, namely
	\begin{center} \begin{tikzpicture}[node distance=3cm, auto]
		\node (A) at (0, 0) {$\Phi$}; \node (B) at (5, 0) {$\Psi$};
		\node (C) [below of=A] {$\displaystyle \left( \coprod_{i : I} X_i, S \right)$}; \node (D) [below of=B] {$\displaystyle \left( \coprod_{i : I} Y_i, R \right)$};
		\draw [-|->] (A) to node [above] {$\H$} node [below] {$\Sigma (\H)$} (B);
		\draw [-|->] (C) to node [above] {$\displaystyle \prod_{i : I} \G_i$} node [below] {$\displaystyle \prod_{i : I} \Sigma (\G_i)$} (D);
		\draw [->] (C) to node [right] {$[\s (\alpha_i)]_{i : I}$} (A); \draw [->] (D) to node [right] {$[\t (\alpha_i)]_{i : I}$} (B);
		\draw [->] (2.5, -.5) to node {$\left< \Sigma (\alpha_i) \right>_{i : I}$} (2.5, -2);
	\end{tikzpicture} \end{center}
	It suffices to prove that this does indeed define a morphism of open games.
	For a strategy profile $\sigma : \Sigma (\H)$, commutativity of
	\begin{center} \begin{tikzpicture}[node distance=3cm, auto]
		\node (A) at (0, 0) {$\Phi$}; \node (B) at (5, 0) {$\Psi$};
		\node (C) [below of=A] {$\displaystyle \left( \coprod_{i : I} X_i, S \right)$}; \node (D) [below of=B] {$\displaystyle \left( \coprod_{i : I} Y_i, R \right)$};
		\draw [->] (A) to node {$\H (\sigma)$} (B); 	\draw [->] (C) to node {$\displaystyle \coprod_{i : I} \G_i (\Sigma (\alpha_i) (\sigma))$} (D);
		\draw [->] (C) to node [right] {$[\s (\alpha_i)]_{i : I}$} (A); \draw [->] (D) to node [right] {$[\t (\alpha_i)]_{i : I}$} (B);
	\end{tikzpicture} \end{center}
	follows from the individual $\alpha_i$ being morphisms.
	
	For the second axiom, let $\sigma, \sigma' : \Sigma (\H)$, $\iota_j (h) : \coprod_{i : I} X_i$ and $k : \K (\Psi)$.
	Suppose that
	\[ (\sigma, \sigma') \in \B_\H (\V (\s (\alpha_j)) (h), k) \]
	Since
	\[ \V ([\s (\alpha_i)]_{i : I}) (\iota_j (h)) = \V (\s (\alpha_j)) (h) \]
	by the second axiom of $\alpha_j$ we have
	\[ (\Sigma (\alpha_j) (\sigma), \Sigma (\alpha_j) (\sigma')) \in \B_{\G_j} (h, \K (\t (\alpha_j)) (k)) \]
	Since the diagram
	\begin{center} \begin{tikzpicture}[node distance=3cm, auto]
		\node (A) at (0, 0) {$Y_j$}; \node (B) at (5, 0) {$R$}; \node (C) [below of=A] {$\displaystyle \coprod_{i : I} Y_i$};
		\draw [->] (A) to node {$\K (\t (\alpha_j)) (k)$} (B); \draw [->] (A) to node {$\iota_j$} (C);
		\draw [->] (C) to node [right=10pt] {$\K ([\t (\alpha_i)]_{i : I}) (k)$} (B);
	\end{tikzpicture} \end{center}
	commutes, it follows that
	\[ (\left< \Sigma (\alpha_i) \right>_{i : I} (\sigma), \left< \Sigma (\alpha_i) \right>_{i : I} (\sigma')) \in \B_{\prod_{i : I} \G_i} (\iota_j (h), \K ([\t (\alpha_i)]_{i : I}) (k)) \]
	as required.
\end{proof}

\end{document}